\newcommand{\email}[1]{\url{#1}}
\theoremstyle{plain}
\newtheorem{theorem}{Theorem}[section]
\newtheorem{corollary}[theorem]{Corollary}
\newtheorem{lemma}[theorem]{Lemma}
\newtheorem*{claim*}{Claim}
\newtheorem{proposition}[theorem]{Proposition}
\theoremstyle{definition}      % Definition with upshape body instead of italics   
\newtheorem{definition}[theorem]{Definition}
\newtheorem{example}[theorem]{Example}
\title{Automated Complexity Analysis Based on the Dependency Pair Method%
\thanks{This research is partly supported by FWF (Austrian Science Fund) project P20133, the Grant-in-Aid for Young Scientists Nos.~20800022 and~22700009 of
the Japan Society for the Promotion of Science, and Leading Project e-Society (MEXT of Japan), and STARC.}}
\author{Nao Hirokawa\\
School of Information Science,\\
Japan Advanced Institute of Science and Technology, Japan,\\
\email{hirokawa@jaist.ac.jp}
\and Georg Moser\\
Institute of Computer Science,\\
University of Innsbruck, Austria\\
\email{georg.moser@uibk.ac.at}
}
\date{June 2011}
\begin{document}

\maketitle

\begin{abstract}
This article is concerned with automated complexity analysis of
term rewrite systems.
Since these systems underlie much of 
declarative programming, time complexity of functions defined by rewrite systems
is of particular interest. Among other results, 
we present a variant of the dependency pair method for
analysing runtime complexities of term rewrite systems automatically. 
The established results significantly extent previously known techniques: 
we give examples of rewrite systems
subject to our methods that could previously not been analysed
automatically. Furthermore, the techniques have been implemented
in the Tyrolean Complexity Tool. 
We provide ample numerical data for assessing the viability of the method.

\medskip
\noindent
\emph{Key words}: Term rewriting, Termination, Complexity Analysis, Automation,
Dependency Pair Method
\end{abstract}

\newpage
\tableofcontents

\section{Introduction} \label{Introduction}

This article is concerned with automated complexity analysis of
term rewrite systems (TRSs for short).  Since these systems underlie much of 
declarative programming, time complexity of functions defined by TRSs
is of particular interest.

Several notions to assess the complexity of a terminating TRS
have been proposed in the literature, compare~\cite{CKS:1989,HofbauerLautemann:1989,CL:1992,HM:2008}.
The conceptually simplest one was suggested by Hofbauer and Lautemann in~\cite{HofbauerLautemann:1989}: 
the complexity of a given TRS is measured as the maximal length of derivation sequences. 
More precisely, the \emph{derivational complexity function} with respect to a terminating TRS 
relates the maximal derivation height to the size of the initial term.
However, when analysing complexity of a function, it is natural to refine
derivational complexity so that only terms whose arguments are constructor terms
are employed. Conclusively the \emph{runtime complexity function} with respect to a TRS 
relates the length of the longest derivation sequence to the size
of the initial term, where the arguments are supposed 
to be in normal form. This terminology was suggested in~\cite{HM:2008}. 
A related notion has been studied in~\cite{CKS:1989}, 
where it is augmented by an \emph{average case}
analysis. Finally~\cite{CL:1992} studies the complexity of the functions \emph{computed}
by a given TRS. This latter notion is extensively studied within 
\emph{implicit computational complexity theory} (\emph{ICC} for short), see~\cite{BMR:2009} for an overview.
A conceptual difference from runtime complexity is that polynomial 
computability addresses the number of steps by means of (deterministic) 
Turing machines, while runtime complexity measures the number of rewrite 
steps which is closely related to operational semantics of programs.
For instance, a statement like a quadratic complexity of sort 
algorithm is in the latter sense.

This article presents methods for (over-)estimating runtime complexity 
automatically. We establish the following results:
\begin{enumerate}
\item We extend the applicability of direct techniques for complexity results
by showing how the monotonicity constraints can be significantly weakened
through the employ of \emph{usable replacement maps}.

\item We revisit the \emph{dependency pair method} in the context of complexity
analysis. The dependency pair method is originally developed for proving 
termination~\cite{ArtsGiesl:2000}, and known as one of the most 
successful methods in automated termination analysis.  

\item We introduce the \emph{weight gap principle} which allows the 
estimation of the complexity of a TRS in a modular way.

\item We revisit 
the dependency graph analysis 
of the dependency pair method in the context of complexity
analysis.  For that we introduce a suitable notion of \emph{path analysis}
that allows to modularise complexity analysis further.
\end{enumerate}
Note that while we have taken seminal ideas from termination 
analysis as starting points, often the underlying principles 
are crucially different from those used in termination 
analysis.

A preliminary version of this article appeared in~\cite{HM:2008,HM:2008b}. Apart
from the correction of some shortcomings, we extend our earlier work in the
following way:  First, all results on usable replacement maps are new
(see Section~\ref{CSR}). Second, the side condition for the weight gap 
principle~\cite[Theorem~24]{HM:2008} is corrected in
Section~\ref{semantical gap}.  Thirdly, the weight gap principle 
is extended by exploiting the initial term conditions and is 
generalised by means of matrix interpretations (see Section~\ref{semantical gap}). 
Finally, the applicability of the path analysis is strengthened in
comparison to the conference version~\cite{HM:2008b} (see Section~\ref{DG}).

The remainder of this article is organised as follows.
In the next section we recall basic notions.
We define runtime complexity and a subclass of matrix 
interpretations for its analysis
in Section~\ref{Runtime Complexity}.
In Section~\ref{CSR} we relate context-sensitive rewriting to 
runtime complexity.  In the next sections several ingredients in
the dependency pair method are recapitulated for complexity analysis:
dependency pairs and usable rules (Section~\ref{dependency pairs}),
reduction pairs via the weight gap principle (Section~\ref{semantical gap}), 
and dependency graphs (Section~\ref{DG}).
In order to access viability of the presented techniques
all techniques have been implemented in the \emph{Tyrolean Complexity Tool}%
\footnote{\url{http://cl-informatik.uibk.ac.at/software/tct/}.}
(\TCT\ for short) and its empirical data is provided in Section~\ref{Experiments}.
Finally we conclude the article by mentioning related works
in Section~\ref{Conclusion}.

\section{Preliminaries} \label{Preliminaries}

We assume familiarity with term rewriting~\cite{BaaderNipkow:1998,Terese} but 
briefly review basic concepts and notations from term rewriting, 
relative rewriting, and context-sensitive rewriting. 
Moreover, we recall matrix interpretations.

\subsection{Rewriting}

Let $\VS$ denote a countably infinite set of variables and $\FS$ a 
signature,
such that $\FS$ contains at least one constant. 
The set of terms over $\FS$ and $\VS$ is denoted by 
$\TERMS$. The \emph{root symbol} of a term $t$, denoted as $\rt(t)$, 
is either $t$ itself, if $t \in \VS$, or the symbol $f$, if $t = f(\seq{t})$. 
The \emph{set of position} $\Pos(t)$ of a term $t$ is defined as
usual. We write $\Pos_{\GG}(t) \subseteq \Pos(t)$ for the set of
positions of subterms, whose root symbol is contained in $\GG \subseteq \FS$.
The subterm of $t$ at position $p$ is denoted as $\atpos{t}{p}$, 
and $t[u]_p$ denotes the term that is obtained from $t$ by replacing
the subterm at $p$ by $u$.
The subterm relation is denoted as~$\subterm$. 
$\Var(t)$ denotes the set of variables occurring in a term $t$.
The \emph{size} $\size{t}$ of a term is defined 
as the number of symbols in~$t$:
\begin{equation*}
  \size{t} \defsym
  \begin{cases}
    1 & \text{if $t$ is a variable} \tkom\\
    1+ \sum_{1 \leqslant i \leqslant n} \size{t_i} & \text{if $t=f(t_1,\dots,t_n)$} \tpkt
  \end{cases}
\end{equation*}

A \emph{term rewrite system} (\emph{TRS}) $\RS$ over
$\TERMS$ is a \emph{finite} set of rewrite
rules $l \to r$, such that $l \notin \VS$ and $\Var(l) \supseteq \Var(r)$.
The smallest rewrite relation that contains $\RS$ is denoted by
$\to_{\RS}$. 
The transitive closure of $\to_{\RS}$ is denoted by $\rstrew{\RS}$, and
its transitive and reflexive closure by $\rssrew{\RS}$.
We simply write $\to$ for $\to_{\RS}$ if $\RS$ is clear from context.
Let $s$ and $t$ be terms. If exactly $n$ steps are performed to rewrite $s$
to $t$ we write $s \to^n t$.
Sometimes a derivation $s = s_0 \to s_1 \to \cdots \to s_n = t$ is denoted as 
$A \colon s \rss t$ and its length $n$ is referred to as $\card{A}$.
A term $s \in \TERMS$ is called a \emph{normal form} if there is no
$t \in \TERMS$ such that $s \to t$. 
With $\NF(\RS)$ we denote the set of all normal forms of a term rewrite
system $\RS$.
The \emph{innermost rewrite relation} $\irew{\RS}$
of a TRS $\RS$ is defined on terms as follows: $s \irew{\RS} t$ if 
there exist a rewrite rule $l \to r \in \RS$, a context $C$, and
a substitution $\sigma$ such that $s = C[l\sigma]$, $t = C[r\sigma]$,
and all proper subterms of $l\sigma$ are normal forms of $\RS$.  
\emph{Defined symbols} of $\RS$ are symbols appearing at root in left-hand 
sides of $\RS$.
The set of defined function symbols is denoted as $\DS$, while the 
\emph{constructor symbols} $\FF \setminus \DD$ are collected in $\CS$.
We call a term $t = f(\seq{t})$ \emph{basic} or \emph{constructor based} 
if $f \in \DS$ and $t_i \in \TA(\CS,\VS)$
for all $1 \leqslant i \leqslant n$.
The set of all basic terms are denoted by $\TB$.
A TRS $\RS$ is called \emph{duplicating} if there exists a rule 
$l \to r \in \RS$ such that a variable occurs more often in $r$ than in $l$. 
We call a TRS \emph{(innermost) terminating} if no infinite 
(innermost) rewrite sequence exists. 

We recall the notion of \emph{relative rewriting}, 
cf.~\cite{Geser:1990,Terese}.
Let $\RS$ and $\SS$ be TRSs.
The relative TRS $\RS/\SS$ is the pair $(\RS, \SS)$.
We define 
${s \rsrew{\RS / \RSS} t} \defsym 
{s \rssrew{\RSS} \cdot \rsrew{\RS} \cdot \rssrew{\RSS} t}$
and we call $\rsrew{\RS /\RSS}$ the 
\emph{relative rewrite relation} of $\RS$ over $\RSS$.
Note that ${\rsrew{\RS / \RSS}} = {\rsrew{\RS}}$, 
if $\SS = \varnothing$. 
$\RS / \RSS$ is called \emph{terminating} if $\rsrew{\RS / \RSS}$ is well-founded.
In order to generalise the innermost rewriting relation
to relative rewriting, we introduce the slightly technical
construction of the \emph{restricted} rewrite relation,
compare~\cite{T07}.
The \emph{restricted rewrite relation $\toss{\QS}_{\RS}$}
is the restriction of $\rsrew{\RS}$ where 
all arguments of the redex are in
normal form with respect to the TRS $\QS$.
We define the \emph{innermost relative rewriting relation} 
(denoted as $\irew{\RS/\RSS}$) as
follows:
\begin{equation*}
{\irew{\RS/\RSS}} \defsym
  {{\toss{\RS \cup \RSS}_{\RSS}^{\ast}} \cdot
  {\toss{\RS \cup \RSS}_{\RS}} \cdot
  {\toss{\RS \cup \RSS}_{\RSS}^{\ast}}} \tkom
\end{equation*}

We briefly recall context-sensitive rewriting.  A replacement map $\mu$ is
a function with $\mu(f) \subseteq \{1,\ldots, n\}$ for all $n$-ary functions
with $n \geqslant 1$.
The set $\Pos_\mu(t)$ of \emph{$\mu$-replacing positions} in $t$ is defined as follows:
\begin{equation*}
\Pos_\mu(t) =
\begin{cases}
\{ \epsilon \} & \text{if $t$ is a variable} \tkom \\
\{ \epsilon \} \cup \{ ip \mid \text{$i \in \mu(f)$ and 
$p \in \Pos_\mu(t_i)$} \}
& \text{if $t = f(\seq{t})$} \tpkt
\end{cases}
\end{equation*}
A \emph{$\mu$-step} $s \muto{\mu} t$ is a rewrite step $s \to t$ whose rewrite 
position is in $\Pos_\mu(s)$.
The set of all non-$\mu$-replacing positions in $t$ is denoted by
$\NPos_\mu(t)$; namely, $\NPos_\mu(t) \defsym \Pos(t) \setminus \Pos_\mu(t)$.

\subsection{Matrix Interpretations}

One of the most powerful and popular techniques for analysing derivational 
complexities is use of orders induced from matrix interpretations~\cite{EWZ08}.
In order to define it first we define (weakly) monotone algebras.

A \emph{proper order} is a transitive and irreflexive relation and
a \emph{preorder} 
(or \emph{quasi-order})
is a transitive and reflexive relation. 
A proper order $\succ$ is \emph{well-founded} if there is 
no infinite decreasing sequence $t_1 \succ t_2 \succ t_3 \cdots$.
We say a proper order $\succ$ and a TRS $\RS$ are \emph{compatible}
if $\RS \subseteq {\succ}$. 

An $\FS$-\emph{algebra} $\A$ consists of a carrier set $A$ and a collection
of interpretations $f_\A$ for each function symbol in $\FS$. 
By $\eval{\alpha}{\A}(\cdot)$ we denote the usual evaluation function
of $\A$ according to an assignment $\alpha$ which maps variables 
to values in~$A$.
A \emph{monotone $\FS$-algebra} is a pair $(\A,\succ)$
where $\A$ is an $\FS$-algebra and $\succ$ is a proper
order such that for every function symbol $f\in\FS$, $f_\A$
is strictly monotone in all coordinates with respect to $\succ$.
A \emph{weakly monotone $\FS$-algebra} $(\A,\succcurlyeq)$ 
is defined similarly, but for every function symbol $f\in\FS$, 
it suffices that $f_\A$ is weakly monotone in all coordinates 
(with respect to the quasi-order $\succcurlyeq$). 
A monotone $\FS$-algebra $(\A,\succ)$ is called \emph{well-founded} 
if $\succ$ is well-founded. We write \emph{WMA} instead of
well-founded monotone algebra. 

Any (weakly) monotone $\FS$-algebra $(\A,\R)$ induces a binary relation
$\R_\A$ on terms: define $s \R_\A t$ if 
$\eval{\alpha}{\A}(s) \R \eval{\alpha}{\A}(t)$ for all assignments $\alpha$. 
Clearly if $\R$ is a proper order (quasi-order), then $\R_\A$ is a proper order 
(quasi-order) on terms and if $\R$ is a well-founded, then 
$\R_\A$ is well-founded on terms. 
We say $\A$ is \emph{compatible} with a TRS $\RS$ if 
${\RS} \subseteq {\R_\A}$.  
Let $\geqord{\A}$ denote the quasi-order induced by a weakly monotone
algebra $(\A,\succcurlyeq)$, then $\eqord{\A}$ denotes the equivalence (on
terms) induced by $\geqord{\A}$. 
Let $\mu$ denote a replacement map. Then we call a well-founded algebra
$(\A,\succ)$ \emph{$\mu$-monotone} if for every function symbol $f \in \FS$, $f_\A$ is
strictly monotone 
\emph{on} $\mu(f)$, i.e., $f_\A$ is strictly monotone with respect to
every argument position in $\mu(f)$.
Similarly a (strict) relation $\R$ is called 
$\mu$-monotone if (strictly) monotone on $\mu(f)$ for all  $f \in \FS$.
Let $\RS$ be a TRS compatible with a $\mu$-monotone
relation $\R$. Then clearly any $\mu$-step $s \muto{\mu} t$ implies
$s \R t$.

We recall the concept of \emph{matrix interpretations} on natural numbers
(see~\cite{EWZ08} but compare also~\cite{HW06}).
Let $\FS$ denote a signature. 
We fix a dimension $d\in\N$ and use the set $\N^d$ as the carrier of 
an algebra $\A$, together with the following extension of the
natural order $>$ on $\N$:
\begin{equation*}
(x_1,x_2,\ldots,x_d) > (y_1,y_2,\ldots,y_d) \defeqv
x_1>y_1 \wedge x_2 \geqslant y_2 \wedge \ldots \wedge x_d \geqslant y_d \tpkt
\end{equation*}
Let $\mu$ be a replacement map.
For each $n$-ary function symbol $f$, we choose as an interpretation
a linear function of the following shape:
\begin{equation*}
f_{\A} \colon (\vec{v}_1,\ldots,\vec{v}_n) 
\mapsto F_1 \vec{v}_1 + \cdots + F_n \vec{v}_n + \vec{f}
\tkom
\end{equation*}
where $\vec{v}_1,\ldots,\vec{v}_n$ are (column) vectors of variables,
$F_1,\ldots,F_n$ are matrices (each of size $d \times d$), and $\vec{f}$ is a vector over $\N$.
Moreover, suppose for any $i \in \mu(f)$
the top left entry $(F_i)_{1,1}$ is positive. Then it is easy to see that the
algebra $\A$ forms a $\mu$-monotone WMA. 
Let $\A$ be a matrix interpretation, let $\alpha_0$ denotes the assignment mapping 
any variable to $\vec{0}$, i.e., $\alpha_0(x) = \vec{0}$ for all $x \in \VS$, and let $t$ be a term. 
In the following we write $[t]$, $[t]_j$
as an abbreviation for $\eval{\alpha_0}{\A}(t)$, or
$\left( \eval{\alpha_0}{\A}(t) \right)_j$ ($1 \leqslant j \leqslant d$), 
respectively, if the algebra $\A$ is clear from the context.

\section{Runtime Complexity} \label{Runtime Complexity}

In this section we formalise runtime complexity and then define a subclass of matrix 
interpretations that give polynomial upper-bounds.

The \emph{derivation height} of a term $s$ with respect to a 
well-founded, finitely branching relation $\to$ is defined as: 
$\dheight(s,\to) = \max\{ n \mid \exists t \; s \to^n t \}$. 
Let $\RS$ be a TRS and $T$ be a set of terms. 
The \emph{complexity function with respect to a relation $\to$ on $T$}
is defined as follows:
\begin{equation*}
\comp(n, T, \rew) = \max\{ \dheight(t, \rew) \mid 
\text{$t \in T$ and $\size{t} \leqslant n$}\} \tpkt
\end{equation*}
In particular we are interested in the 
(innermost) complexity with respect to $\rsrew{\RS}$ ($\irew{\RS}$)
on the set $\TB$ of all \emph{basic} terms.%

\begin{definition}
Let $\RS$ be a TRS.
We define the \emph{runtime complexity function} $\Rc{\RS}(n)$,
the \emph{innermost runtime complexity function} $\Rci{\RS}(n)$, and 
the \emph{derivational complexity function} $\Dc{\RS}(n)$ as
$\comp(n, {\TB}, \rsrew{\RS})$,
$\comp(n, {\TB}, \irew{\RS})$, and
$\comp(n, \TA(\FS,\VS), \rsrew{\RS})$, respectively.
\end{definition}

Note that the above complexity functions need not be defined, as the
rewrite relation $\rsrew{\RS}$ is not always well-founded \emph{and} finitely branching.
We sometimes say the (innermost) runtime complexity of $\RS$ is \emph{linear},
\emph{quadratic}, or \emph{polynomial} if there exists a (linear, quadratic)
polynomial $p(n)$ such that $\Rcpareni{\RS}(n) \leqslant p(n)$ for sufficiently
large $n$. The (innermost) runtime complexity of $\RS$ is called \emph{exponential} if
there exist constants $c$, $d$ with $c,d \geqslant 2$ 
such that $c^n \leqslant \Rcpareni{\RS}(n) \leqslant d^n$ for sufficiently
large $n$. 

The next example illustrates a difference between 
derivational complexity and runtime complexity.
\begin{example} \label{ex:1}
\label{ex:div}
Consider the following TRS $\RSdiv$%
\footnote{This is Example~3.1 in Arts and Giesl's collection of TRSs~\cite{ArtsGiesl:2001}.}
\begin{alignat*}{4}
1\colon &\; &
x - \m{0} & \to x
&
\qquad
3\colon &\; &
\m{0}  \div \m{s}(y) & \to \m{0}
\\
2\colon &\; &
\m{s}(x) - \m{s}(y) & \to x - y
&
\qquad
4\colon &\; &
\m{s}(x)  \div \m{s}(y) & \to \m{s}((x - y) \div \m{s}(y))
\tpkt
\end{alignat*}
Although the functions \emph{computed} by $\RSdiv$ are obviously feasible this
is not reflected in the derivational complexity of $\RSdiv$. 
Consider rule 4, which we abbreviate as $C[x] \to D[x,x]$. 
Since the maximal derivation height starting with $C^n[x]$ equals $2^{n-1}$ 
for all $n > 0$, $\RSdiv$ admits (at least) exponential derivational complexity.
In general any duplicating TRS admits (at least)
exponential derivational complexity.
\end{example}

In general it is not possible to bound $\Dc{\RS}$ polynomially in $\Rc{\RS}$, 
as witnessed by Example~\ref{ex:1} and the observation that
the runtime complexity of $\RS$ is linear
(see~Example~\ref{ex:1:ua}, below). We will use Example~\ref{ex:1}
as our running example.

Below we define classes of orders whose compatibility with a TRS $\RS$ 
bounds its runtime complexity from the above.
Note that $\dheight(t, {\succ})$ is undefined, if the relation $\succ$
is not well-founded or not finitely branching.
In fact compatibility of a constructor TRS with the 
polynomial path order $>_{\m{pop*}}$ (\cite{AM:2009}) induces 
polynomial innermost runtime complexity, whereas 
$\m{f}(x) >_{\m{pop*}} \cdots >_{\m{pop*}} \cdots >_{\m{pop*}} 
\m{g}^2(x) >_{\m{pop*}} \m{g}(x) >_{\m{pop*}} x$ holds when
precedence $\m{f} > \m{g}$ is used. 
Hence $\dheight(t, {>_{\m{pop*}}})$ is undefined, while the order
$>_{\m{pop*}}$ can be employed in complexity analysis.

\begin{definition} \label{d:collapsible}
Let $\R$ be a binary relation over terms, let $\succ$ be a proper order on terms, 
and let $\Slow$ denote a mapping associating a term with a natural number.
Then $\succ$ is \emph{$\Slow$-collapsible on $\R$} if
$\Slow(s) > \Slow(t)$, whenever ${s} \R {t}$ and ${s} \succ {t}$ holds. 
An order $\succ$ is \emph{collapsible (on $\R$)},
if there is a mapping $\Slow$ such that $\succ$ is $\Slow$-collapsible
(on $\R$).
\end{definition}

\begin{lemma}
Let $\R$ be a finitely branching and well-founded relation.
Further, let $\succ$ be a $\Slow$-collapsible order with ${\R} \subseteq {\succ}$.
Then $\dheight(t,{\R}) \leqslant \Slow(t)$ holds for all terms $t$.
\end{lemma}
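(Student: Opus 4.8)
The plan is to prove the bound by induction on $\dheight(t,{\R})$, which is well-defined since $\R$ is finitely branching and well-founded. The key observation is that $\dheight(t,{\R})$ being some natural number $n$ means there is a maximal $\R$-derivation $t = t_0 \R t_1 \R \cdots \R t_n$, and each single step $t_i \R t_{i+1}$ is also a $\succ$-step because ${\R} \subseteq {\succ}$, hence a strictly $\Slow$-decreasing step by $\Slow$-collapsibility.

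First I would fix a term $t$ and set $n = \dheight(t,{\R})$. If $n = 0$ there is nothing to prove since $\Slow(t) \geqslant 0$. Otherwise pick a term $s$ with $t \R s$ that realises a longest derivation, so that $\dheight(s,{\R}) = n - 1$ (note such a maximal first step exists precisely because $\R$ is finitely branching, so the maximum in the definition of $\dheight$ is attained). Since ${\R} \subseteq {\succ}$ we have $t \succ s$, and together with $t \R s$ the $\Slow$-collapsibility of $\succ$ on $\R$ yields $\Slow(t) > \Slow(s)$, i.e.\ $\Slow(t) \geqslant \Slow(s) + 1$. By the induction hypothesis applied to $s$ (whose derivation height is strictly smaller) we get $\dheight(s,{\R}) \leqslant \Slow(s)$, and therefore
\begin{equation*}
\dheight(t,{\R}) = 1 + \dheight(s,{\R}) \leqslant 1 + \Slow(s) \leqslant \Slow(t) \tpkt
\end{equation*}

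I expect the only subtle point to be the justification that the maximum defining $\dheight$ is actually attained and that one may pick a first step $t \R s$ with $\dheight(s,{\R}) = \dheight(t,{\R}) - 1$; this is exactly where finite branching together with well-foundedness (via König's lemma, or directly by an easy argument on the finitely many $\R$-successors of $t$) is needed. Everything else is a routine one-line induction, so the write-up can afford to state that step carefully and then proceed briskly.
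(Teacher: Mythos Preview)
Your proof is correct and is precisely the natural argument; the paper in fact states this lemma without proof, treating it as immediate from the definitions. Your induction on $\dheight(t,{\R})$, together with the observation that finite branching guarantees a first step realising the maximum, is exactly the intended justification.
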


The alert reader will have noticed that any proper order $\succ$ is 
collapsible on a finitely branching and well-founded relation $\R$: 
simply set $\Slow(t) \defsym \dheight(t,{\R})$. 
However, this observation is of limited use if we wish to bound the derivation
height of $t$ in independence of $\R$.

If a TRS $\RS$ and a $\mu$-monotone matrix interpretation $\A$ are compatible,
$\Slow(t)$ can be given by $[t]_1$.
In order to estimate derivational or runtime complexity,
one needs to associate $[t]_1$ to  $|t|$.
For this sake we define degrees of matrix interpretations.

\begin{definition}
A matrix interpretation is of \emph{(basic) degree} $d$ if
there is a constant $c$ such that $[t]_i \leqslant c \cdot |t|^d$ 
for all (basic) terms $t$ and $i$, respectively.
\end{definition}

An \emph{upper triangular complexity matrix} is a matrix $M$ in $\N^{d\times d}$
such that we have $M_{j,k}=0$ for all $1 \leqslant k < j \leqslant d$, 
and $M_{j,j}\leqslant 1$ for all $1 \leqslant j \leqslant d$.
We say that a WMA $\A$ is a \emph{triangular matrix interpretation} 
(\emph{TMI} for short) if $\A$ is a matrix interpretation (over $\N$) and all
matrices employed are of upper triangular complexity form. 
It is easy to define triangular matrix interpretations, 
such that an algebra $\A$ based on such an interpretation, 
forms a well-founded \emph{weakly} monotone algebra. To simplify
notation we will also refer to $\A$ as a TMI, if no confusion can arise
from this.
A TMI $\A$ of dimension 1, that is a linear polynomial, is called a
\emph{strongly linear interpretation} (\emph{SLI} for short) if all
interpretation functions $f_{\A}$ are strongly linear.
Here a polynomial $P(x_1,\dots,x_n)$ is strong linear if 
$P(x_1,\dots,x_n) = x_1 + \cdots + x_n + c$.

\begin{lemma} \label{l:8}
Let $\A$ be a TMI and let $M$ denote the component-wise
maximum of all matrices occurring in $\A$. 
Further, let $d$ denote the number of ones occurring along the diagonal of $M$. 
Then for all $1 \leqslant i,j \leqslant d$ we have $(M^n)_{i,j} = \bO(n^{d-1})$.
\end{lemma}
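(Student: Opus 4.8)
The plan is to analyze the powers of the matrix $M$ directly, exploiting its upper triangular complexity form. Since $M$ is upper triangular with diagonal entries in $\{0,1\}$, I would first observe that all eigenvalues of $M$ lie in $\{0,1\}$, so the growth of $(M^n)_{i,j}$ is polynomial rather than exponential; the content of the lemma is pinning down the degree as exactly $d-1$, where $d$ is the number of ones on the diagonal. First I would reduce to the case where \emph{every} diagonal entry is $1$ (i.e. $d$ equals the dimension): the rows and columns with a zero on the diagonal contribute, after sufficiently many steps, only lower-order terms, since a strictly upper triangular block is nilpotent. Concretely, I would argue that entries $(M^n)_{i,j}$ are bounded by sums of products of off-diagonal entries along ``paths'' $i = k_0, k_1, \dots, k_n = j$ in the associated weighted graph on $\{1,\dots,d\}$, where each edge weight is an entry $M_{k_{l-1},k_l}$ and edges only go from smaller to larger indices except for the diagonal self-loops (which have weight $1$).

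The key combinatorial step is then a counting argument: a contributing path of length $n$ can take at most $d-1$ strictly-increasing (off-diagonal) edges, because the index is non-decreasing along the path and bounded by $d$; all remaining $n - (\text{number of off-diagonal steps})$ edges are diagonal self-loops contributing a factor $1$. Hence each path contributes a bounded constant (a product of at most $d-1$ fixed entries of $M$), and the number of such paths is at most the number of ways to choose the positions of the at most $d-1$ off-diagonal steps among $n$ slots together with the at most $d-1$ intermediate indices, which is $\bO(n^{d-1})$. Summing over all paths gives $(M^n)_{i,j} = \bO(n^{d-1})$, with the implicit constant depending only on $M$ (hence only on $d$ and the fixed entries of $M$), as required.

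I would carry this out by: (i) setting up the path/walk description of $(M^n)_{i,j} = \sum M_{k_0,k_1}\cdots M_{k_{n-1},k_n}$; (ii) noting the index sequence $k_0 \leqslant k_1 \leqslant \cdots \leqslant k_n$ is non-decreasing (using $M_{j,k} = 0$ for $k < j$), so it strictly increases at most $d-1$ times; (iii) bounding each surviving term by $C$, a product of at most $d-1$ entries of $M$, using $M_{j,j} \leqslant 1$ for the self-loops; and (iv) bounding the number of nonzero terms by $\binom{n}{\leqslant d-1}$ times a constant for the choice of intermediate indices, which is $\bO(n^{d-1})$. The main obstacle I anticipate is bookkeeping the reduction in the case $d < $ dimension — making precise that diagonal zeros contribute only asymptotically negligible terms — but this follows because any maximal run of indices staying within the ``zero-diagonal'' part has bounded length (nilpotency of the strictly upper triangular sub-block), so it can absorb only a constant number of the $n$ steps and therefore does not affect the leading exponent.
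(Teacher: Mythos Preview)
Your approach is correct and takes a genuinely different route from the paper. The paper does not argue combinatorially at all: its entire proof consists of citing Lemma~4 of~\cite{NZM:2010} together with the single observation that for an upper triangular matrix the diagonal entries form the multiset of eigenvalues, so that the cited result --- phrased in terms of eigenvalue multiplicities --- applies directly. Your argument, by contrast, is elementary and self-contained: you expand $(M^n)_{i,j}$ as a sum over walks, observe that upper-triangularity forces the index sequence to be non-decreasing, and count the nonzero walks by how the $n$ steps are distributed among the ``self-loop'' positions (indices with diagonal entry~$1$). This buys independence from the external reference and from any linear-algebraic machinery such as Jordan normal form, at the price of a longer write-up. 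One expository remark: your proposed reduction to the case where $d$ equals the dimension is not the cleanest way to organise the argument, since the zero-diagonal indices need not form a contiguous block and deleting them does not obviously preserve the structure. But your final paragraph already contains the right direct argument: in a nonzero walk one can self-loop only at indices with diagonal entry~$1$, there are at most $d$ such indices visited, and distributing the (roughly) $n$ self-loop steps among them accounts for the $\bO(n^{d-1})$ factor; the strictly-increasing steps and the single-step pass-throughs at zero-diagonal indices together contribute only a bounded factor depending on the dimension and the entries of~$M$. Carrying this out directly is simpler than first performing a reduction.
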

\begin{proof}
The lemma is a direct consequence of Lemma~4 in~\cite{NZM:2010} together with 
the observation that for any triangular complexity
matrix, the diagonal entries denote the multiset
of eigenvalues. 
\end{proof}

\begin{lemma} \label{l:9}
Let $\A$ and $d$ be defined as in Lemma~\ref{l:8}. 
Then $\A$ is of degree $d$.
\end{lemma}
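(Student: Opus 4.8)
The plan is to show that the entries $[t]_i$ of the evaluation of a term $t$ under a TMI $\A$ are bounded by $c\cdot|t|^{d}$ for a suitable constant $c$, where $d$ is the number of ones on the diagonal of the componentwise maximum matrix $M$. First I would set up notation: let $K$ be an upper bound on all entries of $M$ and on all entries of the constant vectors $\vec{f}$ occurring in $\A$, and let $m$ be the maximal arity of a symbol in $\FS$. The idea is that evaluating $f(t_1,\dots,t_n)$ produces $\sum_i F_i[t_i] + \vec{f}$, so by induction on the term structure $[t]$ is a sum, over all positions $p\in\Pos(t)$, of terms of the form $\big(\prod$ matrices along the path to $p\big)\cdot\vec{g}$ for the constant vector $\vec{g}$ of the symbol at $p$. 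Each matrix product along a path of length $\ell$ is, entrywise, bounded by $(M^\ell)_{i,j}$ since $M$ dominates every $F_k$ componentwise and matrix multiplication is monotone on nonnegative matrices.

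The key step is then to invoke Lemma~\ref{l:8}: for a path of length $\ell$ we have $(M^\ell)_{i,j} = \bO(\ell^{d-1})$, and since $\ell \leqslant |t|$ (a path in $t$ has length at most the number of symbols in $t$), each such contribution is $\bO(|t|^{d-1})$. Summing over all positions $p\in\Pos(t)$, of which there are at most $|t|$, yields $[t]_i = \bO(|t|\cdot|t|^{d-1}) = \bO(|t|^{d})$, i.e.\ there is a constant $c$ with $[t]_i \leqslant c\cdot|t|^{d}$ for all $i$ and all terms $t$. This is exactly the statement that $\A$ is of degree $d$. (For the ``basic degree'' variant the same argument applies verbatim, restricting to basic terms, but in fact the bound for all terms already suffices.)

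The main obstacle I expect is making the path-sum decomposition of $[t]$ precise and carrying the induction cleanly — in particular checking that the length of a path from the root of $t$ to a position $p$ is correctly bounded by $|t|$, and that the number of positions contributing is likewise at most $|t|$, so that the two factors multiply to $|t|^d$ rather than something larger. One must also be slightly careful that $\bO(\cdot)$ is used uniformly in $t$: the constants hidden in Lemma~\ref{l:8} depend only on $M$ (hence on $\A$), not on $t$, and the bound $|\Pos(t)|\leqslant|t|$ together with $\ell\leqslant|t|$ are the only facts about $t$ used, so the resulting constant $c$ is indeed independent of $t$. Everything else is routine monotonicity of nonnegative matrix arithmetic.
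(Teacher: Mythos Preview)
Your proposal is correct and follows essentially the same approach as the paper: decompose $[t]$ as a sum over positions in $t$ of products of matrices along root-to-position paths applied to the constant vectors, bound each such product entrywise by the corresponding power of $M$ using Lemma~\ref{l:8} to get $\bO(\size{t}^{d-1})$ per summand, and multiply by the $\size{t}$ many summands to obtain $\bO(\size{t}^{d})$. The paper's proof is terser but structurally identical, and your explicit remarks about uniformity of the hidden constants and the bound on path length are exactly the points the paper leaves implicit.
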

\begin{proof}
For any (triangular) matrix interpretation $\A$, 
there exist vectors $\vec{v}_i$ and a vector $\vec{w}$ 
such that the evaluation $[t]$ of $t$ can be written as follows:
\begin{equation*} 
  [t] = \sum_{i=1}^{\ell} \vec{v}_i + \vec{w} \tkom
\end{equation*}
where each vector $\vec{v}_i$ is the product of those matrices employed
in the interpretation of function symbols in $\A$ and a vector representing
the constant part of a function interpretation. 
It is not difficult to see that there is a one-to-one correspondence between
the number of vectors $\vec{v}_1,\dots,\vec{v}_{\ell}$ and
the number of subterms of $t$ and thus $\ell = \size{t}$. Moreover for each
$\vec{v}_i$ the number of products is less than the depth of $t$ and thus
bounded by $\size{t}$. 
In addition, due to Lemma~\ref{l:8} the entries of the vectors 
$\vec{v}_i$ and $\vec{w}$ are bounded by a polynomial of degree at most $d-1$. 
Thus for all $1 \leqslant j \leqslant d$, there exists $k \leqslant d$ such 
that $([t])_j = \bO(\size{t}^{k})$.
\end{proof}

\begin{theorem}{\cite[Theorem~9]{NZM:2010},\cite{W:2010}}
\label{t:TMI}
Let $\A$ and $d$ be defined as in Lemma~\ref{l:8}. 
Then, $\gord{\A}$ is $\bO(n^d)$-collapsible.
\end{theorem}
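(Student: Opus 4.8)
The plan is to combine the two preceding lemmas with the definition of collapsibility. Recall from Definition~\ref{d:collapsible} that to show $\gord{\A}$ is $\bO(n^d)$-collapsible on the rewrite relation $\rsrew{\RS}$ (for a compatible TRS $\RS$), it suffices to exhibit a mapping $\Slow$ with $\Slow(s) = \bO(\size{s}^d)$ such that $s \gord{\A} t$ implies $\Slow(s) > \Slow(t)$. The natural candidate is $\Slow(t) \defsym [t]_1$, the first component of the evaluation of $t$ under the assignment $\alpha_0$.

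First I would verify the \emph{decrease} property. Since $\A$ is a WMA and $s \gord{\A} t$ means $\eval{\alpha_0}{\A}(s) > \eval{\alpha_0}{\A}(t)$ in the extension of $>$ on $\N^d$, the very definition of that order gives $[s]_1 > [t]_1$ in $\N$; hence $\Slow(s) > \Slow(t)$, which is exactly what collapsibility on $\gord{\A}$ demands. (Monotonicity of the interpretation functions is what guarantees $\gord{\A}$ is a proper order on terms in the first place, but for the collapsibility inequality only the shape of the order on $\N^d$ is used.)

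Second, and this is where the real content lies, I would bound $\Slow(t) = [t]_1$ in terms of $\size{t}$. This is precisely the statement of Lemma~\ref{l:9}: the TMI $\A$ is of degree $d$, so there is a constant $c$ with $[t]_i \leqslant c \cdot \size{t}^d$ for all terms $t$ and all $i$, in particular for $i = 1$. Thus $\Slow(t) = \bO(\size{t}^d)$. Combining the two parts, $\Slow$ witnesses that $\gord{\A}$ is $\bO(n^d)$-collapsible, as claimed.

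The main obstacle is entirely absorbed into Lemma~\ref{l:9} (and, beneath it, Lemma~\ref{l:8}): the delicate point is that the entries of the matrices accumulated along a term can grow, but because every diagonal entry of an upper triangular complexity matrix is at most $1$ and the diagonal carries the multiset of eigenvalues, $(M^n)_{i,j}$ grows only like $\bO(n^{d-1})$ with $d$ the number of unit eigenvalues; summing $\size{t}$ such vectors, each a product of at most $\size{t}$ matrices, yields the $\bO(\size{t}^d)$ bound. Since we are permitted to assume the earlier results, the proof here reduces to assembling Lemma~\ref{l:9} with the definition of the order on $\N^d$, and no further calculation is needed.
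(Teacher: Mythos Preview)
Your proposal is correct and follows exactly the approach the paper intends: the paper's proof is the single sentence ``the theorem is a direct consequence of Lemmas~\ref{l:8} and~\ref{l:9},'' and you have simply unpacked this by taking $\Slow(t) = [t]_1$, using the shape of the order on $\N^d$ for the decrease, and invoking Lemma~\ref{l:9} for the $\bO(\size{t}^d)$ bound. One tiny imprecision: $s \gord{\A} t$ means $\eval{\alpha}{\A}(s) > \eval{\alpha}{\A}(t)$ for \emph{all} assignments $\alpha$, not just $\alpha_0$; but since it then holds in particular for $\alpha_0$, your conclusion $[s]_1 > [t]_1$ is unaffected.
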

\begin{proof}
The theorem is a direct consequence of Lemmas~\ref{l:8} and~\ref{l:9}.  
\end{proof}

In order to cope with runtime complexity, a similar idea to restricted 
polynomial interpretations (see \cite{BCMT:2001}) can be integrated to
triangle matrix interpretations. We call $\A$ a 
\emph{restricted matrix interpretation} (\emph{RMI} for short) if 
$\A$ is a matrix interpretation, but for each constructor symbol $f \in \FS$,
the interpretation $f_\A$ of $f$ employs upper triangular complexity
matrices, only.
The next theorem is a direct consequence of the definitions
in conjunction with Lemma~\ref{l:9}.

\begin{theorem} \label{t:rmi}
Let $\A$ be an RMI and let $t$ be a basic term. 
Further, let $M$ denote the component-wise maximum of all matrices 
used for the interpretation of constructor symbol, 
and let $d$ denote the number of ones occurring along the diagonal of $M$.
Then $\A$ is of basic degree $d$.  Furthermore, if $M$ is the unit matrix
then $\A$ is of basic degree $1$.
\end{theorem}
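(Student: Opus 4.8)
The plan is to reduce the statement to Lemma~\ref{l:9} by carefully re-examining the evaluation expansion used in its proof, but now restricted to basic terms. Recall that a basic term has the shape $t = f(t_1,\dots,t_n)$ with $f \in \DS$ and each $t_i \in \TA(\CS,\VS)$; thus along any path from the root into $t$, exactly one defined symbol occurs, namely the root, and every symbol strictly below the root is a constructor. First I would replay the decomposition from the proof of Lemma~\ref{l:9}: write $[t] = \sum_{i=1}^{\ell} \vec{v}_i + \vec{w}$, where $\ell = \size{t}$ and each $\vec{v}_i$ is a product of matrices — one matrix $F_j$ per symbol traversed on the path from the root of $t$ down to the subterm contributing the constant vector $\vec{v}_i$ — applied to a constant vector coming from some function interpretation.

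Next I would observe the key structural point: in such a product $F_{j_1} F_{j_2} \cdots F_{j_k} \vec{c}$, at most one of the matrices $F_{j_1},\dots,F_{j_k}$ — the one attached to the root symbol of $t$ — can be an arbitrary nonnegative matrix; all the remaining ones are drawn from interpretations of constructor symbols, which by the RMI assumption are upper triangular complexity matrices. Hence each such product has the form $F \cdot P$, where $F$ is a single (fixed, hence entrywise bounded by a constant) matrix and $P$ is a product of at most $\size{t}$ upper triangular complexity matrices, all entrywise bounded by $M$. By Lemma~\ref{l:8} applied to the constructor matrices, the entries of $P$, and therefore of $F\cdot P$ and of $\vec{v}_i$, are $\bO(n^{d-1})$ where $n = \size{t}$ and $d$ is the number of ones on the diagonal of the component-wise maximum $M$ of the constructor matrices. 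Summing the $\ell = \size{t} = \bO(n)$ many vectors $\vec{v}_i$ (and the constant vector $\vec{w}$) gives $[t]_i = \bO(n \cdot n^{d-1}) = \bO(n^{d})$ for every coordinate $i$, which is exactly the statement that $\A$ is of basic degree $d$. For the final sentence, if $M$ is the unit matrix then $d = 1$; one still has to note that products of unit matrices are the unit matrix, so each $\vec{v}_i$ is again a single constant matrix times a constant vector, hence $\bO(1)$, and the sum over $\ell = \bO(n)$ terms is $\bO(n)$, i.e.\ basic degree $1$.

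I expect the main obstacle to be making the "at most one non-constructor matrix per product" observation fully rigorous, since it requires being precise about how the expansion $[t] = \sum_i \vec{v}_i + \vec{w}$ records the path of symbols for each summand — something the proof of Lemma~\ref{l:9} only sketches. Concretely, one must argue that for a basic term the matrix product associated with each $\vec{v}_i$ factors as $F_{\rt(t)} \cdot (\text{product of constructor matrices})$, possibly with the degenerate case where the summand already comes from the root interpretation itself (product of constructor matrices empty, i.e.\ the identity). Once that factorisation is in hand, everything else is a routine application of Lemma~\ref{l:8} and the arithmetic of the preceding paragraph, exactly mirroring Lemma~\ref{l:9} with "depth of $t$" replaced by "depth of the constructor part of $t$" — which is still bounded by $\size{t}$, so no new estimate on term size is needed.
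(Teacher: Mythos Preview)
Your approach is correct and is exactly the fleshing-out the paper intends: the paper's own ``proof'' is a single sentence stating that the theorem is a direct consequence of the definitions together with Lemma~\ref{l:9}, and your argument is precisely the adaptation of that lemma's proof to basic terms, using that along any root-to-leaf path only the root contributes a non-triangular matrix so that Lemma~\ref{l:8} governs the remaining product.

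One small slip to fix: in the final sentence you write ``if $M$ is the unit matrix then $d = 1$''. That is false in general --- for the $k\times k$ unit matrix the number of ones on the diagonal is $k$, so the first part of the theorem only yields basic degree $k$. The ``furthermore'' clause is a genuinely sharper statement, and the separate argument you give (each constructor matrix is entrywise bounded by the identity, hence any product of them is as well, so every $\vec{v}_i$ is $\bO(1)$) is the right one; just drop the incorrect remark that $d=1$.
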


\section{Usable Replacement Maps}
\label{CSR}

Unfortunately, there is no RMI compatible with
the TRS of our running example.  The reason is that the 
monotonicity requirement of TMI is too severe for complexity analysis. 
Inspired by the idea of Fern\'andez \cite{F:2005}, we show how 
context-sensitive rewriting is used in complexity analysis.
Here we briefly explain our idea.
Let $\mathbf{n}$ denote the numeral $s^n(\m{0})$.
Consider the derivation from $\mathbf{4} \div \mathbf{2}$:
\begin{equation*}
\underline{\mathbf{4} \div \mathbf{2}}
\to \m{s}(\underline{(\mathbf{3} - \mathbf{1})} \div \mathbf{2}) 
\to \m{s}((\underline{\mathbf{2} - \m{0}}) \div \mathbf{2})
\to \m{s}(\underline{\mathbf{2} \div \mathbf{2}}) 
\to \cdots
\end{equation*}
where redexes are underlined.
Observe that e.g. any second argument of $\div$ is never rewritten.  
More precisely, any derivation from a basic term consists of
only $\mu$-steps with the replacement map $\mu$: 
$\mu(\m{s}) = \mu({\div}) = \{1\}$ and 
$\mu({-}) = \varnothing$. 

We present a simple method 
based on a variant of $\ICAP$ in \cite{GTS05}
to estimate a suitable replacement map.
Let $\mu$ be a replacement map. Clearly the function $\mu$ is representable as
set of ordered pairs $(f,i)$. Below we often confuse the notation of
$\mu$ as a function or as a set.
Recall that $\Pos_\mu(t)$ denotes the set of \emph{$\mu$-replacing positions} 
in $t$ and $\NPos_\mu(t) = \Pos(t) \setminus \Pos_\mu(t)$. 
Further, a term $t$ is a \emph{$\mu$-replacing term} 
with respect to a TRS $\RS$ if ${\atpos{t}{p}} \not\in {\NF(\RS)}$
implies that $p \in Pos_\mu(t)$. The set of all $\mu$-replacing terms is denoted by $\MUTERM{\mu}$.
In the following $\RS$ will always denote a TRS.

\begin{definition}
\label{d:uargs}
Let $\RS$ be a TRS and let $\mu$ be a replacement map.
We defined the operator $\Upsilon^\RS$ as follows:
\begin{equation*}
\Upsilon^\RS(\mu) \defsym \{ (f,i) \mid 
\text{$l \to C[f(\seq{r})] \in \RS$ and $\MUCAP{\mu}{l}{r_i} \neq r_i$} \} \tpkt
\end{equation*}
Here $\MUCAP{\mu}{s}{t}$ is inductively defined on $t$ as follows:
\begin{equation*}
\MUCAP{\mu}{s}{t} =
\begin{cases}
t & \text{$t = \atpos{s}{p}$ for some $p \in \NPos_\mu(s)$} \tkom\\
u & \text{if $t = f(\seq{t})$ and 
$u$ and $l$ unify for no
$l \to r \in \RS$} \tkom \\
y & \text{otherwise} \tkom
\end{cases}
\end{equation*}
where, $u = f(\MUCAP{\mu}{s}{t_1},\ldots,\MUCAP{\mu}{s}{t_n})$,
$y$ is a fresh variable, and $\Var(l) \cap \Var(u) = \varnothing$
is assumed.  
\end{definition}

We define the \emph{innermost usable replacement map} $\IURM{\RS}$
as follows $\IURM{\RS} \defsym \Upsilon^\RS(\varnothing)$ and let
the \emph{usable replacement map} $\URM{\RS}$ denote the 
least fixed point of $\Upsilon^\RS$.  The existence of $\Upsilon^\RS$ follows from 
the monotonicity of $\Upsilon^\RS$.
If $\RS$ is clear from context, we simple write
$\imu$, $\tmu$, and $\Upsilon$, respectively.
Usable replacement maps satisfy a desired property for runtime
complexity analysis.  In order to see it several preliminary 
lemmas are necessary.  

First we take a look at 
$\MUCAP{\mu}{s}{t}$.
Suppose $s \in \TT(\mu)$: observe that the function $\MUCAP{\mu}{s}{t}$ replaces 
a subterm $u$ of $t$ by a fresh variable if $u\sigma$ is a redex %, 
for some $s\sigma \in \TT(\mu)$.
This is exemplified below.
\begin{example}
Consider the TRS $\RSdiv$. Let $l \to r$ be rule 4, namely,
$l = \m{s}(x) \div \m{s}(y)$ and $r = \m{s}((x - y) \div \m{s}(y))$.
Suppose $\mu(f) = \varnothing$ for all functions $f$
and let $w$ and $z$ be fresh variables.
The next table summarises $\MUCAP{\mu}{l}{t}$ for each proper subterm 
$t$ in $r$.  To see the computation process, we also indicate 
the term $u$ in Definition~\ref{d:uargs}.
\begin{center}
\begin{tabular}{@{}c@{\qquad}ccccc@{}}
\hline
$t$ & $x$ & $y$ & $x - y$ & $\m{s}(y)$ & $(x-y) \div \m{s}(y)$ \\
$u$ & --  & --  & $x - y$ & $\m{s}(y)$ & $w \div \m{s}(y)$ \\
$\MUCAP{\mu}{l}{t}$ & $x$ & $y$ & $w$ & $\m{s}(y)$ & $z$ \\
\hline
\end{tabular}
\end{center}
By underlining proper subterms $t$ in $r$ such that 
$\MUCAP{\mu}{l}{t} \neq t$, we have 
\[
\m{s}(\underline{\underline{(x-y)} \div \m{s}(y)})
\]
which indicates $(\m{s}, 1), ({\div},1) \in \Upsilon(\mu)$.
\end{example}

The next lemma states a role of $\MUCAP{\mu}{s}{t}$.
\begin{lemma}
\label{l:MUCAP}
If $s\sigma \in \MUTERM{\mu}$ and $\MUCAP{\mu}{s}{t} = t$ then
$t\sigma \in \NF(\RS)$.
\end{lemma}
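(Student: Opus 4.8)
The plan is to induct on the structure of $t$, mirroring the three clauses in the definition of $\MUCAP{\mu}{s}{t}$ (Definition~\ref{d:uargs}). First I would note that the hypothesis $\MUCAP{\mu}{s}{t} = t$ can only come from the first clause (which returns $t$ itself) or from the second clause (which returns $u$, so then $u = t$): the third clause returns a fresh variable $y \neq t$. Hence only these two cases need to be treated.

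For the first clause we have $t = \atpos{s}{p}$ with $p \in \NPos_\mu(s)$. The key auxiliary fact I need is that non-$\mu$-replacing positions are stable under substitution, i.e.\ $\NPos_\mu(s) \subseteq \NPos_\mu(s\sigma)$; this I would establish by a short induction on $p$ from the recursive definition of $\Pos_\mu$, the point being that every proper prefix of $p \in \Pos(s)$ is a non-variable position, so the function symbols along the path to $p$ — and thus whichever clause of $\Pos_\mu$ excludes $p$ in $s$ — are untouched by $\sigma$. Granting this, $t\sigma = \atpos{(s\sigma)}{p}$ sits at a non-$\mu$-replacing position of $s\sigma$; since $s\sigma \in \MUTERM{\mu}$, every subterm at such a position is a normal form, so $t\sigma \in \NF(\RS)$.

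For the second clause we have $t = f(\seq{t})$, the term $u = f(\MUCAP{\mu}{s}{t_1},\dots,\MUCAP{\mu}{s}{t_n})$ unifies with no left-hand side, and $u = t$ forces $\MUCAP{\mu}{s}{t_i} = t_i$ for each $i$. By the induction hypothesis each $t_i\sigma$ is a normal form, hence so are all proper subterms of $t\sigma$; thus if $t\sigma$ were reducible it would have to be a root redex, $t\sigma = l\tau$ for some rule $l \to r \in \RS$. Using the disjointness convention $\Var(l) \cap \Var(t) = \varnothing$ of Definition~\ref{d:uargs} (here $u = t$), I would glue $\sigma$ on $\Var(t)$ and $\tau$ on $\Var(l)$ into a single substitution that unifies $t$ with $l$, contradicting the clause's condition. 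Therefore $t\sigma \in \NF(\RS)$.

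The only genuine obstacle is the positional stability lemma $\NPos_\mu(s) \subseteq \NPos_\mu(s\sigma)$ underlying the first clause; the rest is bookkeeping. The one subtlety to keep in mind is that the clauses of $\MUCAP$ are read top to bottom, so in the second clause one additionally knows that $t$ is not a non-$\mu$-replacing subterm of $s$ — although, as the argument above shows, this extra information is not actually needed.
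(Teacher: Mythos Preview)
Your proof is correct and follows essentially the same inductive strategy as the paper's proof: both split on the defining clauses of $\MUCAP{\mu}{s}{t}$, dispatch the first clause via the non-$\mu$-replacing position in $s\sigma$, and for the second clause combine the induction hypothesis on the arguments with the non-unifiability condition to rule out a root redex. You are in fact slightly more explicit than the paper in isolating the auxiliary fact $\NPos_\mu(s) \subseteq \NPos_\mu(s\sigma)$, which the paper uses tacitly when it writes $t\sigma = \atpos{(s\sigma)}{p} \in \NF$.
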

\begin{proof}
We use induction on $t$.
Suppose $s\sigma \in \MUTERM{\mu}$ and $\MUCAP{\mu}{s}{t} = t$.
If $t = \atpos{s}{p}$ for some $p \in \NPos_\mu(s)$ then 
$t\sigma = \atpos{(s\sigma)}{p} \in \NF$ follows by definition of
$\MUTERM{\mu}$.

We can assume that $t = f(\seq{t})$. Assume
otherwise that $t = x \in \VS$, then $\MUCAP{\mu}{s}{x} = x$ entails that
$x\sigma$ occurs at a non-$\mu$-replacing position in $s\sigma$. Hence
$x\sigma \in \NF$ follows from $s\sigma \in \MUTERM{\mu}$.
Moreover, by assumption we have:
\begin{enumerate}
\item \label{en:MUCAP:1}
$\MUCAP{\mu}{s}{t_i} = t_i$ for each $i$, and
\item \label{en:MUCAP:2}
there is no rule  $l \to r \in \RS$ such that $t$ and $l$ unify.  
\end{enumerate}
Due to~\ref{en:MUCAP:2}) $l\sigma$ is not reducible at the root, and
the induction hypothesis yields $t_i\sigma \in \NF$ 
because of~\ref{en:MUCAP:1}).
Therefore, we obtain $t\sigma \in \NF$.
\end{proof}

For a smooth inductive proof of our key lemma we prepare a 
characterisation of the set of $\mu$-replacing terms $\MUTERM{\mu}$.
\begin{definition}
\label{d:upsilon}
The set
$\{ (f,i) \mid \text{$f(\seq{t}) \subterm t$ and $t_i \not\in \NF(\RS)$} \}$
is denoted by $\upsilon(t)$.
\end{definition}
%
%The next lemma shows that the set of $\mu$-replacing terms $\MUTERM{\mu}$
%can be characterised through Definition~\ref{d:upsilon}.
%
\begin{lemma}
\label{l:MUTERM}
$\MUTERM{\mu} = \{ t \mid \upsilon(t) \subseteq \mu \}$.
\end{lemma}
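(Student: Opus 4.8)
The plan is to prove the set equality by structural induction on the term $t$, establishing directly that $t \in \MUTERM{\mu}$ if and only if $\upsilon(t) \subseteq \mu$. First I would record two elementary observations. (i) The set $\NF(\RS)$ is closed under taking subterms, so for any term $u$ we have $u \notin \NF(\RS)$ if and only if $\atpos{u}{q} \notin \NF(\RS)$ for some $q \in \Pos(u)$: the ``if'' direction is immediate since a reducible subterm makes $u$ reducible, and the ``only if'' direction takes $q = \epsilon$. (ii) Unfolding the recursive definition of $\Pos_\mu$, for $t = f(t_1,\dots,t_n)$ a position $iq$ lies in $\Pos_\mu(t)$ precisely when $i \in \mu(f)$ and $q \in \Pos_\mu(t_i)$, whereas $\epsilon \in \Pos_\mu(t)$ always.

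For the base case $t \in \VS$ we have $\Pos(t) = \{\epsilon\}$ and $\atpos{t}{\epsilon} = t \in \NF(\RS)$ (a variable is a normal form, since left-hand sides are non-variable), so $t \in \MUTERM{\mu}$ holds vacuously and $\upsilon(t) = \varnothing \subseteq \mu$. For the inductive step write $t = f(t_1,\dots,t_n)$. Since the subterms of $t$ are $t$ itself together with the subterms of the $t_i$, the definition of $\upsilon$ gives
\[
\upsilon(t) \subseteq \mu \iff \bigl(\forall i\colon t_i \notin \NF(\RS) \Rightarrow i \in \mu(f)\bigr) \text{ and } \bigl(\forall i\colon \upsilon(t_i) \subseteq \mu\bigr) \tpkt
\]
On the other hand, spelling out $t \in \MUTERM{\mu}$ means that every non-normal subterm of $t$ occurs at a $\mu$-replacing position; splitting the position into the case $\epsilon$ (no constraint, by (ii)) and the case $iq$, and using (i) to see that some $\atpos{t_i}{q}$ is non-normal exactly when $t_i$ is non-normal, together with the definition of $\MUTERM{\mu}$ applied to $t_i$, one obtains
\[
t \in \MUTERM{\mu} \iff \bigl(\forall i\colon t_i \notin \NF(\RS) \Rightarrow i \in \mu(f)\bigr) \text{ and } \bigl(\forall i\colon t_i \in \MUTERM{\mu}\bigr) \tpkt
\]
Applying the induction hypothesis $t_i \in \MUTERM{\mu} \iff \upsilon(t_i) \subseteq \mu$ to the last conjunct identifies the two right-hand sides, which closes the induction.

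The argument is essentially bookkeeping, so no step is a serious obstacle. The only place requiring a little care is the interaction in observation (ii) between positions of $t$ and positions of its immediate arguments, combined with the subterm-closedness of $\NF(\RS)$ from (i): it is precisely these two facts that let the condition ``$\atpos{t_i}{q} \notin \NF(\RS)$ for some $q$'' collapse to ``$t_i \notin \NF(\RS)$'', which is what makes the two decompositions line up index by index.
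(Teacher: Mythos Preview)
Your proof is correct. It differs from the paper's route: the paper treats the two inclusions separately, verifying $\MUTERM{\mu} \subseteq \{t \mid \upsilon(t) \subseteq \mu\}$ directly from the definitions by chasing a single position (no induction needed), and obtaining the reverse inclusion by a minimal-counterexample argument on $t$. Your structural induction is more uniform: you decompose both predicates $t \in \MUTERM{\mu}$ and $\upsilon(t) \subseteq \mu$ into the same pair of conjunctions over the immediate arguments $t_i$, then match them via the induction hypothesis. The paper's left-to-right step is marginally shorter, and its minimal-counterexample step is of course an induction in disguise, so the two arguments are close in spirit; what your packaging buys is that the recursive structure of $\Pos_\mu$ and of $\upsilon$ is made explicit and handled once, instead of twice in two different styles.
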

\begin{proof}
The inclusion from left to right essentially follows from the
definitions. Let $t \in \MUTERM{\mu}$ and let $(f,i) \in \upsilon(t)$.
We show $(f,i) \in \mu$.
By Definition~\ref{d:upsilon} there is a position $p \in \Pos(t)$ with
$\atpos{t}{p} = f(\seq{t})$ and ${\atpos{t}{pi}} \not\in {\NF}$.  
Thus $pi \in \Pos_\mu(t)$ and $i \in \Pos_\mu(\atpos{t}{p})$. 
Hence $(f,i) \in \mu$ is concluded.

Next we consider the reverse direction 
${\{ t \mid \upsilon(t) \subseteq \mu \}} \subseteq {\MUTERM{\mu}}$. 
Let $t$ be a minimal term such that $\upsilon(t) \subseteq \mu$ and
$t \not\in \MUTERM{\mu}$. 
One can write $t = f(\seq{t})$.  Then,
there exists a position $p \in \NPos_\mu(t)$ such that $\atpos{t}{p} \not\in \NF$.
Because $\epsilon \not\in \NPos_\mu(t)$ holds in general,
$p$ is of the form $iq$ with $i \in \NN$.
As $iq \in \NPos_\mu(t)$ 
one of $(f, i) \not\in \mu$ or $q \in \NPos_\mu(\atpos{t}{i})$ must hold.
As $t$ is minimal and ${\atpos{t}{iq}} \not\in {\NF}$ implies that 
${\atpos{t}{i}} \not\in {\NF}$, we have $(f, i) \not\in \mu$. However,
by Definition~\ref{d:upsilon}, $(f,i) \in \upsilon(t) \subseteq \mu$.
Contradiction.
\end{proof}

The next lemma about the operator $\Upsilon$ is a key for the main theorem.  Note 
that every subterm of a $\mu$-replacing term is a $\mu$-replacing term.

\begin{lemma}
\label{l:Upsilon}
If $l \to r \in \RS$ and $l\sigma \in \MUTERM{\mu}$ then
$r\sigma \in \MUTERM{\mu \cup \Upsilon(\mu)}$.
\end{lemma}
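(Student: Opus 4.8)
The plan is to reduce the claim, via Lemma~\ref{l:MUTERM}, to a pure set-inclusion and then prove that inclusion by a case analysis on where a witnessing subterm of $r\sigma$ sits. Concretely, Lemma~\ref{l:MUTERM} turns the hypothesis $l\sigma \in \MUTERM{\mu}$ into $\upsilon(l\sigma) \subseteq \mu$ and the goal $r\sigma \in \MUTERM{\mu \cup \Upsilon(\mu)}$ into $\upsilon(r\sigma) \subseteq \mu \cup \Upsilon(\mu)$. So I would fix an arbitrary $(f,i) \in \upsilon(r\sigma)$; by Definition~\ref{d:upsilon} there is a position $p$ with $\atpos{(r\sigma)}{p} = f(\seq{s})$ and $s_i \notin \NF(\RS)$, and it suffices to show $(f,i) \in \mu$ or $(f,i) \in \Upsilon(\mu)$.

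The case split is on whether $p$ lies strictly below (or at) a variable occurrence of $r$. In the first case there is a variable $x \in \Var(r)$ such that $f(\seq{s})$ is a (not necessarily proper) subterm of $x\sigma$. Since $\Var(r) \subseteq \Var(l)$ by the definition of a rewrite rule, $x\sigma$ occurs in $l\sigma$, hence so does $f(\seq{s})$; using that $\upsilon$ is monotone with respect to $\subterm$ — this is exactly the remark preceding the lemma that subterms of $\mu$-replacing terms are $\mu$-replacing — we get $(f,i) \in \upsilon(x\sigma) \subseteq \upsilon(l\sigma) \subseteq \mu$. In the second case $p \in \Pos(r)$ and $\atpos{r}{p}$ is not a variable, say $\atpos{r}{p} = f(\seq{r})$, so that $r = C[f(\seq{r})]$ for the context $C$ determined by $p$, and $\atpos{(r\sigma)}{p} = f(r_1\sigma,\dots,r_n\sigma)$, i.e.\ $s_i = r_i\sigma$. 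Here I would inspect $\MUCAP{\mu}{l}{r_i}$: if $\MUCAP{\mu}{l}{r_i} = r_i$, then Lemma~\ref{l:MUCAP} (applied with the term $l$, using $l\sigma \in \MUTERM{\mu}$) yields $r_i\sigma \in \NF(\RS)$, contradicting $s_i = r_i\sigma \notin \NF(\RS)$. Hence $\MUCAP{\mu}{l}{r_i} \neq r_i$, and since $l \to C[f(\seq{r})] \in \RS$, this is precisely the condition defining $(f,i) \in \Upsilon(\mu)$ in Definition~\ref{d:uargs}.

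I expect the only delicate point to be keeping the case distinction watertight at its boundary: making sure that every position $p$ of $r\sigma$ falls into exactly one case, and in particular handling the situation where the witnessing subterm coincides with $x\sigma$ for some variable $x$ of $r$ (an improper subterm), which still belongs to the first case. Everything else is routine bookkeeping — the monotonicity of $\upsilon$ along $\subterm$, the variable condition $\Var(r)\subseteq\Var(l)$, and a single application each of Lemmas~\ref{l:MUTERM} and~\ref{l:MUCAP}.
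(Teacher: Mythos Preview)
Your proposal is correct and follows essentially the same route as the paper's proof: reduce via Lemma~\ref{l:MUTERM} to the inclusion $\upsilon(r\sigma) \subseteq \mu \cup \Upsilon(\mu)$, then split on whether the witnessing position lies below a variable position of $r$ (yielding $(f,i)\in\mu$ via $\Var(r)\subseteq\Var(l)$ and monotonicity of $\upsilon$) or is a non-variable position of $r$ (yielding $(f,i)\in\Upsilon(\mu)$ via the contrapositive of Lemma~\ref{l:MUCAP}). Your treatment of the boundary case where $p$ coincides with a variable position of $r$ is in fact more explicit than the paper's.
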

\begin{proof}
Let $l \to r \in \RS$ and suppose $l\sigma \in \MUTERM{\mu}$.
By Lemma~\ref{l:MUTERM} we have
\begin{equation*}
  \MUTERM{\mu} = \{ t \mid \upsilon(t) \subseteq \mu \} \qquad
  \MUTERM{\mu \cup \Upsilon(\mu)} = \{ t \mid {\upsilon(t)} \subseteq {\mu \cup \Upsilon(\mu)}\} \tpkt
\end{equation*}
Hence it is sufficient to show 
$\upsilon(r\sigma) \subseteq \mu \cup \Upsilon(\mu)$.
Let $(f,i) \in \upsilon(r\sigma)$.  There is $p \in \Pos(r\sigma)$
with ${\atpos{r\sigma}{p}} = {f(\seq{t})}$ and $t_i \not\in \NF$.
If $p$ is below some variable position of $r$,
${\atpos{r\sigma}{p}}$ is a subterm of $l\sigma$, and thus
$\upsilon(\atpos{r\sigma}{p}) \subseteq \upsilon(l\sigma) \subseteq \mu$.
Otherwise, $p$ is a non-variable position of~$r$.
We may write $\atpos{r}{p} = f(\seq{r})$ and $r_i\sigma = t_i \not\in \NF$.
Due to Lemma~\ref{l:MUCAP} we obtain $\MUCAP{\mu}{l}{r_i} \neq r_i$. 
Therefore, $(f,i) \in \Upsilon(\mu)$.
\end{proof}

Remark that if $s, t \in \MUTERM{\mu}$ and $p \in \Pos_\mu(s)$ then
$s[t]_p \in \MUTERM{\mu}$.

\begin{lemma}
\label{l:mu-closed}
The following implications hold.
\begin{enumerate}
\item \label{en::mu-closed:1}
If $s \in \MUTERM{\imu}$ and $s \ito t$ then
$t \in \MUTERM{\imu}$.
\item \label{en::mu-closed:2}
If $s \in \MUTERM{\tmu}$ and  $s \to t$ then $t \in \MUTERM{\tmu}$.
\end{enumerate}
\end{lemma}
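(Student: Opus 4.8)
The plan is to prove both implications of Lemma~\ref{l:mu-closed} by combining the preceding structural lemmas, in particular Lemma~\ref{l:Upsilon} together with the fact (remarked just before the statement) that $\MUTERM{\mu}$ is closed under replacing $\mu$-replacing subterms, and the observation that $\MUTERM{\mu}$ is closed under taking subterms. The underlying point in both cases is that $\tmu$ and $\imu$ are fixed points of $\Upsilon$ in the relevant sense: $\tmu \cup \Upsilon(\tmu) = \tmu$ since $\tmu$ is the least fixed point of $\Upsilon$, and for the innermost case one only needs $\Upsilon^{\RS}(\varnothing) = \imu$ together with monotonicity of $\Upsilon$ to control the sets that arise.

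For part~\ref{en::mu-closed:2}), I would argue as follows. Suppose $s \in \MUTERM{\tmu}$ and $s \to t$ via a rule $l \to r \in \RS$ at a position $p$, so that $\atpos{s}{p} = l\sigma$ and $t = s[r\sigma]_p$. First I would check that $p \in \Pos_{\tmu}(s)$: since $s \in \MUTERM{\tmu}$ and $\atpos{s}{p} = l\sigma$ is not a normal form (it is a redex), the defining property of $\MUTERM{\tmu}$ forces $p$ to be $\tmu$-replacing. Next, $l\sigma = \atpos{s}{p}$ is a subterm of $s \in \MUTERM{\tmu}$, hence $l\sigma \in \MUTERM{\tmu}$ because membership in $\MUTERM{\mu}$ is inherited by subterms (this follows directly from the characterisation $\MUTERM{\mu} = \{t \mid \upsilon(t)\subseteq\mu\}$ of Lemma~\ref{l:MUTERM}, since $\upsilon$ is monotone under the subterm relation). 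Applying Lemma~\ref{l:Upsilon} to $l \to r$ and $l\sigma \in \MUTERM{\tmu}$ gives $r\sigma \in \MUTERM{\tmu \cup \Upsilon(\tmu)}$, and since $\tmu$ is the least fixed point of $\Upsilon$ we have $\Upsilon(\tmu) = \tmu$, hence $\tmu \cup \Upsilon(\tmu) = \tmu$ and $r\sigma \in \MUTERM{\tmu}$. Finally, since $p \in \Pos_{\tmu}(s)$ and both $s$ and $r\sigma$ are in $\MUTERM{\tmu}$, the remark preceding the statement yields $t = s[r\sigma]_p \in \MUTERM{\tmu}$.

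For part~\ref{en::mu-closed:1}) the argument is the same but with the extra innermost information, which is actually what makes $\imu = \Upsilon(\varnothing)$ suffice. Suppose $s \in \MUTERM{\imu}$ and $s \ito t$ at position $p$ with rule $l \to r$, so $\atpos{s}{p} = l\sigma$, $t = s[r\sigma]_p$, and all proper subterms of $l\sigma$ are normal forms. As before $p \in \Pos_{\imu}(s)$ and $l\sigma \in \MUTERM{\imu}$. The key additional observation is that because $l\sigma$ is an innermost redex, $\upsilon(l\sigma) = \varnothing$: indeed every $(f,i) \in \upsilon(l\sigma)$ would require an $f$-subterm of $l\sigma$ with a non-normal $i$-th argument, which is impossible since all proper subterms of $l\sigma$ are normal forms. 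Thus $l\sigma \in \MUTERM{\varnothing}$, and Lemma~\ref{l:Upsilon} applied with the replacement map $\varnothing$ gives $r\sigma \in \MUTERM{\varnothing \cup \Upsilon(\varnothing)} = \MUTERM{\imu}$. One more subtlety: Lemma~\ref{l:Upsilon} as stated would more naturally be applied with $\mu = \imu$, giving $r\sigma \in \MUTERM{\imu \cup \Upsilon(\imu)}$, so one needs $\Upsilon(\imu) \subseteq \imu$; alternatively, applying it with $\mu = \varnothing$ as above avoids this, but then to invoke the closure remark one first re-observes $\MUTERM{\varnothing} \subseteq \MUTERM{\imu}$ by monotonicity of $\MUTERM{\cdot}$ in $\mu$. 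Either way, with $r\sigma \in \MUTERM{\imu}$ and $p \in \Pos_{\imu}(s)$, the remark gives $t = s[r\sigma]_p \in \MUTERM{\imu}$.

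The main obstacle is the bookkeeping around exactly which replacement map to feed into Lemma~\ref{l:Upsilon} and how the fixed-point/monotonicity properties of $\Upsilon$ interact with $\imu$ and $\tmu$; in particular one must be careful that $\tmu = \Upsilon(\tmu)$ (least fixed point) is what licenses collapsing $\tmu\cup\Upsilon(\tmu)$ to $\tmu$, and that in the innermost case the correct and cleanest route is to exploit $\upsilon(l\sigma) = \varnothing$ for an innermost redex rather than to reason about $\Upsilon(\imu)$. I would expect the routine parts — that $\MUTERM{\mu}$ is subterm-closed, that a redex position is $\mu$-replacing, and the reassembly step via the preceding remark — to go through immediately from Lemma~\ref{l:MUTERM} and the definitions.
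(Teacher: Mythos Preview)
Your proof is correct and follows essentially the same approach as the paper: apply Lemma~\ref{l:Upsilon} to the redex (with $\mu = \varnothing$ in the innermost case, exploiting that all proper subterms of an innermost redex are normal forms, so $l\sigma \in \MUTERM{\varnothing}$) and then use the closure remark to reinsert the contractum at a $\mu$-replacing position. The paper only spells out part~\ref{en::mu-closed:1} and leaves part~\ref{en::mu-closed:2} to the reader; your explicit use of the fixed-point identity $\Upsilon(\tmu) = \tmu$ to collapse $\tmu \cup \Upsilon(\tmu)$ is precisely the extra ingredient that case needs.
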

\begin{proof}
We show property~\ref{en::mu-closed:1}).
Suppose $s \in \MUTERM{\imu}$ and $s \ito t$ is a rewrite step at $p$.
Due to the definition of innermost rewriting, we have
$\atpos{s}{p} \in \MUTERM{\varnothing}$.  Hence, $\atpos{t}{p} \in \MUTERM{\imu}$ 
is obtained by Lemma~\ref{l:Upsilon}.
Because $s\in \MUTERM{\imu}$ we have $p \in \Pos_\imu(s)$. 
Hence due to $\atpos{t}{p} \in \MUTERM{\imu}$ we conclude $t = s[\atpos{t}{p}]_p \in \MUTERM{\imu}$
due to the above remark.
The proof of~\ref{en::mu-closed:2}) proceeds along the same pattern and is left to the reader.
\end{proof}

We arrive at the main result of this section.

\begin{theorem}
\label{t:mu-inclusion}
Let $\RS$ be a TRS, and let $\Desc(L)$ denote the descendants of
the set of terms $L$. Then $\rsisrew{\RS}(\MUTERM{\varnothing}) \subseteq \MUTERM{\imu}$
and $\rssrew{\RS}(\MUTERM{\varnothing}) \subseteq \MUTERM{\tmu}$.
\end{theorem}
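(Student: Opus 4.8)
The plan is to derive both inclusions from the single-step preservation properties of Lemma~\ref{l:mu-closed} by a routine induction on the length of a derivation. The two halves are symmetric: the first concerns innermost rewriting $\irew{\RS}$ together with Lemma~\ref{l:mu-closed}(\ref{en::mu-closed:1}), the second full rewriting $\to_{\RS}$ together with Lemma~\ref{l:mu-closed}(\ref{en::mu-closed:2}), so it suffices to spell out the first.

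First I would record the base case $\MUTERM{\varnothing} \subseteq \MUTERM{\imu}$, and likewise $\MUTERM{\varnothing} \subseteq \MUTERM{\tmu}$. By Lemma~\ref{l:MUTERM} we have $\MUTERM{\mu} = \{ t \mid \upsilon(t) \subseteq \mu \}$, so the map $\mu \mapsto \MUTERM{\mu}$ is monotone with respect to inclusion of replacement maps; since $\varnothing \subseteq \imu$ and $\varnothing \subseteq \tmu$, both inclusions follow. Next, given $s \in \MUTERM{\varnothing}$ and a derivation $s = s_0 \irew{\RS} s_1 \irew{\RS} \cdots \irew{\RS} s_n = t$, I would show $t \in \MUTERM{\imu}$ by induction on $n$. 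For $n = 0$, $t = s \in \MUTERM{\varnothing} \subseteq \MUTERM{\imu}$. For $n > 0$, the induction hypothesis gives $s_{n-1} \in \MUTERM{\imu}$, and applying Lemma~\ref{l:mu-closed}(\ref{en::mu-closed:1}) to the step $s_{n-1} \irew{\RS} s_n$ yields $s_n \in \MUTERM{\imu}$. This establishes $\rsisrew{\RS}(\MUTERM{\varnothing}) \subseteq \MUTERM{\imu}$; replacing $\irew{\RS}$ throughout by $\to_{\RS}$ and Lemma~\ref{l:mu-closed}(\ref{en::mu-closed:1}) by Lemma~\ref{l:mu-closed}(\ref{en::mu-closed:2}) gives $\rssrew{\RS}(\MUTERM{\varnothing}) \subseteq \MUTERM{\tmu}$.

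I do not anticipate a genuine obstacle here: the substantive work has already been absorbed into the preceding lemmas --- the correctness of the operator $\Upsilon$ (Lemmas~\ref{l:MUCAP} and~\ref{l:Upsilon}), the characterisation of the sets $\MUTERM{\mu}$ in Lemma~\ref{l:MUTERM}, and the one-step closure in Lemma~\ref{l:mu-closed} (which in turn rests on the remarks that subterms of $\mu$-replacing terms are $\mu$-replacing and that $\MUTERM{\mu}$ is closed under replacing a subterm at a $\mu$-replacing position by another $\mu$-replacing term). The only point needing a moment's care is the base case, i.e.\ observing via Lemma~\ref{l:MUTERM} that the starting terms already lie in $\MUTERM{\imu}$ (resp.\ $\MUTERM{\tmu}$) by monotonicity; after that, the argument merely threads Lemma~\ref{l:mu-closed} through the derivation step by step.
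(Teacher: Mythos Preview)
Your proposal is correct and follows essentially the same route as the paper: establish $\MUTERM{\varnothing} \subseteq \MUTERM{\imu}$ (resp.\ $\MUTERM{\tmu}$) and then iterate Lemma~\ref{l:mu-closed} along the derivation. The paper's version is terser --- it simply asserts the base inclusion and says the rest ``follows directly from Lemma~\ref{l:mu-closed}'' --- whereas you spell out the monotonicity argument via Lemma~\ref{l:MUTERM} and the induction on derivation length explicitly.
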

\begin{proof}
Recall that $\Desc(L) \defsym \{t \mid \text{$\exists s \in L$ such that $s \to^\ast t$}\}$. We focus on the second part of the theorem, where we have to
prove that $t \in \MUTERM{\tmu}$, whenever there exists $s \in \MUTERM{\varnothing}$
such that $s \rssrew{\RS} t$. 
As $\MUTERM{\varnothing} \subseteq \MUTERM{\tmu}$
this follows directly from Lemma~\ref{l:mu-closed}.
\end{proof}

Note that $\MUTERM{\varnothing}$ is the set of all argument normalised 
terms.  Therefore, ${\TB} \subseteq {\MUTERM{\varnothing}}$.
The following corollary to Theorem~\ref{t:mu-inclusion} is immediate.
\begin{corollary}
Let $\RS$ be a 
TRS and let $\muto{\imu}$, $\muto{\tmu}$ denote 
the $\imu$-step and $\tmu$-step relation, respectively.  
Then for all terminating terms $t \in \TB$ we have 
$\dheight(t, {\rsirew{\RS}}) \leqslant \dheight(t, {\muto{\imu}})$
and $\dheight(t, {\rsrew{\RS}}) \leqslant \dheight(t, {\muto{\tmu}})$.
\end{corollary}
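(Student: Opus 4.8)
The plan is to derive the corollary directly from Theorem~\ref{t:mu-inclusion} together with the observation, recorded just before the corollary, that ${\TB} \subseteq {\MUTERM{\varnothing}}$. The key point is that for a terminating term $t$, any rewrite step issued from a descendant of $t$ that lies in $\MUTERM{\imu}$ (respectively $\MUTERM{\tmu}$) is in fact a $\imu$-step (respectively a $\tmu$-step), so the entire derivation tree of $t$ under $\rsirew{\RS}$ coincides with its derivation tree under $\muto{\imu}$, and likewise for the full rewrite relation.

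First I would fix a terminating basic term $t \in \TB$ and consider any maximal derivation $A \colon t = s_0 \rsirew{\RS} s_1 \rsirew{\RS} \cdots \rsirew{\RS} s_n$ realising $\card{A} = \dheight(t, {\rsirew{\RS}})$; termination of $t$ guarantees this is well-defined and finite. Since $t \in \TB \subseteq \MUTERM{\varnothing}$, Theorem~\ref{t:mu-inclusion} gives $s_i \in \rsisrew{\RS}(\MUTERM{\varnothing}) \subseteq \MUTERM{\imu}$ for every $i$. Now I claim each step $s_i \rsirew{\RS} s_{i+1}$ is a $\imu$-step: the rewrite position $p$ satisfies $\atpos{s_i}{p} \notin \NF(\RS)$, and since $s_i \in \MUTERM{\imu}$, by definition of $\MUTERM{\imu}$ this forces $p \in \Pos_\imu(s_i)$, i.e.\ the step is in $\muto{\imu}$. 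Hence $A$ is also a derivation with respect to $\muto{\imu}$, whence $\dheight(t, {\rsirew{\RS}}) = \card{A} \leqslant \dheight(t, {\muto{\imu}})$. The second inequality follows by the identical argument, replacing $\rsirew{\RS}$ by $\rsrew{\RS}$, $\imu$ by $\tmu$, and invoking the second inclusion of Theorem~\ref{t:mu-inclusion}; here one should note that $t$ terminating with respect to $\rsrew{\RS}$ is exactly the hypothesis needed for $\dheight(t, {\rsrew{\RS}})$ to be defined.

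There is essentially no obstacle here beyond bookkeeping; the substantive content has already been carried by Lemma~\ref{l:mu-closed} and Theorem~\ref{t:mu-inclusion}. The one subtlety worth stating explicitly is the direction of the inequality: trivially $\muto{\imu} \subseteq \rsirew{\RS}$ gives $\dheight(t, {\muto{\imu}}) \leqslant \dheight(t, {\rsirew{\RS}})$, so in fact equality holds, but the corollary only asserts (and only needs) the converse inequality, which is the one requiring the usable-replacement-map machinery. If desired, one could phrase the proof more slickly by remarking that on the set $\Desc(\TB) \cap \MUTERM{\imu}$ the relations $\rsirew{\RS}$ and $\muto{\imu}$ coincide as subsets of pairs, and that $\Desc(\TB)$ is contained in this set by Theorem~\ref{t:mu-inclusion}; the derivation-height bound is then immediate since every $\rsirew{\RS}$-derivation from $t$ stays inside $\Desc(\TB)$.
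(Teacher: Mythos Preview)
Your main argument is correct and is precisely the unpacking of what the paper means when it declares the corollary ``immediate'' from Theorem~\ref{t:mu-inclusion}: every (innermost) rewrite step from a descendant of $t$ hits a non-normal-form subterm, and membership in $\MUTERM{\imu}$ (respectively $\MUTERM{\tmu}$) forces that position to be $\mu$-replacing, so the whole derivation is a $\mu$-derivation.

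One small inaccuracy in your parenthetical remark: the inclusion $\muto{\imu} \subseteq {\rsirew{\RS}}$ does \emph{not} hold in general, since a $\imu$-step is by definition an unrestricted rewrite step at a $\imu$-replacing position and need not be innermost. Hence the reverse inequality (and the claimed equality) is not trivial in the innermost case; it does hold for full rewriting because ${\muto{\tmu}} \subseteq {\rsrew{\RS}}$. This does not affect your proof of the stated inequality, which is all the corollary asserts.
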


An advantage of the use of context-sensitive rewriting is that the
compatibility requirement of monotone algebra in termination or complexity
analysis is relaxed to $\mu$-monotone 
algebra.
We illustrate its use in the next example.
\begin{example}
\label{ex:1:ua}
Recall the TRS $\RSdiv$ given in Example~\ref{ex:div} above.
The usable argument positions are as follows:
\begin{equation*}
\imu(\m{-}) = \varnothing \quad \imu(\ms) = \imu(\m{\div}) = \{1\} \qquad
\tmu(\ms) = \tmu(\m{-}) = \tmu(\m{\div}) = \{1\} \tpkt
\end{equation*}
Consider the $1$-dimensional RMI $\A$ (i.e., linear polynomial interpretations)
with
\begin{xalignat*}{4}
\m{0}_\A & = 1
& 
\m{s}_\A(x) & = x + 2
&
{-_\A}(x, y) & = x + 1
&
{\div_\A}(x, y) & = 3x
\tpkt
\end{xalignat*}
which is strictly $\imu$-monotone and $\tmu$-monotone.
The rules in $\RSdiv$ are interpreted and ordered as follows.
\begin{alignat*}{6}
& 1\colon\quad &  x + 1 & > x
& \qquad
& 3\colon\quad &      3 & > 1
\\
& 2\colon\quad &  x + 3 & > x + 2
& \qquad
& 4\colon\quad & 3x + 6 & > 3x + 5
\tpkt
\end{alignat*}
Therefore, $\RSdiv \subseteq {>_\A}$ holds.  
By an application of Theorem~\ref{t:rmi} we conclude that the (innermost) 
runtime complexity is \emph{linear}, which is optimal.
\end{example}

We cast the observations in the example into another corollary to
Theorem~\ref{t:mu-inclusion}.

\begin{corollary}
\label{c:mu-inclusion}
Let $\RS$ be a TRS and let $\A$ be a $d$-degree $\imu$-monotone 
(or $\tmu$-monotone) RMI compatible with $\RS$. Then the (innermost) runtime complexity
function $\Rcpareni{\RS}$ with respect to $\RS$ is 
bounded by a $d$-degree polynomial.
\end{corollary}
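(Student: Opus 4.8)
The plan is to bound, for a basic term $t$, the $\RS$-derivation height of $t$ first by its $\muto{\imu}$-derivation height, then by the value $[t]_1$ of the interpretation, and finally by a polynomial of degree $d$ in $\size{t}$. I treat the innermost case; the full-rewriting case is obtained verbatim on replacing $\rsirew{\RS}$, $\muto{\imu}$ and $\imu$ by $\rsrew{\RS}$, $\muto{\tmu}$ and $\tmu$, and using the second halves of Theorem~\ref{t:mu-inclusion} and of the Corollary following it.

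First I would note that $\TB \subseteq \MUTERM{\varnothing}$, since every argument of a basic term is a constructor term, hence a normal form. As a redex is never a normal form, any $\rsirew{\RS}$-step issuing from a term in $\MUTERM{\imu}$ contracts a redex at an $\imu$-replacing position and is thus an $\imu$-step; together with $\rsisrew{\RS}(\MUTERM{\varnothing}) \subseteq \MUTERM{\imu}$ from Theorem~\ref{t:mu-inclusion} this makes every $\rsirew{\RS}$-derivation from a basic term a $\muto{\imu}$-derivation. Next, since $\A$ is $\imu$-monotone and compatible with $\RS$ --- i.e. $\RS \subseteq {\gord{\A}}$ with $\gord{\A}$ being $\imu$-monotone --- every $\imu$-step $s \muto{\imu} t$ satisfies $s \gord{\A} t$, whence ${\muto{\imu}} \subseteq {\gord{\A}}$. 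Because $\gord{\A}$ is induced from the well-founded order $>$ on $\N^d$ it is well-founded, so $\muto{\imu}$ is well-founded, and it is finitely branching since $\RS$ is finite. Consequently every basic term is $\rsirew{\RS}$-terminating, and the Corollary following Theorem~\ref{t:mu-inclusion} gives $\dheight(t, \rsirew{\RS}) \leqslant \dheight(t, \muto{\imu})$ for all $t \in \TB$ (alternatively this inequality is immediate from the preceding observation).

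Now set $\Slow(t) \defsym [t]_1$. By the definition of $>$ on $\N^d$, $s \gord{\A} t$ implies $[s]_1 > [t]_1$, so $\gord{\A}$ is $\Slow$-collapsible on $\muto{\imu}$; applying the Lemma relating $\Slow$-collapsibility to derivation height (stated immediately after Definition~\ref{d:collapsible}) to the relation $\muto{\imu}$ yields $\dheight(t, \muto{\imu}) \leqslant [t]_1$ for every term $t$. Finally, as $\A$ is an RMI of basic degree $d$, the defining property of basic degree (cf. Theorem~\ref{t:rmi}) supplies a constant $c$ with $[t]_1 \leqslant c \cdot \size{t}^d$ for every basic term $t$. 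Chaining the bounds, for $t \in \TB$ with $\size{t} \leqslant n$ we obtain $\dheight(t, \rsirew{\RS}) \leqslant \dheight(t, \muto{\imu}) \leqslant [t]_1 \leqslant c \cdot n^d$, hence $\Rci{\RS}(n) = \comp(n, \TB, \rsirew{\RS}) \leqslant c \cdot n^d$; symmetrically $\Rc{\RS}(n) \leqslant c \cdot n^d$ in the $\tmu$-monotone case. I anticipate no genuine obstacle; the only delicate point is that the cited Corollary presupposes termination of the basic term, so one must first establish well-foundedness of $\muto{\imu}$ (via ${\muto{\imu}} \subseteq {\gord{\A}}$) before invoking it --- which is why the steps above are ordered as they are.
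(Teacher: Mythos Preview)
Your proof is correct and follows essentially the same route as the paper's: both use Theorem~\ref{t:mu-inclusion} to reduce $\rsparenirew{\RS}$-steps from basic terms to $\mu$-steps, then compatibility and $\mu$-monotonicity of $\A$ to embed these in $\gord{\A}$, and finally the (basic) degree bound from Theorem~\ref{t:rmi}. You are in fact more careful than the paper, which glosses over the well-foundedness prerequisite and the restriction to terms reachable from $\MUTERM{\varnothing}$; your explicit verification that $\muto{\imu}$ is well-founded and finitely branching before invoking derivation heights is a genuine improvement in rigor.
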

\begin{proof}
It suffices to consider the case for full rewriting. Let $s$, $t$
be terms such that $s \rsrew{\RS} t$. By the theorem, we have
$s \muto{\tmu} t$. Furthermore, by assumption 
${\RS} \subseteq {\gord{\A}}$ and for any $f \in \FS$, 
$f_\A$ is strictly monotone on all $\tmu(f)$. Thus $s \gord{\A} t$ follows. 
Finally, the corollary follows by application of Theorem~\ref{t:rmi}.
\end{proof}

We link Theorem~\ref{t:mu-inclusion} to related work by Fern\'andez ~\cite{F:2005}.
In~\cite{F:2005} it is shown how context-sensitive rewriting
is used for proving innermost termination.  

\begin{proposition}[\textnormal{\cite{F:2005}}]
\label{c:sin}
A TRS $\RS$ is innermost terminating if $\muto{\imu}$ is terminating.
\end{proposition}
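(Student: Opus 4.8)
The plan is to establish the contrapositive: if $\RS$ is not innermost terminating, then $\muto{\imu}$ is not terminating either. The first and least routine task is to produce an infinite innermost derivation that \emph{starts} from an argument-normalised term, i.e.\ from a term in $\MUTERM{\varnothing}$. To this end I would pick a non-innermost-terminating term $t$ of minimal size. By minimality every proper subterm of $t$ is innermost terminating, so no infinite innermost derivation issuing from $t$ can consist solely of steps strictly below the root: since there are only finitely many argument positions, such a derivation would rewrite some fixed argument position infinitely often, and projecting onto that position yields an infinite innermost derivation of a proper subterm of $t$, contradicting its minimality. Hence some infinite innermost derivation from $t$ performs a root step; taking the first such step, we obtain a term $s$, reachable from $t$ by innermost rewriting, with $s = l\sigma$ for some rule $l \to r \in \RS$ and admitting an infinite innermost derivation $s \irew{\RS} r\sigma \irew{\RS} \cdots$. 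Since the step contracting $s$ is innermost, all proper subterms of $s$ lie in $\NF(\RS)$, that is, $s \in \MUTERM{\varnothing}$.

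Fixing this infinite innermost derivation from $s$, I would then invoke Theorem~\ref{t:mu-inclusion} (using $\MUTERM{\varnothing} \subseteq \MUTERM{\imu}$), or equivalently iterate Lemma~\ref{l:mu-closed}(\ref{en::mu-closed:1}), to conclude that every term occurring along it belongs to $\MUTERM{\imu}$. Now consider an arbitrary step $w \irew{\RS} w'$ of the derivation, contracting a redex at position $p$. The redex $\atpos{w}{p}$ is an instance of a left-hand side, hence not a normal form, so $\atpos{w}{p} \notin \NF(\RS)$; since $w \in \MUTERM{\imu}$, the defining property of $\MUTERM{\imu}$ yields $p \in \Pos_{\imu}(w)$, and therefore $w \muto{\imu} w'$. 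Thus the chosen infinite innermost derivation from $s$ is in fact an infinite $\muto{\imu}$-derivation, contradicting the assumption that $\muto{\imu}$ is terminating.

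The main obstacle is the argument of the first paragraph: reducing the general failure of innermost termination to its failure starting from an argument-normalised term. This is the standard ``minimal non-innermost-terminating term'' technique, and its crux is the observation that an infinite innermost derivation taking place entirely inside one argument restricts to an infinite innermost derivation of that argument, the innermost condition being inherited by subterms. Once such a term $s \in \MUTERM{\varnothing}$ is at hand, the remainder is immediate from Theorem~\ref{t:mu-inclusion} together with the simple fact that inside a $\imu$-replacing term every redex occupies a $\imu$-replacing position.
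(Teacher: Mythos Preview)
Your proof is correct and follows essentially the same approach as the paper's: argue by contraposition, obtain an infinite innermost derivation starting from a term in $\MUTERM{\varnothing}$, and then invoke Lemma~\ref{l:mu-closed} (equivalently Theorem~\ref{t:mu-inclusion}) to see that every step is a $\imu$-step. The only difference is that the paper simply asserts the existence of such a starting term $t_0 \in \MUTERM{\varnothing}$, whereas you supply the standard minimal non-innermost-terminating subterm argument to justify it.
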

\begin{proof}
We show the contraposition.
If $\RS$ is not innermost terminating, there is an infinite sequence
$t_0 \ito t_1 \ito t_2 \ito \cdots$, where $t_0 \in \MUTERM{\varnothing}$.
From Theorem~\ref{t:mu-inclusion} and Lemma~\ref{l:mu-closed} 
we obtain $t_0 \muto{\imu} t_1 \muto{\imu} t_2 \muto{\imu} \cdots$.  
Hence, $\muto{\imu}$ is not terminating.
\end{proof}

One might think that a similar claim holds for full termination
if one uses $\tmu$. The next examples clarifies that this
is not the case.
\begin{example}
Consider the famous Toyama's example $\RS$
\begin{xalignat*}{3}
\m{f}(\m{a},\m{b},x) & \to \m{f}(x,x,x) &
\m{g}(x,y) & \to x &
\m{g}(x,y) & \to y 
\tpkt
\end{xalignat*}
The replacement map $\tmu$ is empty.  Thus, the algebra $\A$ over $\NN$
\begin{xalignat*}{4}
\m{f}_\A(x,y,z) & = \max\{x - y, 0\} &
\m{g}_\A(x,y) & = x + y + 1 &
\m{a}_\A & = 1 &
\m{b}_\A & = 0 
\tpkt
\end{xalignat*}
is $\tmu$-monotone and we have $\RS \subseteq {>_\A}$.  However,
we should not conclude termination of $\RS$, 
because $\m{f}(\m{a},\m{b},\m{g}(\m{a},\m{b}))$ is non-terminating.
\end{example}

\section{Weak Dependency Pairs}
\label{dependency pairs}

In Section~\ref{CSR} we investigated argument positions of rewrite steps.
This section is concerned about contexts surrounding rewrite steps.
Recall the derivation: 
\label{eq:5}
\begin{alignat*}{3}
\boxed{\mathbf{4} \div \mathbf{2}}
&  ~\rsrew{\RSdiv}  \m{s}(\,\boxed{(\mathbf{3} - \mathbf{1}) \div \mathbf{2}}\,) 
&& ~\rsnrew{\RSdiv}{2} \m{s}(\,\boxed{\mathbf{2} \div \mathbf{2}}\,)
\\
&  ~\rsrew{\RSdiv}   \m{s}(\m{s}(\,\boxed{(\mathbf{1} - \mathbf{1}) \div \mathbf{2}}\,))
&& ~\rsnrew{\RSdiv}{2} \m{s}(\m{s}(\,\boxed{\m{0} \div \mathbf{2}}\,))
\\
&  ~\rsrew{\RSdiv}   \m{s}(\m{s}(\m{0})) \tkom
\end{alignat*}
where we boxed outermost occurrences of defined symbols.  
Obviously, their surrounding contexts are not rewritten.
Here an idea is to simulate rewrite steps from basic terms
with new rewrite rules, obtained by dropping unnecessary contexts.
In termination analysis this method is known as the dependency pair 
method~\cite{ArtsGiesl:2000}.  We recast its main
ingredient called dependency pairs.

Let $X$ be a set of symbols.
We write $\SC{t_1, \ldots, t_n}_X$ to denote $C[t_1,\ldots,t_n]$,
whenever $\rt(t_i) \in X$ for all $1 \leqslant i \leqslant n$
and $C$ is an $n$-hole context containing no $X$-symbols.  
(Note that the context $C$ may be degenerate and doesn't contain a hole $\ctx$
or it may be that $C$ is a hole.)
Then, every term $t$ can be uniquely written in
the form $\SC{t_1, \ldots, t_n}_X$.

\begin{lemma} \label{l:1}
Let $t$ be a terminating term, and let $\sigma$ be a substitution.
Then $\dheight(t\sigma, \to_\RS) = \sum_{1 \leqslant i \leqslant n} \dheight(t_i\sigma, \rsrew{\RS})$,
whenever $t = \SC{t_1, \ldots, t_n}_{\DD \cup \VV}$.
\end{lemma}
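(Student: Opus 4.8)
The plan is to prove the identity by induction on the term structure of $t$, exploiting the decomposition $t = \SC{t_1, \ldots, t_n}_{\DD \cup \VV}$. The key observation is that since the context $C$ surrounding the $t_i$ contains no symbols from $\DD \cup \VV$ — that is, $C$ is built purely from constructor symbols — no rewrite step of $\RS$ can ever take place at a position of $C$ itself or "across" the boundary between $C$ and the $t_i$. Indeed, any redex in $\RS$ has a defined symbol at its root, so every redex occurring in $t\sigma$ must lie at a position inside one of the subterms $t_i\sigma$ (or inside the substitution part, but since $t_i$ has root in $\DD\cup\VV$, substitution instances are subsumed). Consequently the set of positions reachable by rewriting decomposes as a disjoint union over the $n$ subterms, and rewriting in $t\sigma$ is exactly the interleaving of independent rewrite sequences in the $t_1\sigma, \ldots, t_n\sigma$.

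First I would handle the base cases. If $t$ is a variable $x$, then $n = 1$ (or $n=0$ if one adopts that convention, but here $\rt(x)\in\VV$ so $n=1$, $C = \ctx$, $t_1 = x$), and the statement is the trivial identity $\dheight(x\sigma,\to_\RS) = \dheight(x\sigma, \rsrew{\RS})$. If $t = f(\seq{t})$ with $f \in \DD$, then again $n = 1$, $C$ is a hole, $t_1 = t$, and the identity is again trivial. The interesting inductive case is $t = f(\seq{t})$ with $f \in \CS$ a constructor: here each argument $t_j$ decomposes as $\SC{\ldots}_{\DD\cup\VV}$ with its own list of maximal $\DD\cup\VV$-rooted subterms, the concatenation of these lists over $j=1,\ldots,\text{ar}(f)$ yields the list $t_1,\ldots,t_n$ for $t$, and $\dheight(t\sigma,\to_\RS) = \sum_j \dheight(t_j\sigma,\to_\RS)$ by the following core claim applied inductively.

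The core claim, which I would state and prove as an auxiliary fact, is: for a constructor symbol $f$ and terminating terms $u_1,\ldots,u_m$, we have $\dheight(f(u_1,\ldots,u_m),\to_\RS) = \sum_{j=1}^m \dheight(u_j,\to_\RS)$. The inequality $\geqslant$ is easy: interleave maximal derivations of the $u_j$ into a derivation of $f(u_1,\ldots,u_m)$ of the summed length. For $\leqslant$, consider any derivation $A\colon f(u_1,\ldots,u_m) \to^k v$; since $f\in\CS$ no step is at the root, so every step occurs strictly below the root, i.e., inside exactly one argument. Projecting $A$ onto each argument position $j$ gives a derivation $u_j \to^{k_j} v_j$ with $k = \sum_j k_j$ and hence $k_j \leqslant \dheight(u_j,\to_\RS)$, so $k \leqslant \sum_j \dheight(u_j,\to_\RS)$. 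Taking the maximum over $A$ gives $\leqslant$. Finiteness of all quantities follows from the termination hypothesis together with the fact that $\RS$ is finite, hence $\to_\RS$ is finitely branching, so $\dheight(\cdot,\to_\RS)$ is well-defined on terminating terms.

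The main obstacle I anticipate is making the "projection of a derivation onto an argument position" argument fully rigorous — formally, that a rewrite sequence below a non-defined root symbol factors uniquely as an interleaving of independent sequences in the arguments. This is intuitively clear and standard (it is essentially the observation that rewriting is a congruence and that disjoint redexes commute), but a careful write-up must track positions: each step's position has the form $j\cdot q$ for a unique $j$, the subsequence of steps with a fixed $j$ forms a valid derivation on $u_j$ once the common prefix is stripped, and these subsequences are independent because the argument subterms occupy disjoint position cones. Once this is established, the whole lemma follows by a routine induction on $t$ as sketched above, with the substitution $\sigma$ playing no essential role beyond being carried along, since the decomposition $t = \SC{t_1,\ldots,t_n}_{\DD\cup\VV}$ is preserved under instantiation in the sense that $t\sigma = C\sigma[t_1\sigma,\ldots,t_n\sigma]$ and $C\sigma$ still contains no defined symbols (as $C$ is a constructor context and substitution only inserts terms under variables, which were already counted among the $t_i$).
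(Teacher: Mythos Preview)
Your argument is correct. The paper states this lemma without proof, treating it as an easy observation; your structural induction on $t$ together with the core claim that $\dheight(f(u_1,\ldots,u_m),\to_\RS) = \sum_j \dheight(u_j,\to_\RS)$ for constructor symbols $f$ is exactly the natural way to fill in the details, and each step you sketch goes through. One small remark: since the context $C$ in $\SC{t_1,\ldots,t_n}_{\DD\cup\VV}$ contains no variables (variables belong to $\DD\cup\VV$), you actually have $C\sigma = C$, so your closing discussion about $C\sigma$ can be simplified; and the termination hypothesis should really be read as termination of $t\sigma$ (as it is used in the proof of Lemma~\ref{l:2}), which is what you implicitly assume when invoking well-definedness of $\dheight$.
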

The idea is to replace such a $n$-hole context with a fresh $n$-ary function 
symbol.
We define the function $\COM$ as a mapping from tuples of terms to terms 
as follows: $\COM(\seq{t})$ is $t_1$ if $n = 1$, and
$c(t_1,\ldots,t_n)$ otherwise.
Here $c$ is a fresh $n$-ary function symbol called \emph{compound symbol}.
The above lemma motivates the next definition of 
\emph{weak dependency pairs}. 

\begin{definition} \label{d:WDP}
Let $t$ be a term. We set $t^\sharp \defsym t$ if $t \in \VV$, and 
$t^\sharp \defsym f^\sharp(t_1,\dots,t_n)$ if $t = f(\seq{t})$.
Here $f^\sharp$ is a new $n$-ary function symbol called 
\emph{dependency pair symbol}. For a signature
$\FS$, we define $\FS^\sharp = \FS \cup \{f^\sharp \mid f\in \FS\}$.
Let $\RS$ be a TRS. 
If $l \rew r \in \RS$ and $r = \SC{\seq{u}}_{\DD \cup \VV}$ then 
the rewrite rule $l^\sharp \to \COM(u_1^\sharp,\ldots,u_n^\sharp)$ is called
a \emph{weak dependency pair} of $\RS$.
The set of all weak dependency pairs is denoted by $\WDP(\RS)$.
\end{definition}
While dependency pair symbols are defined with respect to $\WDP(\RS)$,
these symbols are not defined with respect to the original system $\RS$. In the
sequel defined symbols refer to the defined function symbols of $\RS$.

\begin{example}[continued from Example~\ref{ex:1}] \label{ex:1:WDP}
The set $\WDP(\RSdiv)$ consists of the next four weak dependency pairs:
\begin{alignat*}{4}
5\colon &\;& x -^\sharp \m{0} & \to x & \qquad
7\colon &\;& \m{0}  \div^\sharp \m{s}(y) &\to \m{c} \\
6\colon && \m{s}(x) -^\sharp \m{s}(y) & \to x -^\sharp y & \qquad
8\colon && \m{s}(x) \div^\sharp \m{s}(y) &\to (x - y) \div^\sharp \m{s}(y)
\tpkt
\end{alignat*}
Here $\m{c}$ denotes a fresh compound symbols of arity $0$.
\end{example}

The derivation on page~\pageref{eq:5} corresponds to the derivation of 
$\WDP(\RSdiv) \cup \RSdiv$:%
\label{eq:7}%
\begin{alignat*}{3}
\mathbf{4} \div^\sharp \mathbf{2} 
&  ~\rsrew{\WDP(\RSdiv)}~ (\mathbf{3} - \mathbf{1}) \div^\sharp \mathbf{2}
&& ~\rsnrew{\RSdiv}{2}  \mathbf{2} \div^\sharp \mathbf{2}
\\
&  ~\rsrew{\WDP(\RSdiv)}~ (\mathbf{1} - \mathbf{1}) \div^\sharp \mathbf{2}
&& ~\rsnrew{\RSdiv}{2}       \m{0} \div^\sharp \mathbf{2}
\\
&  ~\rsrew{\WDP(\RSdiv)}~ \m{c}
\tkom
\end{alignat*}
which preserves the length.  The next lemma states that this is
generally true.
                
\begin{lemma} \label{l:2}
Let $t \in \TA(\FS,\VS)$ be a terminating term with defined root.
Then we obtain:
$\dheight(t,\rsrew{\RS})=\dheight(t^\sharp, \rsrew{\WDP(\RS) \cup \RS})$.
\end{lemma}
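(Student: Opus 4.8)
The plan is to prove the two inequalities simultaneously by induction on $\ell := \dheight(t,\to_{\RS})$, which is well-defined because $t$ is terminating and $\RS$ -- hence also $\WDP(\RS)\cup\RS$ -- is finite, so finitely branching. Two elementary observations do most of the work. First, since every weak dependency pair carries a dependency pair symbol at the root of its left-hand side, and $\RS$-steps never create such symbols, on any term free of dependency pair symbols (and on all its reducts) the relations $\rsrew{\RS}$ and $\rsrew{\WDP(\RS)\cup\RS}$ coincide; in particular they have the same derivation heights there. Second, a compound symbol $c$ is not a defined symbol of $\WDP(\RS)\cup\RS$, so every redex in a reduct of $\COM(s_1,\dots,s_n)$ lies properly inside one of the (pairwise parallel) arguments $s_i$, whence $\dheight(\COM(s_1,\dots,s_n),\rsrew{\WDP(\RS)\cup\RS}) = \sum_{i} \dheight(s_i,\rsrew{\WDP(\RS)\cup\RS})$ -- the same bookkeeping that underlies Lemma~\ref{l:1}.

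In the base case $\ell = 0$ the term $t = f(\seq{t})$ is an $\RS$-normal form; then $t^{\sharp} = f^{\sharp}(\seq{t})$ is a normal form of $\WDP(\RS)\cup\RS$, since its proper subterms are dependency-pair-symbol-free $\RS$-normal forms and a root step would force $f(\seq{t})$ to match a left-hand side of $\RS$. For the induction step I analyse the possible first rewrite steps on both sides. A step $t \to_{\RS} v$ below the root leaves the defined root symbol $f$ intact, so $v$ again has defined root and $\dheight(v,\to_{\RS}) < \ell$; the very same step lifts to $t^{\sharp} \to_{\RS} v^{\sharp}$ inside an argument, and conversely every first step from $t^{\sharp}$ that is not at the root is of this form (no weak dependency pair can apply inside a dependency-pair-symbol-free argument). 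Hence the induction hypothesis applied to $v$ settles this case.

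For a root step $t = l\sigma \to_{\RS} r\sigma$ with $l\to r \in \RS$ and $r = \SC{u_1,\dots,u_n}_{\DD\cup\VV}$, the corresponding weak dependency pair $l^{\sharp} \to \COM(u_1^{\sharp},\dots,u_n^{\sharp})$ gives $t^{\sharp} \to_{\WDP(\RS)} \COM(u_1^{\sharp}\sigma,\dots,u_n^{\sharp}\sigma)$, and this is the only kind of root step available from $t^{\sharp}$. By Lemma~\ref{l:1}, $\dheight(r\sigma,\rsrew{\RS}) = \sum_{i}\dheight(u_i\sigma,\rsrew{\RS})$, and since $t \to_{\RS} r\sigma$ each summand is at most $\ell - 1$. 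For each $i$: if $u_i$ is a variable then $u_i^{\sharp}\sigma = u_i\sigma$ is dependency-pair-symbol-free, so by the first observation it has the same derivation height in both systems; if $\rt(u_i) = g \in \DD$ then $u_i^{\sharp}\sigma = (u_i\sigma)^{\sharp}$, the term $u_i\sigma$ has defined root and $\dheight(u_i\sigma,\to_{\RS}) < \ell$, so the induction hypothesis applies. Together with the second observation this yields $\dheight(\COM(u_1^{\sharp}\sigma,\dots,u_n^{\sharp}\sigma),\rsrew{\WDP(\RS)\cup\RS}) = \sum_{i}\dheight(u_i\sigma,\rsrew{\RS}) = \dheight(r\sigma,\rsrew{\RS})$. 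Since this matching of first steps is exhaustive on both sides, taking suprema gives $\dheight(t,\rsrew{\RS}) = \dheight(t^{\sharp},\rsrew{\WDP(\RS)\cup\RS})$, and in passing we obtain that $t^{\sharp}$ is $\WDP(\RS)\cup\RS$-terminating.

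The main obstacle is the interplay of the fresh symbols with the induction measure. One has to verify carefully that compound and dependency pair symbols behave inertly -- this is what makes the two observations above hold -- and, more subtly, that the recursion terminates: splitting $r\sigma$ at its defined and variable positions can produce pieces whose derivation height equals that of $r\sigma$, but the inequality $\dheight(r\sigma,\to_{\RS}) \leqslant \ell - 1$, coming from the single step $t \to_{\RS} r\sigma$, is exactly what lets the induction hypothesis fire on every piece.
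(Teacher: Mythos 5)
Your proof is correct and follows essentially the same route as the paper: induction on $\dheight(t,\rsrew{\RS})$, splitting on root versus non-root steps, matching a root step $t=l\sigma\to r\sigma$ with the corresponding weak dependency pair step, and handling the pieces $u_i$ via Lemma~\ref{l:1} together with the variable/defined-root case distinction. The only packaging difference is that you obtain both inequalities in one induction by matching first steps (with the inertness of compound symbols made explicit), whereas the paper proves $\leqslant$ by this induction and sketches $\geqslant$ by a separate induction on $\dheight(t^{\sharp},\rsrew{\WDP(\RS)\cup\RS})$.
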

\begin{proof}
We show 
$\dheight(t, \rsrew{\RS}) \leqslant
 \dheight(t^\sharp, \rsrew{\WDP(\RS) \cup \RS})$ by 
induction on $\dheight(t, {\rsrew{\RS}})$.
Let $\ell = \dheight(t, {\rsrew{\RS}})$.
If $\ell = 0$, the inequality is trivial.
Suppose $\ell > 0$.  Then there exists a term $u$ such that
$t \rsrew{\RS} u$ and $\dheight(u, \rsrew{\RS}) = \ell - 1$.
We distinguish two cases depending on the rewrite position $p$.
\begin{enumerate}
\item
If $p$ is a position below the root, then clearly $\rt(u) = \rt(t) \in \DD$ and
$t^\sharp \rsrew{\RS} u^\sharp$.  
%The induction hypothesis yields
Induction hypothesis yields
$\dheight(u, {\rsrew{\RS}}) \leqslant
 \dheight(u^\sharp, {\rsrew{\WDP(\RS) \cup \RS}})$,
and we obtain
$\ell \leqslant \dheight(t^\sharp, \rsrew{\WDP(\RS) \cup \RS})$.
\item
If $p$ is a root position, then there exist a rewrite rule $l \to r \in \RS$ and
a substitution $\sigma$ such that $t = l\sigma$ and $u = r\sigma$.
There exists a context $C$ such that $r = \SC{\seq{u}}_{\DD \cup \VV}$ and thus by definition
$l^\sharp \to \COM(u_1^\sharp,\ldots,u_n^\sharp) \in \WDP(\RS)$ such that
$t^\sharp = l^\sharp\sigma$. Now, either $u_i \in \VV$ or $\rt(u_i) \in \DD$ 
for every $1 \leqslant i \leqslant n$.
Suppose $u_i \in \VV$. Then $u_i^\sharp\sigma = u_i\sigma$ and clearly no
dependency pair symbol can occur and thus,
\begin{equation*}
\dheight(u_i\sigma, \rsrew{\RS})
= \dheight(u_i^\sharp\sigma, \rsrew{\RS})
= \dheight(u_i^\sharp\sigma, \rsrew{\WDP(\RS) \cup \RS}) \tpkt
\end{equation*}
Otherwise, if $\rt(u_i) \in \DD$ then $u_i^\sharp\sigma = (u_i\sigma)^\sharp$.
Hence $\dheight(u_i\sigma, \rsrew{\RS}) \leqslant \dheight(u, \rsrew{\RS}) < \ell$,
and we conclude
$\dheight(u_i\sigma, \rsrew{\RS}) \leqslant
 \dheight(u_i^\sharp\sigma, \rsrew{\WDP(\RS) \cup \RS})$
from the induction hypothesis. Therefore,
\begin{align*}
\ell 
& = \dheight(u, \rsrew{\RS}) + 1
\\
& = \sum_{1 \leqslant i \leqslant n} \dheight(u_i\sigma, \rsrew{\RS}) + 1
  \leqslant \sum_{1 \leqslant i \leqslant n} \dheight(u_i^\sharp\sigma, \rsrew{\WDP(\RS) \cup \RS}) + 1\\
& = \dheight(\COM(u_1^\sharp,\ldots,u_n^\sharp)\sigma, 
                     \rsrew{\WDP(\RS) \cup \RS}) + 1
 \leqslant \dheight(t^\sharp, \rsrew{\WDP(\RS) \cup \RS}) \tpkt
\end{align*}
Here we used Lemma~\ref{l:1} for the second equality.
\end{enumerate}
Note that $t$ is $\RS$-reducible if and only if $t^\sharp$ is
$\WDP(\RS)\cup\RS$-reducible. Hence as $t$ is terminating, 
$t^\sharp$ is terminating on $\rsrew{\WDP(\RS)\cup\RS}$. Thus, similarly, 
$\dheight(t, \rsrew{\RS}) \geqslant
 \dheight(t^\sharp, \rsrew{\WDP(\RS) \cup \RS})$ 
is shown by induction on $\dheight(t^{\sharp}, \rsrew{\WDP(\RS)\cup\RS})$.
\end{proof}

In the case of innermost rewriting we need not include collapsing dependency
pairs as in Definition~\ref{d:WDP}. This is guaranteed by the next lemma.
\begin{lemma} 
Let $t$ be a terminating term and $\sigma$ a substitution such that $x\sigma$ is a normal form of $\RS$ 
for all $x \in \Var(t)$.
Then $\dheight(t\sigma, \rsrew{\RS}) =  \sum_{1 \leqslant i \leqslant n} \dheight(t_i\sigma, \rsrew{\RS})$,
whenever $t = \SC{t_1, \ldots, t_n}_\DD$.
\end{lemma}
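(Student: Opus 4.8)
The plan is to mirror the proof of Lemma~\ref{l:1} (the full-rewriting version), using the stronger hypothesis on $\sigma$ to dispense with the variable case in the context decomposition. First I would fix the decomposition $t = \SC{t_1,\dots,t_n}_{\DD}$, so that the outermost context $C$ contains no defined symbols and each $t_i$ has defined root. Note that, since $x\sigma \in \NF(\RS)$ for all $x \in \Var(t)$ and $C$ is built entirely from constructor symbols, every redex occurring in a derivation from $t\sigma$ must lie strictly inside one of the substituted subterms $t_i\sigma$: a redex cannot occur at a position of $C$ (those subterms are constructor terms with normal-form leaves), and it cannot occur at a variable position of $t$ (those are already normal forms). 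Hence every rewrite step in a derivation issuing from $t\sigma$ takes place within exactly one of the $n$ pairwise-disjoint subterms $t_i\sigma$.

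From this observation the equality follows by the standard interleaving argument. For the inequality $\dheight(t\sigma,\rsrew{\RS}) \geqslant \sum_i \dheight(t_i\sigma,\rsrew{\RS})$, I would concatenate maximal derivations of the individual $t_i\sigma$: since the subterms occupy disjoint positions and $C$ is never rewritten, these derivations can be performed one after another inside $t\sigma$ without interference, giving a derivation of the claimed length. For the reverse inequality, I would argue that any derivation $A\colon t\sigma \rss u$ projects, by the disjointness just established, onto derivations $A_i$ of the $t_i\sigma$ with $\card{A} = \sum_i \card{A_i} \leqslant \sum_i \dheight(t_i\sigma,\rsrew{\RS})$; here termination of $t$ guarantees each $\dheight(t_i\sigma,\rsrew{\RS})$ is finite, so the sum is well-defined and the bound applies. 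Taking the maximum over all $A$ yields $\dheight(t\sigma,\rsrew{\RS}) \leqslant \sum_i \dheight(t_i\sigma,\rsrew{\RS})$.

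The one point requiring a little care — and the main obstacle — is justifying that \emph{no} rewrite step can ever occur at a position of $C$ or at a variable position, not merely at the start of the derivation but throughout. This is where the normal-form hypothesis on $\sigma$ is essential: it ensures the constructor context $C[x_1\sigma,\dots]$ stays inert, so no step inside a $t_i\sigma$ can create a redex spanning into $C$, and the decomposition $t\sigma = C[t_1\sigma,\dots,t_n\sigma]$ with the $t_i\sigma$ at fixed disjoint positions is preserved by $\rsrew{\RS}$. Once this invariant is in place the rest is the routine disjoint-subterm bookkeeping, exactly as in Lemma~\ref{l:1}, and I would simply remark that it is analogous rather than repeat it.
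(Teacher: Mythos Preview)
The paper does not provide a proof for this lemma (nor for the companion Lemma~\ref{l:1}), treating both as routine; your argument correctly supplies the omitted details and is precisely the proof the authors intend the reader to fill in. The key invariant---that every redex in any descendant of $t\sigma$ lies inside one of the (evolved) $t_i\sigma$, because $C$ consists only of constructors and variables whose $\sigma$-images are normal forms---is identified and justified, and the interleaving bookkeeping is standard.
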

\begin{definition}
Let $\RS$ be a TRS. 
If $l \rew r \in \RS$ and $r = \SC{\seq{u}}_\DD$ then 
the rewrite rule $l^\sharp \to \COM(u_1^\sharp,\ldots,u_n^\sharp)$
is called a \emph{weak innermost dependency pair} of $\RS$.
The set of all weak innermost dependency pairs is denoted by $\WIDP(\RS)$.
\end{definition}

\begin{example}[continued from Example~\ref{ex:1}] \label{ex:1:WIDP}
The set $\WIDP(\RSdiv)$ consists of the next three weak innermost dependency pairs
(with respect to $\ito$):
\begin{alignat*}{4}
&\;& \m{s}(x) -^\sharp \m{s}(y) & \to x -^\sharp y & 
&\;& \m{0}  \div^\sharp \m{s}(y) & \to \m{c} \\
&& \m{s}(x) \div^\sharp \m{s}(y) &\to (x - y) \div^\sharp \m{s}(y) 
\tpkt
& \qquad &&&
\end{alignat*}
\end{example}

The next lemma adapts Lemma~\ref{l:2} to innermost rewriting.
\begin{lemma} \label{l:3}
Let $t$ be an innermost terminating term in $\TA(\FS,\VS)$ with $\rt(t) \in \DD$.
We have $\dheight(t, \irew{\RS}) = \dheight(t^\sharp, \irew{\WIDP(\RS) \cup \RS})$.
\end{lemma}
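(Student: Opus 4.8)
The plan is to adapt the proof of Lemma~\ref{l:2} to the innermost setting, proving the two inequalities separately by induction on the respective derivation heights. The new ingredient is a pair of simple observations about innermost rewriting that together justify working with the non-collapsing pairs of $\WIDP(\RS)$: whenever an innermost step takes place at the root of a term $l\sigma$, the substitution $\sigma$ maps every variable of $l$ to an $\RS$-normal form; and all proper subterms of an innermost redex---being ordinary $\FS$-terms, free of dependency pair and compound symbols, whereas every left-hand side of $\WIDP(\RS)$ is rooted by a dependency pair symbol and compound symbols occur in no left-hand side at all---are also normal forms of $\WIDP(\RS) \cup \RS$. In particular, replacing the root symbol of a term by its sharped version never destroys the innermost property of a step performed strictly below the root.

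For $\dheight(t,\irew{\RS}) \leqslant \dheight(t^\sharp,\irew{\WIDP(\RS)\cup\RS})$ I would induct on $\ell = \dheight(t,\irew{\RS})$, the base case being trivial. For $\ell > 0$ choose $u$ with $t \irew{\RS} u$ and $\dheight(u,\irew{\RS}) = \ell - 1$ and distinguish the rewrite position $p$. If $p$ is below the root, then $\rt(u) = \rt(t) \in \DD$, the step lifts to an innermost step $t^\sharp \irew{\WIDP(\RS)\cup\RS} u^\sharp$ by the observation above, and the induction hypothesis applies. If $p$ is the root, write $t = l\sigma$, $u = r\sigma$ with $l \to r \in \RS$ and $r = \SC{u_1,\dots,u_n}_\DD$; then $l^\sharp \to \COM(u_1^\sharp,\dots,u_n^\sharp) \in \WIDP(\RS)$, and $t^\sharp = l^\sharp\sigma \irew{\WIDP(\RS)\cup\RS} \COM(u_1^\sharp,\dots,u_n^\sharp)\sigma$ is again an innermost step. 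For each $i$, either $u_i \in \VV$, so that $u_i\sigma = u_i^\sharp\sigma$ is a normal form contributing $0$ on both sides, or $\rt(u_i) \in \DD$, so that $u_i^\sharp\sigma = (u_i\sigma)^\sharp$ and $\dheight(u_i\sigma,\irew{\RS}) \leqslant \dheight(u,\irew{\RS}) < \ell$, whence the induction hypothesis gives $\dheight(u_i\sigma,\irew{\RS}) \leqslant \dheight(u_i^\sharp\sigma,\irew{\WIDP(\RS)\cup\RS})$. Applying the splitting lemma for argument-normalised terms (the one preceding the definition of weak innermost dependency pairs, which applies equally to $\irew{\RS}$ since the separating context carries no defined symbol and the variable instances are normal forms) to both $r\sigma$ and $\COM(u_1^\sharp,\dots,u_n^\sharp)\sigma$ and summing yields $\ell \leqslant \dheight(t^\sharp,\irew{\WIDP(\RS)\cup\RS})$.

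For the converse I would first observe that $t^\sharp$ is innermost terminating with respect to $\WIDP(\RS)\cup\RS$: any infinite innermost derivation from $t^\sharp$ projects, step for step, to an infinite innermost $\RS$-derivation from $t$ (a $\WIDP(\RS)$-step at a dependency pair root mirroring an $\RS$-step at the root, an $\RS$-step mirroring itself), contradicting the hypothesis. Hence $\dheight(t^\sharp,\irew{\WIDP(\RS)\cup\RS})$ is defined, and $\dheight(t,\irew{\RS}) \geqslant \dheight(t^\sharp,\irew{\WIDP(\RS)\cup\RS})$ follows by induction on the latter with the symmetric case analysis: a step below the root of $t^\sharp$ is necessarily an $\RS$-step mirroring an innermost $\RS$-step from $t$, while a step at the root is a $\WIDP(\RS)$-step mirroring an innermost root step $l\sigma \irew{\RS} r\sigma$ of $\RS$, after which the splitting lemma and the induction hypothesis close the estimate exactly as above.

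The arithmetic is word for word that of Lemma~\ref{l:2}; the only real work---and the place where a slip is most likely to hide---is the innermost bookkeeping sketched in the first paragraph: checking at each step that the innermost discipline survives the passage between $\RS$ and $\WIDP(\RS)\cup\RS$, that $\RS$-normal forms stay normal forms once the $\WIDP(\RS)$-rules are added, and that the hypothesis of the splitting lemma---left-hand-side variables bound to normal forms---is met at every invocation. It is precisely this last point, which holds only because reduction is innermost, that permits $\SC{t_1,\dots,t_n}_\DD$ in place of $\SC{t_1,\dots,t_n}_{\DD\cup\VV}$ and hence the omission of the collapsing pairs of Definition~\ref{d:WDP}.
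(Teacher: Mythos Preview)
Your proposal is correct and follows exactly the route the paper intends: the paper gives no explicit proof of this lemma, merely remarking that it ``adapts Lemma~\ref{l:2} to innermost rewriting,'' and your write-up is precisely that adaptation, with the innermost bookkeeping (normal-form substitutions, preservation of the innermost property under sharping, and the splitting lemma for $\SC{\cdots}_\DD$) spelled out carefully. One cosmetic remark: in the root-step case the alternative ``$u_i \in \VV$'' is vacuous once you have written $r = \SC{u_1,\dots,u_n}_\DD$, so you can drop it.
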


Looking at the simulated version of the derivation on page~\pageref{eq:5},
%\begin{alignat*}{2}
%\mathbf{4} \div^\sharp \mathbf{2}
%&  \to_{\WDP(\RS)} (\mathbf{3} - \mathbf{1}) \div^\sharp \mathbf{2}
%&& \to_{\{1,2\}}^* \mathbf{2} \div^\sharp \mathbf{2} 
%\\
%&  \to_{\WDP(\RS)} (\mathbf{1} - \mathbf{1}) \div^\sharp \mathbf{2}
%&& \to_{\{1,2\}}^* \m{0} \div^\sharp \mathbf{2} 
%\\
%& \to_{\WDP(\RS)} \m{c}
%\end{alignat*}
rules 1 and 2 are used, but neither rule 3 nor 4 is used in the $\RS$-steps.
In general we can approximate a subsystem of a TRS that can be used in 
derivations from basic terms, by employing 
the notion of usable rules in the dependency pair method 
(cf.~\cite{ArtsGiesl:2000,GTSF06,HirokawaMiddeldorp:2007}).

\begin{definition}
We write ${f} \depends {g}$ if there exists a rewrite rule
$l \to r \in \RS$ such that $f = \rt(l)$ and $g$ is a defined function
symbol in $\Fun(r)$. 
For a set $\GG$ of defined function symbols we
denote by $\RS{\restriction}\GG$ the set of
rewrite rules $l \to r \in \RS$ with $\rt(l) \in \GG$. 
The set $\UU(t)$ of \emph{usable rules} of a term $t$ is defined as
$\RS{\restriction}\{ g \mid \text{${f} \depends^* {g}$ for some $f \in \Fun(t)$} \}$.
Finally, if $\PP$ is a set of (weak) dependency pairs
then $\UU(\PP) = \bigcup_{l \to r \in \PP} \UU(r)$.
\end{definition}

\begin{example}[continued from Examples~\ref{ex:1:WDP} and~\ref{ex:1:WIDP}]
\label{ex:1:usable}
The set $\UU(\WDP(\RSdiv))$ of usable rules for the weak dependency pairs 
consists of the two rules:
\begin{alignat*}{4}
1\colon &\;&  x - \m{0} & \to x & \qquad
2\colon &\;&  \m{s}(x) - \m{s}(y) &\to x -y 
\tpkt
\end{alignat*}
Note that we have that $\UU(\WDP(\RSdiv)) = \UU(\WIDP(\RSdiv))$.
\end{example}

We show a usable rule criterion for complexity analysis 
by exploiting the property that the starting terms are basic.
Recall that $\TB$ denotes the set of basic terms;
we set $\TBS = \{t^{\sharp} \mid t \in \TB \}$.

\begin{lemma} \label{l:5}
Let $\PP$ be a set of weak dependency pairs and let 
$(t_i)_{i = 0, 1, \ldots}$ be a (finite or infinite) derivation of
$\PP \cup \RS$. If $t_0 \in \TBS$ then $(t_i)_{i = 0, 1, \ldots}$ is 
a derivation of $\PP \cup \UU(\PP)$.
\end{lemma}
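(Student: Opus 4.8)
The plan is to induct on the length of the derivation $t_0 \to t_1 \to \cdots$ and to prove a strengthened claim: not only that each step uses a rule from $\PP \cup \UU(\PP)$, but also that every $t_i$ decomposes as $\SC{s_1,\dots,s_m}_{\DD^\sharp \cup \DD \cup \VV}$ where each subterm $s_j$ rooted at a defined or dependency-pair symbol has all its usable rules already contained in $\UU(\PP)$ — in other words, that the set of ``reachable'' defined symbols never grows beyond what $\UU(\PP)$ accounts for. The base case is immediate: $t_0 = f^\sharp(u_1,\dots,u_n) \in \TBS$ with each $u_i \in \TA(\CS,\VV)$, so the only defined symbol occurring is $f^\sharp$, and after the first $\PP$-step (which must be applicable at the root, since basic terms have only constructors below the root) we land in the right-hand side instance of a weak dependency pair, whose defined symbols are by definition covered by $\UU(\PP)$.

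For the inductive step, suppose $t_i$ has the claimed form and $t_i \to t_{i+1}$ via a rule $\ell \to r$ at position $p$. First I would argue that this rule lies in $\PP \cup \UU(\PP)$: the redex $\atpos{t_i}{p}$ is rooted by some symbol $g$, and since $t_i$ arose from $t_0 \in \TBS$ the position $p$ either lies at the root of one of the distinguished subterms $s_j$ (so $g = \rt(\ell)$ is a dependency-pair symbol and $\ell \to r \in \PP$, using that $\PP$ consists of rules with dependency-pair symbols at the root and no such symbols occur below, matching the $\SC{\cdots}_{\DD^\sharp}$-structure), or $p$ lies strictly inside some $s_j$, in which case $g$ is an ordinary defined symbol reachable from the root symbol of $s_j$ via $\depends^*$, hence $\ell \to r \in \UU(s_j) \subseteq \UU(\PP)$ by the induction hypothesis. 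Then I would check that $t_{i+1}$ still has the claimed decomposition: replacing $\atpos{t_i}{p}$ by $r\sigma$ can only introduce defined symbols that occur in $r$, and all such symbols are again reachable by $\depends^*$ from the previously-present symbols, so the usable-rule sets do not grow, and the positions of compound symbols from $\PP$ keep the $\SC{\cdots}$-shape intact.

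The main obstacle is the bookkeeping in the decomposition: one must be careful that a $\PP$-step can only occur at the root of a maximal defined/dependency-pair subterm (never strictly inside one), which is where the hypothesis $t_0 \in \TBS$ is essential — it guarantees that dependency-pair symbols are never nested, so the compound symbols introduced by $\COM$ separate the defined subterms cleanly and no dependency-pair symbol ever ends up in an argument position of another. Once this structural invariant is pinned down, the closure of $\UU(\PP)$ under $\depends^*$ (which is immediate from the definition of usable rules) does the rest, and the statement follows by collecting the per-step conclusions over the whole derivation.
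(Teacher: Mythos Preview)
Your argument is correct, but it takes a different route from the paper's. The paper maintains a single invariant $P(t)$: every subterm of $t$ rooted at a \emph{non-usable} defined symbol is a normal form of $\RS$. This is established by a position-based case analysis (the rewrite position $p$ versus an arbitrary non-usable position $q$: above, parallel, or below), with no explicit term decomposition. From $P(t_i)$ it follows at once that the root of any redex is either a dependency-pair symbol or a usable defined symbol, so the applied rule lies in $\PP \cup \UU(\PP)$.

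Your invariant is strictly stronger --- in effect it says non-usable symbols never occur at all --- and you additionally track the $\SC{\cdots}$-decomposition and the non-nesting of dependency-pair symbols. None of this extra structure is needed for the present lemma, so the paper's proof is more economical here; on the other hand, the shape you are maintaining is precisely the invariant that becomes relevant later in the weak dependency graph analysis (compare the set $\TBC$ and the lemmas surrounding it in Section~\ref{DG}), so your approach has the virtue of preparing that ground in advance. One small slip in phrasing: ``$g$ is reachable from the root symbol of $s_j$ via $\depends^*$'' is not quite right, since $\rt(s_j)$ is typically a dependency-pair symbol and $\depends$ relates only defined symbols of $\RS$; what you actually use (and what your invariant $\UU(s_j) \subseteq \UU(\PP)$ already delivers) is simply that $g \in \Fun(s_j)$ is a defined symbol, hence its rules lie in $\UU(s_j) \subseteq \UU(\PP)$.
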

\begin{proof}
Let $\GG$ be the set of all non-usable symbols with respect to $\PP$.
We write $P(t)$ if ${\atpos{t}{q}} \in {\NF(\RS)}$ for all $q \in \Pos_\GG(t)$. 
First we prove by induction on $i$ that $P(t_i)$ holds for all $i$.
\begin{enumerate}
\item 
Assume $i = 0$.  Since $t_0 \in \TBS$, we have $t_0 \in \NF(\RS)$
and thus ${\atpos{t}{p}} \in {\NF(\RS)}$ for all positions $p$. 
The assertion $P$ follows trivially.
\item
Suppose $i > 0$.  
By induction hypothesis, $P(t_{i-1})$ holds, i.e., 
there exist $p \in \Pos(t_{i-1})$, a substitution $\sigma$, and
$l \rew r \in \UU(\PP) \cup \PP$, such that
${\atpos{t_{i-1}}{p}} = l\sigma$ and $\atpos{t_i}{p} = r\sigma$.
In order to show property $P$ for $t_i$, we fix a position $q \in \Pos_\GG(t)$. We
have to show $\atpos{t_i}{q} \in \NF(\RS)$. We distinguish three subcases:
\begin{itemize}
\item
Suppose that $q$ is above $p$. Then $\atpos{t_{i-1}}{q}$ is 
reducible, but this contradicts the induction hypothesis $P(t_{i-1})$.
\item
Suppose $p$ and $q$ are parallel but distinct.
Since $\atpos{t_{i-1}}{q} = \atpos{t_i}{q} \in \NF(\RS)$ holds,
we obtain $P(t_i)$.
\item
Otherwise, $q$ is below $p$. Then, $\atpos{t_i}{q}$ is a subterm of $r\sigma$.  
Because $r$ contains no $\GG$-symbols by the definition of usable symbols,
$\atpos{t_i}{q}$ is a subterm of $x\sigma$ for some 
$x \in \Var(r) \subseteq \Var(l)$.
Therefore, $\atpos{t_i}{q}$ is also a subterm of $\atpos{t_{i-1}}{q}$, 
from which $\atpos{t_i}{q} \in \NF(\RS)$ follows.  We obtain $P(t_i)$.
\end{itemize}
\end{enumerate}
Hence property $P$ holds for all $t_i$ in the assumed derivation. Thus
any reduction step $t_i \rsrew{\RS \cup \PP} t_{i+1}$ can be simulated
by a step $t_i \rsrew{\UU(\PP) \cup \PP} t_{i+1}$. From this the
lemma follows.
\end{proof}

Note that the proof technique adopted for termination 
analysis~\cite{GTSF06,HirokawaMiddeldorp:2007} cannot be 
directly used in this context.
The technique transforms terms in a derivation to exclude non-usable rules.  However, since the
size of the initial term increases, this technique does not 
suit to our use. 
On the other hand, the transformation employed in~\cite{HirokawaMiddeldorp:2007}
is adaptable to a complexity analysis in the large, cf.~\cite{MS:2010}.
% GM: as a reminder
% Example 7.1. in MS:2010 shows that the standard proof of usable rules in
% the innermost case is not applicable to derivational complexity.

The next theorem follows from Lemmas~\ref{l:2} and~\ref{l:3}
in conjunction with the above Lemma~\ref{l:5}.
It adapts the usable rule criteria to complexity analysis.

\begin{theorem} \label{t:dp:usable}
Let $\RS$ be a TRS and let $t \in \TB$. 
If $t$ is terminating with respect to $\rew$ then
$\dheight(t, \rew) = \dheight(t^{\sharp},\rsrew{\PP \cup \UU(\PP)})$, 
where $\rew$ denotes $\rsrew{\RS}$ or $\irew{\RS}$
depending on whether $\PP = \WDP(\RS)$ or $\PP = \WIDP(\RS)$.
\end{theorem}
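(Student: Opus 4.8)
The plan is to combine the three earlier results — Lemma~\ref{l:2}, Lemma~\ref{l:3}, and Lemma~\ref{l:5} — in a straightforward chain. The statement splits naturally into two cases according to whether $\PP = \WDP(\RS)$ (so $\rew = \rsrew{\RS}$) or $\PP = \WIDP(\RS)$ (so $\rew = \irew{\RS}$). In each case there are two equalities to establish in sequence: first, that passing to the sharp-annotated term and adding the (weak) dependency pairs preserves the derivation height, and second, that the non-usable rules of $\RS$ can be discarded from $\PP \cup \RS$ without changing the derivation height when the starting term is basic.

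First I would treat the full-rewriting case. Let $t \in \TB$ be terminating with respect to $\rsrew{\RS}$. Since $t$ is basic, $\rt(t) \in \DD$, so Lemma~\ref{l:2} applies directly and gives $\dheight(t, \rsrew{\RS}) = \dheight(t^{\sharp}, \rsrew{\WDP(\RS) \cup \RS})$. It remains to replace $\WDP(\RS) \cup \RS$ by $\WDP(\RS) \cup \UU(\WDP(\RS))$ on the right. For the inequality $\geqslant$, note $\UU(\PP) \subseteq \RS$, so any $\PP \cup \UU(\PP)$-derivation is a $\PP \cup \RS$-derivation. For the inequality $\leqslant$, take a maximal derivation starting from $t^{\sharp}$; since $t \in \TB$ we have $t^{\sharp} \in \TBS$, so Lemma~\ref{l:5} (with $\PP = \WDP(\RS)$) tells us this derivation is already a $\PP \cup \UU(\PP)$-derivation, hence of length at most $\dheight(t^{\sharp}, \rsrew{\PP \cup \UU(\PP)})$. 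Combining the two inequalities closes this case.

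The innermost case is entirely parallel: Lemma~\ref{l:3} gives $\dheight(t, \irew{\RS}) = \dheight(t^{\sharp}, \irew{\WIDP(\RS) \cup \RS})$, and then the same Lemma~\ref{l:5} argument — which is stated for weak dependency pairs and covers $\WIDP(\RS)$ as a special case (recall $\UU(\WDP(\RSdiv)) = \UU(\WIDP(\RSdiv))$ in the running example, and more generally the argument of Lemma~\ref{l:5} only uses that $\PP$ is a set of rules whose right-hand sides contain no non-usable symbols) — lets us drop the non-usable rules. One subtlety worth a line of care is that Lemma~\ref{l:5} is phrased for the relation $\rsrew{\RS \cup \PP}$ rather than $\irew{\RS \cup \PP}$; but an innermost derivation is in particular a derivation, and discarding non-usable rules cannot turn a non-innermost step into the chosen one, so the restriction to $\irew{}$ is preserved.

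The main obstacle, such as it is, is bookkeeping rather than mathematics: making sure the hypotheses of Lemmas~\ref{l:2} and~\ref{l:3} (terminating term with defined root) are met — which follows from $t \in \TB$ and the termination hypothesis — and checking that Lemma~\ref{l:5} genuinely applies to $\WIDP(\RS)$ and to the innermost relation. No new ideas are needed; the theorem is the expected synthesis of the section's lemmas, and I would present the proof as a two-or-three-sentence chaining of equalities and inequalities per case.
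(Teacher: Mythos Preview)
Your proposal is correct and matches the paper's own approach exactly: the paper simply states that the theorem follows from Lemmas~\ref{l:2} and~\ref{l:3} in conjunction with Lemma~\ref{l:5}, and you have spelled out precisely this chaining (including the two inequalities needed to pass from $\PP \cup \RS$ to $\PP \cup \UU(\PP)$). Your remark on the innermost subtlety---that Lemma~\ref{l:5} is stated for full rewriting but applies to innermost derivations as a special case, and that removing rules can only relax the innermost condition---is an appropriate elaboration of a point the paper leaves implicit.
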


To clarify the applicability  of the theorem in complexity analysis, we 
instantiate the theorem by considering RMIs.

\begin{corollary} \label{c:dp:usable}
Let $\RS$ be a TRS, let $\mu$ be the (innermost) usable replacement map 
and let $\PP = \WDP(\RS)$ 
(or $\PP = \WIDP(\RS)$). 
If $\PP \cup \UU(\PP)$ is compatible with a 
$d$-degree $\mu$-monotone RMI $\A$, then
the (innermost) runtime complexity function $\rc^{(\m{i})}_{\RS}$ 
with respect to $\RS$ is bounded by a $d$-degree polynomial.
\end{corollary}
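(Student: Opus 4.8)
The plan is to derive Corollary~\ref{c:dp:usable} by chaining together three results already established in the excerpt: Theorem~\ref{t:dp:usable}, Corollary~\ref{c:mu-inclusion} (together with the observations underlying Theorem~\ref{t:mu-inclusion}), and Theorem~\ref{t:rmi}. First I would fix a basic term $t \in \TB$ and assume it is terminating with respect to $\rew$, where $\rew$ is $\rsrew{\RS}$ or $\irew{\RS}$ according to whether $\PP = \WDP(\RS)$ or $\PP = \WIDP(\RS)$. By Theorem~\ref{t:dp:usable} we have $\dheight(t,\rew) = \dheight(t^{\sharp}, \rsrew{\PP \cup \UU(\PP)})$, so it suffices to bound the right-hand side by a $d$-degree polynomial in $\size{t^{\sharp}} = \size{t}$.

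Next I would pass from ordinary rewriting with $\PP \cup \UU(\PP)$ to $\mu$-restricted rewriting. The point is that derivations of $\PP \cup \UU(\PP)$ that start from a term in $\TBS$ stay inside the set $\MUTERM{\mu}$ of $\mu$-replacing terms: this is exactly the content of Lemma~\ref{l:mu-closed} and Theorem~\ref{t:mu-inclusion}, applied now to the TRS $\PP \cup \UU(\PP)$ (note $t^{\sharp}$ is argument-normalised, hence lies in $\MUTERM{\varnothing} \subseteq \MUTERM{\mu}$, since its arguments are constructor terms that are normal forms of $\PP \cup \UU(\PP)$). Consequently every step $t_i \rsrew{\PP \cup \UU(\PP)} t_{i+1}$ in such a derivation is in fact a $\mu$-step $t_i \muto{\mu} t_{i+1}$, so $\dheight(t^{\sharp}, \rsrew{\PP \cup \UU(\PP)}) \leqslant \dheight(t^{\sharp}, \muto{\mu})$.

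Now I would invoke compatibility: since $\A$ is a $\mu$-monotone RMI with $\PP \cup \UU(\PP) \subseteq {\gord{\A}}$, any $\mu$-step $s \muto{\mu} t$ yields $s \gord{\A} t$ (this is the general fact recalled in the matrix-interpretation preliminaries: a TRS compatible with a $\mu$-monotone relation lets $\mu$-steps descend). Hence $\dheight(t^{\sharp}, \muto{\mu}) \leqslant \dheight(t^{\sharp}, {\gord{\A}})$. Since $\A$ is an RMI of basic degree $d$, Theorem~\ref{t:rmi} and the $\bO(n^d)$-collapsibility argument (Lemma on collapsible orders together with $[t^{\sharp}]_1 \leqslant c \cdot \size{t^{\sharp}}^d = c \cdot \size{t}^d$) give $\dheight(t^{\sharp}, {\gord{\A}}) \leqslant [t^{\sharp}]_1 \leqslant c \cdot \size{t}^d$. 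Putting the inequalities together yields $\dheight(t, \rew) \leqslant c \cdot \size{t}^d$, and taking the maximum over all basic terms of size at most $n$ gives $\rc^{(\m{i})}_{\RS}(n) = \bO(n^d)$, as desired.

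The main obstacle, and the step warranting care in the write-up, is the second one: making sure that Theorem~\ref{t:mu-inclusion} genuinely applies to the combined system $\PP \cup \UU(\PP)$ rather than to $\RS$ itself, and that the replacement map $\mu$ in the corollary — the usable replacement map of $\RS$ — is the right one for $\PP \cup \UU(\PP)$ as well. One has to check that the compound symbols and dependency-pair symbols introduced in $\PP$ do not spoil the $\MUTERM{\mu}$-closure argument; since compound symbols occur only at the root of right-hand sides and carry no defined symbols in argument positions, and dependency-pair symbols appear only at the root, the arguments of any redex encountered remain constructor normal forms, so $\MUTERM{\varnothing}$-membership is preserved and Lemma~\ref{l:mu-closed} goes through verbatim for $\PP \cup \UU(\PP)$ with the appropriate $\mu$. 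Once this bookkeeping is settled, the rest is a routine concatenation of the cited bounds.
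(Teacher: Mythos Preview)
Your overall strategy matches the paper's: combine Theorem~\ref{t:dp:usable}, the $\mu$-step simulation from Section~\ref{CSR}, and Theorem~\ref{t:rmi}. Two points need fixing, though.

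First, a genuine gap: you \emph{assume} that the basic term $t$ is terminating with respect to $\rew$ and then invoke Theorem~\ref{t:dp:usable}. But termination is not among the hypotheses of the corollary; it has to be \emph{derived} from compatibility with $\A$. The paper handles this explicitly: compatibility of the $\mu$-monotone RMI $\A$ with $\PP \cup \UU(\PP)$, together with Theorem~\ref{t:mu-inclusion}, gives well-foundedness of $\rsrew{\PP \cup \UU(\PP)}$ on $\TBS$; then Lemma~\ref{l:5} (an infinite $\RS$-derivation from $t \in \TB$ would yield an infinite $\PP \cup \UU(\PP)$-derivation from $t^\sharp \in \TBS$) transfers this to well-foundedness of $\rsrew{\RS}$ on $\TB$. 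Only after that is Theorem~\ref{t:dp:usable} applicable. Your chain of inequalities in fact contains the ingredients to prove termination of $\PP \cup \UU(\PP)$ on $t^\sharp$, but you apply Theorem~\ref{t:dp:usable} before establishing it, and you never invoke Lemma~\ref{l:5} to pull termination back to $\RS$.

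Second, a misreading that generates your ``main obstacle'': the $\mu$ in the corollary is the (innermost) usable replacement map of $\PP \cup \UU(\PP)$, not of $\RS$ (compare Example~\ref{ex:3} immediately after the corollary, where $\mu \defsym \URM{\PP \cup \UU(\PP)}$ is taken). With that reading, Theorem~\ref{t:mu-inclusion} applies directly to $\PP \cup \UU(\PP)$ and there is nothing further to check about compound or sharped symbols; your closing paragraph is then unnecessary.
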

\begin{proof}
For simplicity we suppose $\PP = \WDP(\RS)$ and let $\A$ be
a $\mu$-monotone RMI of degree $d$.
Compatibility of $\A$ with $\PP \cup \UU(\PP)$ 
implies the well-foundedness of the relation $\rsrew{\PP \cup \UU(\PP)}$ on
the set of terms $\TBS$, cf.~Theorem~\ref{t:mu-inclusion}. This
in turn implies the well-foundedness of $\rsrew{\RS}$, cf.~Lemma~\ref{l:5}.
Hence Theorem~\ref{t:dp:usable} is applicable and we 
conclude $\dheight(t,\rsrew{\RS}) = \dheight(t^\sharp, \rsrew{\PP \cup \UU(\PP)})$.
On the other hand, due to Theorem~\ref{t:rmi} compatibility with $\A$ implies that
$\dheight(t^\sharp, \rsrew{\PP \cup \UU(\PP)}) = \bO(\size{t^\sharp}^d)$.
As $\size{t^\sharp} = \size{t}$, we can combine these equalities 
to conclude polynomial runtime complexity of $\RS$.
\end{proof}

The below given example applies Corollary~\ref{c:dp:usable} to the
motivating Example~\ref{ex:1} introduced in Section~\ref{Introduction}.

\begin{example}[continued from Example~\ref{ex:1:usable}]
\label{ex:3}
Consider the TRS $\RSdiv$ for division used as running
example; the weak dependency pairs $\PP \defsym \WDP(\RSdiv)$ 
are given in Example~\ref{ex:1:WDP}. 
We have $\UU(\PP) = \{ 1, 2 \}$ and let $\SS = \PP \cup \UU(\PP)$. 
The usable replacement map $\mu \defsym \URM{\SS}$ is defined 
as follows:
\begin{alignat*}{3}
\mu(\m{s}) & = \mu(\m{-}) = \mu(\m{-}^\sharp) = \varnothing 
& \qquad &
\mu(\div^\sharp) & = \{1\} 
\tpkt
\end{alignat*}
Note that $\URM{\SS}$ is smaller than $\URM{\RS}$ on $\FF$
(see Example~\ref{ex:1:ua}).
Consider the $1$-dimensional RMI $\A$ with
$\m{0}_\A = \m{c}_\A = \m{d}_\A = 0$,
$\m{s}_\A(x) = x + 2$,
$\m{-}_\A(x, y) = \m{-}^\sharp_\A(x, y) = x + 1$,
and $\div^\sharp_\A(x, y) = x + 1$.
The algebra $\A$ is strictly monotone on all usable argument positions
and the rules in $\SS$ are interpreted and ordered as follows:
\begin{alignat*}{9}
& 1\colon\quad & x + 1 & > x
& \qquad
& 5\colon\quad & 1 & > 0
& \qquad
& 7\colon\quad & 1 & > 0
\\
& 2\colon\quad & x + 3 & > x + 1
& \qquad
& 6\colon\quad & x + 3 & > x + 1
& \qquad
& 8\colon\quad & x + 3 & > x + 2 
\tpkt
\end{alignat*}
Therefore, $\SS$ is compatible with $\A$ and the
runtime complexity function $\Rc{\RS}$ is linear.
Remark that by looking at the coefficients of the interpretations 
more precise bound can be inferred.  Since all coefficients 
are at most one, we obtain $\Rc{\RS}(n) \leqslant n + c$ for some $c \in \NN$.
\end{example}

It is worth stressing that it is (often) easier to analyse the 
complexity of $\PP \cup \UU(\PP)$  than the complexity of 
$\RS$.  This is exemplified by the next example.

\begin{example}
\label{ex:8}
Consider the TRS $\RSdiff$
\begin{align*}
\m{D}(\m{c}) & \to \m{0}
& 
\m{D}(x + y) & \to \m{D}(x) + \m{D}(y)
&
\m{D}(x \times y) & \to 
(y \times \m{D}(x)) + (x \times \m{D}(y))
\\
\m{D}(\m{t}) & \to \m{1}
&
\m{D}(x - y) & \to \m{D}(x) - \m{D}(y)
\tpkt
\end{align*}
There is no $1$-dimensional $\tmu$-monotone RMI compatible with $\RSdiff$.
On the other hand  
$\WDP(\RSdiff)$ consists of the five pairs
\begin{align*}
\m{D}^\sharp(\m{c}) & \to \m{c_1}
& 
\m{D}^\sharp(x + y) & \to \m{c_3}(\m{D}^\sharp(x), \m{D}^\sharp(y))
&
\m{D}^\sharp(x \times y) & \to 
\m{c_5}(y, \m{D}^\sharp(x), x, \m{D}^\sharp(y))
\\
\m{D}^\sharp(\m{t}) & \to \m{c_2}
&
\m{D}^\sharp(x - y) & \to \m{c_4}(\m{D}^\sharp(x), \m{D}^\sharp(y))
\tkom
\end{align*}
and $\UU(\WDP(\RSdiff)) = \varnothing$.
The usable replacement map $\tmu$ for $\WDP(\RSdiff) \cup \UU(\RSdiff)$ is
defined as $\tmu(\m{c_3}) = \tmu(\m{c_4}) = \{1,2\}$, 
$\tmu(\m{c_5}) = \{2,4\}$, and $\tmu(f) = \varnothing$ for all other
symbols $f$.  Since the $1$-dimensional $\tmu$-monotone RMI $\A$
with
\begin{align*}
& 
\m{D}^\sharp_\A(x) = 2x 
\qquad \m{c}_\A = \m{t}_\A = 1 
\qquad {+}_\A(x,y) = {-}_\A(x,y) = {\times}_\A(x,y) = x + y + 1
\\
&
\m{c_1}_\A = \m{c_2}_\A = 0 
\qquad \m{c_3}_\A(x,y) = \m{c_4}_\A(x,y) = x + y
\qquad \m{c_5}_\A(x,y,z,w) = y + w
\tkom
\end{align*}
is compatible with $\RSdiff$, linear runtime complexity of $\RSdiff$ is concluded. 
Remark that this bound is optimal.
\end{example}

We conclude this section by discussing the (in-)applicability 
of standard dependency pairs (see~\cite{ArtsGiesl:2000}) in complexity analysis.
For that we recall the definition of standard dependency pairs.

\begin{definition}[\cite{ArtsGiesl:2000}]
\label{d:DP}
The set $\DP(\RS)$ of (standard) \emph{dependency pairs} of a TRS $\RS$ 
is defined as 
$\{ l^\sharp \to u^{\sharp} \mid l \to r \in \RS, 
\text{$u \subterm r$, $\rt(u)$ is defined, and $u \not\prsubterm l$} \}$.
\end{definition}

The next example shows that Lemma~\ref{l:2} (Lemma~\ref{l:3})
does not hold if we replace weak (innermost) dependency pairs with 
standard dependency pairs.
\begin{example}
\label{ex:6} 
Consider the one-rule TRS $\RS$:
$\m{f}(\m{s}(x)) \to \m{g}(\m{f}(x), \m{f}(x))$.
$\DP(\RS)$ is the singleton of $\m{f}^\sharp(\m{s}(x)) \to \m{f}^\sharp(x)$.
Let $t_n = \m{f}(\m{s}^n(x))$ for each $n \geqslant 0$.
Since $t_{n+1} \rsrew{\RS} \m{g}(t_n,t_n)$ holds for
all $n \geqslant 0$, it is easy to see 
$\dheight(t_{n+1}, \rsrew{\RS}) \geqslant 2^n$, while
$\dheight(t_{n+1}^\sharp, \rsrew{\DP(\RS) \cup \RS}) = n$.
\end{example}

\section{The Weight Gap Principle}
\label{semantical gap}

Let $\PP = \WDP(\RSdiv)$ and recall the derivation over
$\PP \cup \RSdiv$ on page~\pageref{eq:7}. 
This derivation
can be represented as derivation of $\PP$ modulo $\UU(\PP)$:
\label{eq:relative}
\begin{equation*}
\mathbf{4} \div^\sharp \mathbf{2} 
~\rsrew{\PP/\UU(\PP)}~ \mathbf{2} \div^\sharp \mathbf{2}
~\rsrew{\PP/\UU(\PP)}~ \m{0} \div^\sharp \mathbf{2}
~\rsrew{\PP/\UU(\PP)}~ \m{c} \tpkt
\end{equation*}
As we see later linear runtime complexity of $\UU(\PP)$ and $\PP/\UU(\PP)$
can be easily obtained.  If linear runtime complexity of $\PP \cup \UU(\PP)$
would follow from them, linear runtime complexity of $\RS$ could be 
established in a modular way. 

In order to bound complexity of relative TRSs we define a variant of
a reduction pair~\cite{ArtsGiesl:2000}. 
Note that $\Slow$ is associated to a given collapsible order.

\begin{definition}
A $\mu$-\emph{complexity pair} for a relative TRS $\RS/\RSS$
is a pair $({\gtrsim},{\succ})$ such that $\gtrsim$ is a 
$\mu$-monotone proper order and $\succ$ is a strict order. Moreover
${\gtrsim}$ and ${\succ}$ are compatible, that is, 
${\gtrsim \cdot \succ} \subseteq {\succ}$ or 
${\succ \cdot \gtrsim} \subseteq {\succ}$. 
Finally $\succ$ is collapsible on $\rsrew{\RS/\RSS}$ and
all compound symbols are $\mu$-monotone with respect to $\succ$.
\end{definition}

\begin{lemma}
Let $\PP = \WDP(\RS)$ and $({\gtrsim},{\succ})$ a 
$\tmu^{\PP \cup \UU(\PP)}$-complexity pair for $\PP/\UU(\PP)$.
If $\PP \subseteq {\succ}$ and $\UU(\PP) \subseteq {\gtrsim}$ then
$\dheight(t,\rsrew{\PP/\UU(\PP)}) \leqslant \Slow(t)$ for any $t \in \TBS$.
\end{lemma}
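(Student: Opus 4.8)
The plan is to show that for every term $t \in \TBS$, the relative rewrite relation $\rsrew{\PP/\UU(\PP)}$ starting from $t$ descends strictly in $\succ$ at every $\PP$-step while not increasing in $\gtrsim$ at every $\UU(\PP)$-step, and then to invoke the $\Slow$-collapsibility of $\succ$ on $\rsrew{\PP/\UU(\PP)}$ together with Lemma (the $\Slow$-collapsible bound on derivation height). First I would fix $t \in \TBS$ and consider an arbitrary step $t' \rsrew{\PP/\UU(\PP)} t''$ occurring in a derivation issuing from $t$. By Theorem~\ref{t:mu-inclusion} (noting $\TBS \subseteq \MUTERM{\varnothing}$, and that the $\sharp$-symbols do not affect argument-normalisation), every term reachable from $t$ lies in $\MUTERM{\tmu}$ for $\tmu = \tmu^{\PP \cup \UU(\PP)}$; hence every such rewrite step is in fact a $\tmu$-step.

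The key steps are then: (i) Unfold the definition of $\rsrew{\PP/\UU(\PP)}$ as $\rssrew{\UU(\PP)} \cdot \rsrew{\PP} \cdot \rssrew{\UU(\PP)}$. (ii) For the $\UU(\PP)$-steps, use $\UU(\PP) \subseteq {\gtrsim}$ together with the $\tmu$-monotonicity of $\gtrsim$: since the step is a $\tmu$-step and $\gtrsim$ is $\tmu$-monotone, the step is contained in $\gtrsim$ (closure under $\tmu$-replacing contexts), so $\rssrew{\UU(\PP)}$-sequences are contained in the reflexive--transitive closure of $\gtrsim$, which is just $\gtrsim$ (a preorder). (iii) For the single $\PP$-step, use $\PP \subseteq {\succ}$; the redex sits at the root (weak dependency pairs have defined or compound root, and the surrounding context consists only of compound symbols, which are $\tmu$-monotone with respect to $\succ$ by the complexity-pair definition), so closing $\succ$ under the $\tmu$-replacing compound context gives $t' \succ t''$ for this step. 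Wait — more carefully, one must check that the context around a $\PP$-redex really is built only of compound symbols that are $\tmu$-monotone w.r.t.\ $\succ$; this follows because in a $\PP \cup \UU(\PP)$-derivation from a term in $\TBS$, all root symbols below which further $\PP$-activity can occur are compound symbols, and the complexity-pair definition demands exactly $\tmu$-monotonicity of $\succ$ on those. (iv) Combine (i)--(iii) with the compatibility condition ${\gtrsim \cdot \succ} \subseteq {\succ}$ (or ${\succ \cdot \gtrsim} \subseteq {\succ}$) to conclude $t' \succ t''$ for the whole relative step. (v) Thus $t \succ t_1 \succ t_2 \succ \cdots$ along the derivation, and each step also lies in $\rsrew{\PP/\UU(\PP)}$, so $\Slow$-collapsibility yields $\Slow(t) > \Slow(t_1) > \cdots$; hence the length of any such derivation is at most $\Slow(t)$, i.e.\ $\dheight(t, \rsrew{\PP/\UU(\PP)}) \leqslant \Slow(t)$.

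The main obstacle I expect is step (iii): justifying that a single $\PP$-step inside an arbitrary reachable term is contained in $\succ$. One must argue that every occurrence of a $\PP$-redex in such a term appears at a position whose proper ancestors all carry compound symbols (this is where the structure of weak dependency pairs and the fact that we start from $\TBS$ is used), and that these compound symbols' argument positions where the redex sits are $\tmu$-replacing — so that $\tmu$-monotonicity of $\succ$ on compound symbols lifts $\succ$ through the context. A secondary, more routine point is checking that the reflexive--transitive closure manipulations are legitimate, i.e.\ that $\gtrsim$ being a preorder absorbs the $\rssrew{\UU(\PP)}$ prefix and suffix, and that the chosen compatibility inclusion is applied in the correct order; these are straightforward but must be stated cleanly to make the telescoping of $\succ$ along the derivation rigorous.
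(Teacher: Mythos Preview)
The paper states this lemma without proof (it is immediately followed by Example~\ref{ex:relative}), so there is no argument to compare against directly. Your plan is the natural one and is essentially correct; in particular, your identification of the crucial technical point---that in any term reachable from $\TBS$ under $\PP \cup \UU(\PP)$ the context above a $\PP$-redex consists solely of compound symbols, so that the $\mu$-monotonicity of $\succ$ on compound symbols (as required by the complexity-pair definition) suffices to lift $\succ$---is exactly what makes the argument go through.

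One point deserves more care than you indicate. The compatibility hypothesis is only one-sided: either ${\gtrsim} \cdot {\succ} \subseteq {\succ}$ or ${\succ} \cdot {\gtrsim} \subseteq {\succ}$. A single relative step, however, has $\UU(\PP)$-sequences on \emph{both} sides, so with one-sided compatibility you do not get $t' \succ t''$ for the full relative step; you only get $t' \succ \cdot {\gtrsim^*}\; t''$ (or the symmetric variant). The standard remedy is to reparse the derivation: observe that $t \in \TBS$ is already a $\UU(\PP)$-normal form, and that a maximal $\PP/\UU(\PP)$-derivation has the same length whether successive steps are grouped as ${\rssrew{\UU(\PP)}} \cdot {\rsrew{\PP}}$ or as ${\rsrew{\PP}} \cdot {\rssrew{\UU(\PP)}}$. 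Either grouping is itself an instance of $\rsrew{\PP/\UU(\PP)}$ and is contained in $\succ$ by the available compatibility direction, so collapsibility applies step by step. Your remark that these ``closure manipulations'' are routine is right in spirit, but the reparsing is the detail that makes it rigorous.

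A second, minor omission: you tacitly use closure of $\gtrsim$ and $\succ$ under substitution when passing from $\UU(\PP) \subseteq {\gtrsim}$ and $\PP \subseteq {\succ}$ to the corresponding root steps being ordered. The paper's complexity-pair definition does not state this explicitly, but it is clearly intended (and holds in all the interpretation-based instances actually used).
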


\begin{example}[continued from Example~\ref{ex:3}]
\label{ex:relative}
Consider the $1$-dimensional RMI $\A$ with
\begin{xalignat*}{3}
\m{0}_\A & = \m{c}_\A = \m{d}_\A = 0 
& 
\m{s}_\A(x) & = x + 1
&
{-}_\A(x, y) & = {-}^\sharp_\A(x, y) = \div^\sharp_\A(x, y) = x 
\tkom
\end{xalignat*}
which yields the complexity pair $({\geqord[\geqslant]{\A}},{\gord[>]{\A}})$ for 
$\PP/\UU(\PP)$.  Since ${\PP} \subseteq {\gord[>]{\A}}$ and 
${\UU(\PP)} \subseteq {\geqord[\geqslant]{\A}}$ hold, 
$\comp(n, \TBS, \rsrew{\PP/\UU(\PP)}) = \bO(n)$.
\end{example}

First we show the main theorem of this section.
\begin{definition}
Let $\A$ be a matrix interpretation and let $\RS/\SS$ be a relative TRS. 
A \emph{weight gap} on a set $T$ of terms is a number $\Delta \in \NN$ 
such that $s \in {\to^*_{\RS \cup \SS}}(T)$ and 
$s \to_\RS t$ implies $[t]_1 - [s]_1 \leqslant \Delta$.
\end{definition}

Let $T$ be a set of terms and let $\RS/\SS$ be a relative TRS.

\begin{theorem}
\label{t:wgp}
If $\RS/\SS$ is terminating,
$\A$ admits a \emph{weight gap} $\Delta$ on $T$,
and $\A$ is a matrix interpretation of degree $d$ such that
$\SS$ is compatible with $\A$, then there exists $c \in \NN$ such that 
\(
\dheight(t,{\to_{\RS \cup \SS}}) \leqslant 
(1 + \Delta) \cdot \dheight(t,{\to_{\RS/\SS}}) + c \cdot |t|^d
\)
for all $t \in T$.
Consequently,
\(
\comp(n,T,{\rsrew{\RS \cup \SS}}) =
\bO(\comp(n,T,{\rsrew{\RS/\SS}}) + n^d)
\)
holds.
\end{theorem}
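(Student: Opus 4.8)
The plan is to analyse an arbitrary derivation $A \colon t = s_0 \to_{\RS \cup \SS} s_1 \to_{\RS \cup \SS} \cdots \to_{\RS \cup \SS} s_k$ with $t \in T$ and to split it into the $\RS$-steps and the $\SS$-steps. Let me write $A$ as consisting of $m$ steps that use a rule from $\RS$ and $k - m$ steps that use a rule from $\SS$. The first observation is that contracting out the $\SS$-steps turns $A$ into a derivation with respect to $\rsrew{\RS/\SS}$ of length exactly $m$ (each maximal block of $\SS$-steps gets absorbed into the adjacent $\RS$-step, or, if a block is at the very end of $A$, it can simply be dropped since $\rsrew{\RS/\SS}$ only records $\RS$-steps flanked by $\SS$-sequences). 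Hence $m \leqslant \dheight(t, \to_{\RS/\SS})$, and it remains to bound $k - m$, the number of $\SS$-steps.

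To bound the number of $\SS$-steps I track the weight $[s_i]_1$ along $A$. Since every $s_i$ lies in $\to^*_{\RS \cup \SS}(T)$, the weight gap hypothesis applies at each $\RS$-step: $[s_{i+1}]_1 - [s_i]_1 \leqslant \Delta$. At each $\SS$-step we have $s_i \to_{\SS} s_{i+1}$ with $\SS \subseteq {>_\A}$ (compatibility), hence by the definition of the order $>$ on $\N^d$ the first component strictly decreases, i.e. $[s_i]_1 > [s_{i+1}]_1$, so $[s_{i+1}]_1 \leqslant [s_i]_1 - 1$. Summing the increments over the whole derivation gives
\begin{equation*}
[s_k]_1 \leqslant [s_0]_1 + m \cdot \Delta - (k - m)\tpkt
\end{equation*}
Since $[s_k]_1 \geqslant 0$, rearranging yields $k - m \leqslant [s_0]_1 + m \cdot \Delta \leqslant [t]_1 + \Delta \cdot \dheight(t, \to_{\RS/\SS})$. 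Because $\A$ is a matrix interpretation of degree $d$, there is a constant $c$ with $[t]_1 \leqslant c \cdot |t|^d$ for all terms $t$ in the relevant class (this is exactly the definition of degree $d$; for the basic-degree refinement one restricts $T$ to basic terms). Adding the two bounds,
\begin{equation*}
k = m + (k - m) \leqslant \dheight(t,\to_{\RS/\SS}) + \Delta \cdot \dheight(t,\to_{\RS/\SS}) + c \cdot |t|^d = (1+\Delta)\cdot\dheight(t,\to_{\RS/\SS}) + c\cdot|t|^d\tpkt
\end{equation*}
Taking the supremum over all derivations $A$ from $t$ gives the claimed inequality for $\dheight(t,\to_{\RS\cup\SS})$; note termination of $\RS/\SS$ together with the bound on $k$ guarantees $\to_{\RS\cup\SS}$ is terminating on $T$ so that $\dheight$ is well defined. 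The $\comp$-statement then follows by taking the maximum over $t \in T$ with $|t| \leqslant n$, since $1+\Delta$ and $c$ are constants.

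The routine parts are the two summation arguments; the step I expect to require the most care is the very first one — justifying that erasing the $\SS$-steps really produces a legitimate $\rsrew{\RS/\SS}$-derivation of length $m$ and that no $\RS$-step is lost in the process (in particular handling trailing $\SS$-blocks and the reflexive-closure convention in the definition of $\rsrew{\RS/\SS}$). One must also double-check that the weight gap hypothesis is genuinely available at every $\RS$-step, which is why the membership $s_i \in {\to^*_{\RS\cup\SS}}(T)$ is needed and holds by construction since $s_i$ is reached from $t \in T$.
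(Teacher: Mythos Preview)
Your proof is correct and follows essentially the same route as the paper. The paper organises the derivation into alternating $\SS$-blocks and $\RS$-steps and performs a telescoping sum on $[s_i]_1 - [t_i]_1$; you instead track the potential $[s_i]_1$ step by step, noting that it rises by at most $\Delta$ at each $\RS$-step and drops by at least $1$ at each $\SS$-step, which is the same accounting unpacked.
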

\begin{proof}
Let $m = \dheight(s, {\rsrew{\RS/\SS}})$ and $n = \size{s}$.
Any derivation of $\rsrew{\RS \cup \RSS}$ is representable as follows:
\begin{equation*} 
s = s_0 \to_{\RSS}^{k_0} 
t_0 \to_\RS 
s_1 \to_\RSS^{k_1} 
t_1 \to_\RS \cdots \to_\RSS^{k_m}
t_m \tpkt
\end{equation*}
Without loss of generality we may assume that the derivation is
maximal and ground. We observe:
\begin{enumerate}
\item \label{en:relative:i} 
$k_i \leqslant [s_i]_1 - [t_i]_1$ holds for all $0 \leqslant i \leqslant m$. This is 
because $[s]_1 > [t]_1$, whenever $s \rsrew{\RSS} t$
by the assumption $\SS$ is compatible with $\A$. By definition of $>$, 
we conclude $[s]_1 \geqslant [t]_1 + 1$ whenever $s \rsrew{\RSS} t$. From the
fact that $s_i \to_{\RSS}^{k_i} t_i$ we thus obtain $k_i \leqslant [s_i]_1 - [t_i]_1$.

\item \label{en:relative:ii} 
$([s_{i+1}])_1 \leqslant ([t_i])_1 + \Delta$ holds for all $0 \leqslant i < m$ 
by the assumption.

\item \label{en:relative:iii} 
There exists a number $c$ such that for any term $s \in T$,
$[s]_1 \leqslant c \cdot \size{s}^d$. 
This follows by the degree of $\A$.
\end{enumerate}
We obtain the following inequalities:
\begin{align*}
  \dheight(s_0, \rsrew{\RS \cup \SS}) & = m + k_0 + \dots + k_m \\
  & \leqslant m + ([s_0]_1 - [t_0]_1) + \dots + ([s_m]_1 - [t_m]_1) \\
  & = m + [s_0]_1 + ([s_1]_1 - [t_0]_1) + \dots + ([s_m]_1 - [t_{m-1}]_1) - [t_m]_1 \\
  & \leqslant m + [s_0]_1 + ([t_0]_1 + \Delta - [t_0]_1) + \dots  
   - [t_m]_1\\
  & \leqslant m + [s_0]_1 + m\Delta  - [t_m]_1\\
  & \leqslant m + [s_0]_1 + m\Delta\\
  & \leqslant (1+\Delta) m  + c \cdot \size{s_0}^d \tpkt
\end{align*}
Here we use property~\ref{en:relative:i}) $m$-times in the second line. 
We used property~\ref{en:relative:ii}) in the third line
and property~\ref{en:relative:iii}) in the last line.
\end{proof}

A question is when a weight gap is admitted.  We present two
conditions.  We start with a simple version for derivational
complexity, and then we adapt it for runtime complexity.

We employ a very restrictive form of TMIs.
Every $f \in \FS$ is interpreted by the following restricted linear function:
\begin{equation*}
  f_{\A} \colon (\vec{v}_1,\ldots,\vec{v}_n) 
\mapsto \mathbf{1} \vec{v}_1 + \ldots + \mathbf{1} \vec{v}_n + \vec{f}
\tpkt
\end{equation*}
I.e., the only matrix employed in this interpretation is the unit matrix
$\mathbf{1}$. Such a matrix interpretation is called 
\emph{strongly linear} (\emph{SLMI} for short). 

\begin{lemma}
\label{l:weightgap:i}
If $\RS$ is non-duplicating and $\A$ is an SLMI, then 
$\RS/\SS$ and $\A$ admit a weight gap on all terms.
\end{lemma}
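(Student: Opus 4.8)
The plan is to exploit the additive nature of strongly linear interpretations in each coordinate: the change in weight caused by a single $\RS$-step will turn out to depend only on the applied rule, neither on the surrounding context nor on the matching substitution, and non-duplication makes the substitution contribution non-positive.

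First I would record two auxiliary observations about an SLMI $\A$, each proved by a routine induction. \textbf{(i)} For every one-hole context $C$ and every term $u$ there is a constant $\kappa_C \in \NN$, depending only on $C$, with $[C[u]]_1 = \kappa_C + [u]_1$. This holds because the only matrix occurring in $\A$ is $\mathbf{1}$, so each $f_\A$ merely adds its arguments together with a constant vector over $\NN$; the induction is on the structure of $C$, with $\kappa_{\ctx} = 0$. \textbf{(ii)} For every term $u$ and substitution $\sigma$, $[u\sigma]_1 = [u]_1 + \sum_{x \in \Var(u)} |u|_x \cdot [x\sigma]_1$, where $|u|_x$ denotes the number of occurrences of $x$ in $u$ and, as always, $[\cdot]$ evaluates variables to $\vec{0}$ under $\alpha_0$. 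The induction on $u$ is immediate from the shape of the interpretation functions. Note that all quantities appearing here are non-negative, since $\vec{f} \in \NN^d$ for every $f \in \FS$.

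Next I would analyse an arbitrary step $s \to_\RS t$, say at position $p$ using a rule $l \to r \in \RS$ and a substitution $\sigma$, so that $s = C[l\sigma]$ and $t = C[r\sigma]$ for the context $C = s[\ctx]_p$. By observation~(i), $[t]_1 - [s]_1 = [r\sigma]_1 - [l\sigma]_1$, and by observation~(ii) this difference equals $([r]_1 - [l]_1) + \sum_{x}(|r|_x - |l|_x)\cdot[x\sigma]_1$, the sum ranging over $\Var(r) \subseteq \Var(l)$. Since $\RS$ is non-duplicating we have $|r|_x \leqslant |l|_x$ for every variable $x$, and since $[x\sigma]_1 \geqslant 0$ the whole sum is $\leqslant 0$. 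Hence $[t]_1 - [s]_1 \leqslant [r]_1 - [l]_1$.

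Finally, since $\RS$ is finite, $\Delta \defsym \max(\{0\} \cup \{[r]_1 - [l]_1 \mid l \to r \in \RS\})$ is a well-defined natural number, and the previous paragraph yields $[t]_1 - [s]_1 \leqslant \Delta$ for \emph{every} step $s \to_\RS t$, in particular for every $s \in {\to^*_{\RS \cup \SS}}(T)$ with $T$ the set of all terms. Thus $\RS/\SS$ and $\A$ admit the weight gap $\Delta$ on all terms. The only mildly technical point is the bookkeeping in the two inductions~(i) and~(ii); once they are in place, non-duplication together with non-negativity of the constants finishes the argument immediately.
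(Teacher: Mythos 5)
Your proposal is correct and follows essentially the same route as the paper's proof: the paper's equation for $\eval{\alpha}{\A}(r) \modminus \eval{\alpha}{\A}(l)$ in terms of the matrices $L_i, R_i$ is exactly your observation~(ii) (with $L_i = |l|_{x_i}\cdot\mathbf{1}$, $R_i = |r|_{x_i}\cdot\mathbf{1}$, cancelled by non-duplication), and the paper's induction on the surrounding context $C$ using strong linearity is your observation~(i), stated there as an inequality rather than the exact additivity $[C[u]]_1 = \kappa_C + [u]_1$. The only cosmetic difference is your use of $\max(\{0\}\cup\dots)$ in place of the paper's modified minus to keep $\Delta \in \NN$.
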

\begin{proof}
Let $\Delta \defsym \max \{ [r]_1 \modminus [l]_1 \mid l \to r \in \RS \}$.
We show that $\Delta$ gives a weight gap.
In proof, we first show the following equality.
\begin{equation} \label{eq:1}
  \Delta = \max \{(\eval{\alpha}{\A}(r))_1 \modminus (\eval{\alpha}{\A}(l))_1 \mid
  l \to r \in \RS, \alpha \colon \VS \to \A \} \tpkt
\end{equation}
Although the proof is not difficult, we give the full account in order
to utilise it later. 
Observe that for any matrix interpretation 
$\A$ and rule ${l \to r} \in {\RS}$, there 
exist matrices (over $\N$) 
$L_1,\dots,L_k$, $R_1,\dots,R_k$ and vectors $\vec{l}$, $\vec{r}$ such that:
\begin{equation*}
 \eval{\alpha}{\A}(l) = \sum_{i=1}^k  L_i \cdot \alpha(x_i) + \vec{l} \hspace{10ex}
 \eval{\alpha}{\A}(r) = \sum_{i=1}^k  R_i \cdot \alpha(x_i) + \vec{r} \tkom
\end{equation*}
where $k$ denotes the cardinality of $\Var(l) \supseteq \Var(r)$.
Conclusively, we obtain:
\begin{equation} \label{eq:2}
\eval{\alpha}{\A}(r) \modminus \eval{\alpha}{\A}(l) = 
  \sum_{i=1}^k (R_i \modminus L_i) \alpha(x_i) + (\vec{r} \modminus \vec{l}) 
  \tpkt
\end{equation}
Here $\modminus$ denotes the natural component-wise 
extension of the modified minus to vectors. 

As $\A$ is an SLMI the matrices $L_i$, $R_i$ are obtained by multiplying or adding
unit matrices, where the latter case can only happen if (at least one)
of the variables $x_i$ occurs multiple times in $l$ or $r$. Due to the
fact that $l \to r$ is non-duplicating, this effect is canceled out.
Thus the right-hand side of~\eqref{eq:2} is independent on the
assignment $\alpha$ and we conclude:
\begin{equation*}
[r]_1 \modminus [l]_1 = (\eval{\alpha}{\A}(r) \modminus \eval{\alpha}{\A}(l))_1 = (\vec{r} \modminus \vec{l})_1 \tpkt
\end{equation*}
By definition $\Delta = \max\{[r]_1 \modminus [l]_1 \mid l \to r \in \RS \}$ and
thus~\eqref{eq:1} follows.

Let $C[\ctx]$ denote a (possible empty) context such that
$s = C[l\sigma] \rsrew{\RS} C[r\sigma] = t$, where ${l \rew r} \in {\RS}$ and
$\sigma$ a substitution. 
We prove the lemma by induction on~$C$. 
\begin{enumerate}
\item \label{en:weightgap:i:i}
Suppose $C[\ctx] = \ctx$, that is, $s = l\sigma$ and $t = r\sigma$. There exists
an assignment $\alpha_1$ such that
$[l\sigma] = \eval{\alpha_1}{\A}(l)$ and
$[r\sigma] = \eval{\alpha_1}{\A}(r)$.
By~\eqref{eq:1} we conclude for the assignment $\alpha_1$:
$(\eval{\alpha_1}{\A}(l))_1 + \Delta \geqslant (\eval{\alpha_1}{\A}(r))_1$.
Therefore in sum we obtain $[s]_1 + \Delta \geqslant [t]_1$.

\item \label{en:weightgap:i:ii}
Suppose $C[\ctx] = f(t_1,\dots,t_{i-1},C'[\ctx],t_{i+1},\dots,t_n)$.
Hence, we obtain:
\begin{align*}
& [f(t_1,\dots,C'[l\sigma],\dots,t_n)]_1 + \Delta  \\
= {} 
& [t_1]_1 + \dots + ([C'[l\sigma]]_1 + \Delta) + \dots + 
  [t_n]_1 + (\vec{f})_1 \\
\geqslant {}
& [t_1]_1 + \dots + [C'[r\sigma]]_1 + \dots + 
  [t_n]_1 + (\vec{f})_1 \\
= {} 
& [f(t_1,\dots,C'[r\sigma],\dots,t_n)]_1 \tkom
\end{align*}
for some vector $\vec{f} \in \N^d$. 
In the first and last line, we employ the fact that $\A$ is strongly
linear. In the second line the induction hypothesis is applied 
together with the (trivial) fact that $\A$ is strictly monotone 
on all arguments of $f$ by definition.
\end{enumerate}
\end{proof}

Note that the combination of Theorem~\ref{t:wgp} and 
Lemma~\ref{l:weightgap:i} corresponds to (the corrected version of) 
Theorem~24 in~\cite{HM:2008}. In~\cite{HM:2008} 1-dimensional SLMIs
are called \emph{strongly linear interpretations} (\emph{SLIs} for short).

\begin{example}
Consider the TRS $\RS$
\begin{align*}
1\colon~ \m{f}(\m{s}(x)) & \to \m{f}(x - \m{s}(\m{0})) &
2\colon~ x - \m{0} & \to x &
3\colon~ \m{s}(x) - \m{s}(y) & \to x - y
\tpkt
\end{align*}
$\PP \defsym \WDP(\RS)$ consists of  the three pairs
\begin{align*}
\m{f}^\sharp(\m{s}(x)) & \to \m{f}^\sharp(x - \m{s}(\m{0})) &
x -^\sharp \m{0} & \to x &
\m{s}(x) -^\sharp \m{s}(y) & \to x -^\sharp y
\tkom
\end{align*}
and $\UU(\PP) = \{ 2,3 \}$.  Obviously $\PP$ is non-duplicating and
there exists an SLI $\A$ with $\UU(\PP) \subseteq {\gord{\A}}$.
Thus, Lemma~\ref{l:weightgap:i} yields a weight gap for $\PP/\UU(\PP)$.
By taking the $1$-dimensional RMI $\BB$ with
\begin{xalignat*}{3}
\m{s}_\BB(x) & = x + 1 &
{-}_\BB(x,y) & = x &
\m{f}_\BB(x) & = \m{f}^\sharp_\BB(x) = x 
\\
\m{0}_\BB & = 0 &
{-^\sharp}_\BB(x,y) & = x + 1 
\tkom
\end{xalignat*}
we obtain $\PP \subseteq {\gord{\BB}}$ and 
$\UU(\PP) \subseteq {\geqord{\BB}}$.  
Therefore, $\comp(n, \TBS, {\rsrew{\PP/\UU(\PP)}}) = \bO(n)$.
Hence, 
$\Rc{\RS}(n) = \comp(n, \TBS, {\rsrew{\PP \cup \UU(\PP)}}) = \bO(n)$
is concluded by Theorem~\ref{t:wgp}.
\end{example}

The next lemma shows that there is no advantage to consider 
SLMIs of dimension $k \geqslant 2$.
\begin{lemma}
If $\RSS$ is compatible with some SLMI
$\A$ then $\RSS$ is compatible with some SLI $\BB$.
\end{lemma}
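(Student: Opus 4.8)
The plan is to obtain the SLI $\BB$ from the SLMI $\A$ by projecting every interpretation onto its first coordinate, and then to check that this projection commutes with evaluation — which is exactly the point where strong linearity is needed. Concretely, if $\A$ has dimension $d$ and interprets each $n$-ary symbol $f$ by $f_\A(\vec v_1,\dots,\vec v_n) = \vec v_1 + \dots + \vec v_n + \vec f$ with $\vec f \in \N^d$, I would define $\BB$ over $\N$ by $f_\BB(x_1,\dots,x_n) \defsym x_1 + \dots + x_n + (\vec f)_1$. Each $f_\BB$ is then a strongly linear polynomial, so $\BB$ is an SLI (and hence a well-founded weakly monotone algebra), and the only remaining task is to show $\RSS \subseteq {\gord{\BB}}$.

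The central step is the following claim: for every assignment $\alpha \colon \VS \to \N$, writing $\bar\alpha \colon \VS \to \N^d$ for the assignment $\bar\alpha(x) = (\alpha(x),0,\dots,0)$, one has $\eval{\alpha}{\BB}(t) = (\eval{\bar\alpha}{\A}(t))_1$ for all terms $t$. I would prove this by induction on $t$. The base case $t \in \VS$ is immediate from the definition of $\bar\alpha$. For $t = f(t_1,\dots,t_n)$, since $f_\A$ merely adds the vectors $\eval{\bar\alpha}{\A}(t_i)$ together with $\vec f$, the first component of $\eval{\bar\alpha}{\A}(t)$ equals $\sum_i (\eval{\bar\alpha}{\A}(t_i))_1 + (\vec f)_1$; applying the induction hypothesis and the definition of $f_\BB$ yields $\eval{\alpha}{\BB}(t)$. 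This is the only place where the SLMI restriction is essential: for an arbitrary matrix interpretation the off-diagonal entries of the $F_i$ would feed the other coordinates of $\eval{\bar\alpha}{\A}(t_i)$ into the first one, and the identity would fail. So the main (and only real) obstacle is simply isolating and stating this commutation lemma correctly.

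Given the claim, the conclusion is immediate. Let $l \to r \in \RSS$ and let $\alpha \colon \VS \to \N$ be arbitrary. Compatibility of $\RSS$ with $\A$ yields $\eval{\bar\alpha}{\A}(l) > \eval{\bar\alpha}{\A}(r)$, and by the definition of $>$ on $\N^d$ this in particular gives $(\eval{\bar\alpha}{\A}(l))_1 > (\eval{\bar\alpha}{\A}(r))_1$, i.e.\ $\eval{\alpha}{\BB}(l) > \eval{\alpha}{\BB}(r)$. Since $\alpha$ was arbitrary, $l \to r \in {\gord{\BB}}$, whence $\RSS \subseteq {\gord{\BB}}$, as desired. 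Everything outside the commutation lemma is routine bookkeeping.
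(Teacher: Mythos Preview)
Your proof is correct and follows essentially the same approach as the paper: define $\BB$ by projecting each interpretation's constant vector to its first coordinate, lift a one-dimensional assignment $\alpha$ to a $d$-dimensional one by zero-padding, and prove the commutation identity $\eval{\alpha}{\BB}(t) = (\eval{\bar\alpha}{\A}(t))_1$ by structural induction. You spell out the inductive step and the role of strong linearity more explicitly than the paper does, but the argument is the same.
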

\begin{proof}
Let $\A$ be an SLMI of dimension $k$.
Further, let $\alpha : \VS \to \N$ denote an arbitrary assignment. 
We define $\widehat{\alpha} \colon \VS \to \N^k$ 
as $\widehat{\alpha}(x) = (\alpha(x),0,\dots,0)^\top$
for each variable $x$.  We define the SLI $\BB$ by 
$f_\BB(x_1,\dots,x_n) = x_1 + \cdots + x_n + \vec{f}_1$. 
Then, 
\begin{align*}
f_\BB(x_1,\dots,x_n) 
  & = \left( 
    (x_1,0,\dots,0)^\top + \cdots + (x_n,0,\dots,0)^\top + \vec{f} 
  \right)_1 \\
  & = \left(
    f_\A((x_1,0,\dots,0)^\top,\dots,(x_n,0,\dots,0)^\top))
    \right)_1
\end{align*}
Therefore, easy structural induction shows that
$\eval{\alpha}{\BB}(t) = (\eval{\widehat\alpha}{\A}(t))_1$
for all terms $t$.  Hence, 
$\RSS \subseteq {\gord{\BB}}$ whenever $\RSS \subseteq {\gord{\A}}$.
\end{proof}

The next example shows that in Lemma~\ref{l:weightgap:i}  SLMIs 
cannot be simply replaced by RMIs.

\begin{example}
\label{ex:7}
Consider the TRSs $\RSexp$
\begin{align*}
 \m{exp}(\mN) & \to \ms(\mN) & \m{d}(\mN) & \to \mN \\
\m{exp}(\m{r}(x)) & \to \m{d}(\m{exp}(x)) & \m{d}(\ms(x)) & \to \ms(\ms(\m{d}(x))) 
\tpkt
\end{align*}
This TRS formalises the exponentiation
function. Setting $t_n = \m{exp}(\m{r}^n(\mN))$ we obtain
$\dheight(t_n, \rsrew{\RSexp}) \geqslant 2^n$ for each $n \geqslant 0$. Thus
the runtime complexity of $\RSexp$ is exponential.

In order to show the claim, we split $\RSexp$ into two TRSs 
$\RS = \{\m{exp}(\mN) \to \ms(0), \m{exp}(\m{r}(x)) \to \m{d}(\m{exp}(x))\}$
and $\RSS = \{\m{d}(\mN) \to \mN, \m{d}(\ms(x)) \to \ms(\ms(\m{d}(x))) \}$.
Then it is easy to verify that the next $1$-dimensional RMI $\A$ 
is compatible with $\RSS$: 
\begin{equation*}
  \mN_{\A} = 0 \qquad \m{d}_{\A}(x) = 3x \qquad \ms_{\A}(x) = x + 1 \tpkt
\end{equation*}
Moreover an upper-bound of $\dheight(t_n ,{\rsrew{\RS/\RSS}})$ 
can be estimated by using the following $1$-dimensional TMI $\BB$:
\begin{equation*}
  \mN_{\BB} = 0 \qquad \m{d}_{\BB}(x) = \ms_{\BB}(x) = x \qquad
  \m{exp}_{\BB}(x) = \m{r}_{\BB}(x) = x + 1 \tpkt
\end{equation*}
%
% $\mN_{\BB} = 0$, $\m{d}_{\BB}(x) = \ms_{\BB}(x) = x$, and 
% $\m{exp}_{\BB}(x) = \m{r}_{\BB}(x) = x + 1$.

Since ${\rsrew{\RS}} \subseteq {\gord[>]{\BB}}$ and 
${\rssrew{\RSS}} \subseteq {\geqord[\geqslant]{\BB}}$ hold, we have 
${\rsrew{{\RS}/{\RSS}}} \subseteq {\gord[>]{\BB}}$.  Hence
$\dheight(t_n, \rsrew{{\RS}/{\RSS}}) \leqslant \eval{\alpha_0}{\BB}(t_n) = n+2$.
But clearly from this we cannot conclude a polynomial bound on the derivation 
length of $\RS \cup \RSS = \RSexp$,
as the runtime complexity of $\RSexp$ is exponential.
\end{example}

Furthermore, non-duplication of $\RS$ is also essential for 
Lemma~\ref{l:weightgap:i}.%
\footnote{This example is due to Dieter Hofbauer and Andreas Schnabl.}

\begin{example}
Consider the following $\RS \cup \SS$
\begin{alignat*}{4}
1\colon &\;&  \mf(\ms(x),y) &\to  \mf(x,\md(y,y,y)) & \qquad
2\colon &\;&  \md(\mN,\mN,x) &\to x 
\\
&& && 3\colon &\;& \md(\ms(x),\ms(y),z) &\to \md(x,y,\ms(z))
\tpkt
\end{alignat*}
Let $\RS = \{1\}$ and let $\SS = \{2,3\}$.
The following SLI $\A$ is compatible with $\SS$:
\begin{equation*}
  \md_\A(x,y,z) = x + y + z + 1 \qquad
  \ms_\A(x) = x + 1 \qquad
  \mN_\A = 0 \tpkt
\end{equation*}
Furthermore, the following 
$\URM{\RS \cup \SS}$-monotone 1-dimensional RMI $\BB$ orients 
the rule in $\RS$ strictly, while the rules in $\SS$ are weakly oriented.
\begin{equation*}
  \mf_\BB(x,y) = x \qquad \md_\BB(x,y,z) = x+y+z \qquad
  \ms_\BB(x) = x + 1 \qquad \mN_\BB = 0 \tpkt
\end{equation*}
Thus, $\comp(n, \TB, {\to_{\RS/\SS}}) = \bO(n)$ is obtained.
If the restriction that $\RS$ is non-duplicating could be dropped
from Lemma~\ref{l:weightgap:i}, we would conclude 
$\Rc{\RS \cup \SS}(n) = \bO(n)$.
However, it is easy to see that $\Rc{\RS \cup \SS}$ is at least 
exponential. Setting $t_n \defsym \mf(\ms^n(\mN),\ms(\mN))$, we
obtain $\dheight(t_n,\rsrew{\RS \cup \RSS}) \geqslant 2^n$ for any $n \geqslant 1$.
\end{example}

We present a weight gap condition for runtime complexity analysis.
When considering the derivation in the beginning of this section 
(on page~\pageref{eq:relative}), every step by a weak dependency pair
only takes place as an outermost step.  Exploiting this fact we
can relax the restriction that was imposed in the above examples.
To this end, we introduce a generalised notion of non-duplicating TRSs.
%NH dropped
%GM perhaps useful in iTRS
% Let $\RS$ be a TRS and ${l \to r} \in {\RS}$.  An \emph{abstraction}
% of $l \rew r$ is a non-duplicating rule $l' \rew r'$ such that 
% $(l' \rew r')\sigma = l \rew r$ for some substitution $\sigma$ mapping 
% variables to variables. 
%
% \begin{definition}
% A TRS $\RS$ is \emph{non-duplicating with respect to an algebra $\A$} 
% (\emph{$\A$-non-duplicating} for short), if for all rules ${l \rew r} \in {\RS}$ 
% there exists an abstraction
% $l' \rew r'$ of $l \rew r$ such that 
% $l \eqord{\A} l'$ and $r \eqord{\A} r'$.  
% \end{definition}

Below 
\( 
\max\,\{\, ([\alpha]_\A(r))_1 \modminus ([\alpha]_\A(l))_1 
\mid \text{$l \to r \in \PP$ and $\alpha : \VV \to \A$} \,\}
\)
is referred to as $\WG(\A,\PP)$.
We say that a $\mu$-monotone RMI is \emph{adequate} if
all compound symbols are interpreted as $\mu$-monotone SLMI.

\begin{lemma}
\label{l:weightgap:ii}
Let $\PP = \WDP(\RS)$ and let $\A$ be an 
adequate $\tmu^{\PP \cup \UU(\PP)}$-monotone RMI.  
Suppose $\WG(\A,\PP)$ is well-defined on $\NN$. % and 
Then, $\PP/\UU(\PP)$ and $\A$ admit a weight gap on $\TBS$.
\end{lemma}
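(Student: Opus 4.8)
The goal is to produce, for the relative TRS $\PP/\UU(\PP)$ together with an adequate $\tmu$-monotone RMI $\A$, a weight gap $\Delta$ that works on the set $\TBS$ of sharped basic terms. The key structural fact — which makes the runtime version go through even when $\RS$ (and hence $\PP$) is duplicating — is that every $\PP$-step issued from a term in $\Desc(\TBS)$ occurs at the \emph{root}: weak dependency pairs have a dependency-pair symbol $f^\sharp$ at the root of their left-hand side, and $f^\sharp$ only ever appears at the root of a term reachable from $\TBS$ via $\PP \cup \UU(\PP)$-steps (the compound symbols on right-hand sides wrap only $g^\sharp$'s or variables, and $\UU(\PP)$ contains no $\sharp$-symbols at all). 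So I would first record this as a small observation: if $s \in {\to^*_{\PP \cup \UU(\PP)}}(\TBS)$ and $s \to_\PP t$, then the rewrite takes place at position $\epsilon$, i.e.\ $s = l\sigma$, $t = r\sigma$ for some $l \to r \in \PP$ and some substitution $\sigma$.

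**Choosing $\Delta$ and closing the root case.** Set $\Delta \defsym \WG(\A,\PP)$, which is well-defined on $\NN$ by hypothesis; concretely $\Delta = \max\{([\alpha]_\A(r))_1 \modminus ([\alpha]_\A(l))_1 \mid l \to r \in \PP,\ \alpha : \VV \to \A\}$. For the root case this is immediate: given $s = l\sigma \to_\PP r\sigma = t$, pick the assignment $\alpha_1$ with $[l\sigma] = [\alpha_1]_\A(l)$ and $[r\sigma] = [\alpha_1]_\A(r)$; then $[t]_1 \modminus [s]_1 = ([\alpha_1]_\A(r))_1 \modminus ([\alpha_1]_\A(l))_1 \leqslant \Delta$, hence $[t]_1 - [s]_1 \leqslant \Delta$. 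Note there is no induction over a surrounding context here — that is exactly the point of the root observation, and it is why duplication of $\PP$ does no harm (the blow-up argument that killed Lemma~\ref{l:weightgap:i} for RMIs needed a nontrivial context around the redex).

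**Why adequacy is needed.** The one place a context could still appear is if a later rewrite step weren't at the root — but by the observation all $\PP$-steps from $\Desc(\TBS)$ are root steps, so actually the only context that can ever sit above a $\PP$-redex in such a derivation is a nesting of compound symbols (the root of a $\PP$-right-hand side is $\COM(\dots)$). This is precisely where \emph{adequacy} enters: compound symbols are interpreted by $\mu$-monotone SLMIs, so $\COM$-contexts add vectors of variables with unit-matrix coefficients and thus preserve the weight difference additively, exactly as in part~\ref{en:weightgap:i:ii}) of the proof of Lemma~\ref{l:weightgap:i}. So the proof shape is: (i) the root observation; (ii) the $\Delta$-bound at the root via $\WG(\A,\PP)$; (iii) an induction on the (compound-symbol) context lifting (ii) to arbitrary $t \in \Desc(\TBS)$, using adequacy and weak monotonicity of $\A$ on the relevant argument positions. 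Since a weight gap is a statement about single $\RS$-steps ($\PP$-steps here) on descendants of $T = \TBS$, (iii) finishes it.

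**Main obstacle.** The delicate point is justifying the root observation cleanly, i.e.\ arguing that no $\PP$-step from a term reachable from $\TBS$ via $\PP \cup \UU(\PP)$ can fire strictly below the root. One has to track that $\sharp$-symbols occur only at the root in $\TBS$, that $\UU(\PP)$-steps never introduce or move a $\sharp$-symbol (right-hand sides of usable rules are ordinary terms), and that a $\PP$-step rewrites the unique root $f^\sharp$ into a $\COM(\dots)$ whose holes again carry $g^\sharp$ only at their roots or are plain variables instantiated by $\RS$-normal subterms — so the invariant "all $\sharp$-symbols sit at root-of-a-compound-context positions, and left-hand sides of $\PP$ need an $f^\sharp$ at their own root" is maintained along the whole derivation. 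Once that invariant is in place, everything else is a routine adaptation of the SLMI argument in Lemma~\ref{l:weightgap:i}, restricted to compound-symbol contexts and using $\mu$-monotonicity instead of full strict monotonicity.
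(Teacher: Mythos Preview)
Your proof eventually arrives at the right structure --- base case at the redex plus an induction over a compound-symbol context, using adequacy to push the bound through each layer --- and that is exactly what the paper does. But your exposition is built around a false ``root observation'', and the text oscillates between two incompatible claims.

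The assertion ``if $s \in {\to^*_{\PP \cup \UU(\PP)}}(\TBS)$ and $s \to_\PP t$, then the rewrite takes place at position $\epsilon$'' is simply wrong whenever $\PP$ contains a pair with a non-nullary compound symbol. Take any rule $l^\sharp \to c(u_1^\sharp,u_2^\sharp) \in \PP$ with $c$ binary: one $\PP$-step from $l^\sharp\sigma \in \TBS$ yields $c(u_1^\sharp\sigma,u_2^\sharp\sigma)$, and the next $\PP$-step fires at position $1$ or $2$, not at $\epsilon$. So $\sharp$-symbols do \emph{not} stay at the root, and your sentence ``there is no induction over a surrounding context here'' is not sustainable. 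You then half-correct yourself (``the only context that can ever sit above a $\PP$-redex \dots\ is a nesting of compound symbols''), which is the right fact, but you present it as a consequence of the root claim rather than as its replacement.

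The paper's proof makes precisely this observation and nothing more: writing $s = C[l\sigma]$, $t = C[r\sigma]$ with $l \to r \in \PP$, one notes that because $s$ is reachable from $\TBS$, every symbol on the path from the root of $C$ to the hole is a compound symbol. Then a straightforward induction on $C$: the base case $C = \Box$ uses $\Delta = \WG(\A,\PP)$ directly; in the step $C = c(u_1,\ldots,C',\ldots,u_n)$ with $i \in \mu(c)$, adequacy gives that $c_\A$ is an SLMI, so the argument of part~\ref{en:weightgap:i:ii}) in Lemma~\ref{l:weightgap:i} applies verbatim. Your step~(iii) is this induction; your step~(i) should be the compound-context observation, not the root claim. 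The ``main obstacle'' you spend the last paragraph on --- keeping $\sharp$-symbols at the root --- is the wrong invariant; the invariant you actually need (and which is easy) is that $\sharp$-symbols only occur immediately below a stack of compound symbols.
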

\begin{proof}
The proof follows the proof of Lemma~\ref{l:weightgap:i}.
We set $\Delta = \WG(\A,\PP)$.
Let $s \rsrew{\PP} t$ with $s \in {\to_{\PP \cup \UU(\PP)}}(\TBS)$.
One may write $s = C[l\sigma]$ and $t = C[r\sigma]$ with $l \rew r \in \PP$,
where $C$ denotes a context.
Note that due to $s \in {\to_{\PP \cup \UU(\PP)}}(\TBS)$ all function
symbols above the hole in $C$ are compound symbols.
We perform induction on~$C$.  
\begin{enumerate}
\item
If $C = \Box$ then $[t]_1 - [s]_1 \leqslant \Delta$
by the definition of $\WG(\A,\PP)$.
\item
For inductive step, $C$ must be of the form 
$c(u_1,\ldots,u_{i-1},C',u_{i+1},\ldots,u_n)$ with $i \in \mu(c)$.  
Since $\A$ is adequate, $c_\A$ is a SLMI.  The rest of reasoning is
same with \ref{en:weightgap:i:ii}) in the proof of
Lemma~\ref{l:weightgap:i}.
\end{enumerate}
\end{proof}

\begin{example}[continued from Example~\ref{ex:relative}]
Consider the following adequate $\URM{\PP \cup \UU(\PP)}$-monotone $1$-dimensional
RMI $\BB$:
\begin{xalignat*}{4}
\m{0}_\BB & = \m{c}_\BB = \m{d}_\BB = 0 
& 
\m{s}_\BB(x) & = x + 2
&
\m{-}_\BB(x, y) & = \m{-}^\sharp_\BB(x, y) = {\div}^\sharp_\BB(x, y) = x + 1
\end{xalignat*}
Since $\Delta(\BB,\PP)$ is well-defined
(indeed $1$), $\BB$ admits the weight gap of Lemma~\ref{l:weightgap:ii}.
Moreover, $\UU(\PP)$ is compatible with ${\gord{\BB}}$.
As $\comp(n, \TBS, {\rsrew{\PP/\UU(\PP)}}) = \bO(n)$ 
was shown in Example~\ref{ex:relative},
Theorem~\ref{t:wgp} deduces linear runtime complexity for $\RSdiv$.
\end{example}

In Lemma~\ref{l:weightgap:ii} $\WG(\A,\PP)$ must be well-defined.

\begin{example}
Consider the following TRS $\RS$
\begin{alignat*}{4}
1\colon~ && \m{f}([\,]) & \to [\,] & 
3\colon~ && \m{g}([\,], z) & \to z \\
2\colon~ && \m{f}(x : y) & \to x : \m{f}(\m{g}(y, [\,])) \qquad &
4\colon~ && \m{g}(x : y, z) & \to \m{g}(y, x : z) 
\end{alignat*}
whose optimal innermost runtime complexity is quadratic.
The weak innermost dependency pairs $\PP \defsym \WIDP(\RS)$ are
\begin{alignat*}{4}
5\colon~ && \m{f}^\sharp([\,]) & \to \m{c} & 
7\colon~ && \m{g}^\sharp([\,], z) & \to \m{d} \\
6\colon~ && \m{f}^\sharp(x : y) & \to \m{f}^\sharp(\m{g}(y, [\,])) \qquad &
8\colon~ && \m{g}^\sharp(x : y, z) & \to \m{g}^\sharp(y, x : z) 
\end{alignat*}
and $\UU(\PP) = \{3,4\}$.
It is not difficult to show
$\comp(n, \TBS, {\rsirew{\PP/\UU(\PP)}}) = \bO(n)$
with a $1$-dimensional RMI.  Moreover, the 
$\IURM{\PP \cup \UU(\PP)}$-monotone $1$-dimensional RMI $\A$ with
\begin{align*}
[\,]_\A & = 0 &
{:}_\A(x,y) & = y + 1 &
\m{g}_\A(x,y) & = 2x + y + 1  \\
\m{f}_\A(x) & = \m{f}^\sharp_\A(x) = x &
\m{g}^\sharp_\A(x,y) & = 0 &
\m{c}_\A & = \m{d}_\A = 0
\end{align*}
is compatible with $\UU(\PP)$.
If Lemma~\ref{l:weightgap:ii} would be applicable without its
well-definedness, linear innermost runtime complexity of $\RR$ 
would be concluded falsely.
Note that $\WG(\A,\PP)$ is \emph{not} well-defined on $\NN$ 
due to pair 6.
\end{example}

\begin{corollary} \label{c:main}
Let $\RS$ be a TRS, $\PP$ the set of weak (innermost) dependency
pairs, and $\mu$ be the (innermost) usable replacement map.
Suppose $\BB$ is a RMI such that $(\geqord{\BB},\gord{\BB})$ forms a 
$\mu$-complexity pair with
$\UU(\PP) \subseteq {\geqord{\BB}}$ and $\PP \subseteq {\gord{\BB}}$.
Further, suppose $\A$ is an adequate $\mu$-monotone RMI
such that $\WG(\A,\PP)$ is well-defined on $\NN$ and $\PP$ is 
compatible with $\UU(\PP)$. 

Then the (innermost) runtime complexity function $\rc^{(\m{i})}_{\RS}$
with respect to $\RS$ is polynomial. 
Here the degree of the polynomial is given by 
the maximum of the degrees of the used RMIs.
\end{corollary}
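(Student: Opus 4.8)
The plan is to compose the main results of Sections~\ref{dependency pairs} and~\ref{semantical gap}: first reduce the (innermost) runtime complexity of $\RS$ to the derivation height over $\PP \cup \UU(\PP)$ on basic terms via Theorem~\ref{t:dp:usable}, and then split the latter into the relative system $\PP/\UU(\PP)$ through the weight gap principle of Theorem~\ref{t:wgp}. Throughout, $\mu$ is the usable replacement map $\URM{\PP \cup \UU(\PP)}$ (resp.\ $\IURM{\PP \cup \UU(\PP)}$). Fixing $t \in \TB$ and using $\size{t^\sharp} = \size{t}$ together with $\rc^{(\m{i})}_{\RS}(n) = \comp(n,\TB,\rew)$, it suffices to bound $\dheight(t,\rew)$ by a polynomial in $\size{t}$ whose degree is $\max\{d_\A,d_\BB\}$, where $d_\A$ and $d_\BB$ denote the basic degrees of the RMIs $\A$ and $\BB$ furnished by Theorem~\ref{t:rmi}.

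First I would establish that $t$ is terminating with respect to $\rew$, so that Theorem~\ref{t:dp:usable} is applicable. The compatibility of $\UU(\PP)$ with $\A$ gives $\UU(\PP) \subseteq {\gord{\A}}$, so $\UU(\PP)$ is terminating; and from $\PP \subseteq {\gord{\BB}}$, $\UU(\PP) \subseteq {\geqord{\BB}}$ and the lemma on $\mu$-complexity pairs in Section~\ref{semantical gap}, we get $\dheight(s,\rsrew{\PP/\UU(\PP)}) \leqslant \Slow(s) < \infty$ for all $s \in \TBS$, hence $\PP/\UU(\PP)$ is terminating on the descendants of $\TBS$. By the usual properties of relative rewriting, $\PP \cup \UU(\PP)$ is then terminating on $\TBS$, so by Lemma~\ref{l:5} every derivation of $\PP \cup \RS$ starting from $\TBS$ is finite; since $t$ is $\RS$-reducible if and only if $t^\sharp$ is $(\PP \cup \RS)$-reducible (cf.\ the proofs of Lemmas~\ref{l:2} and~\ref{l:3}), $t$ is terminating with respect to $\rew$, and Theorem~\ref{t:dp:usable} yields $\dheight(t,\rew) = \dheight(t^\sharp,\rsrew{\PP \cup \UU(\PP)})$.

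Next I would invoke Theorem~\ref{t:wgp} with $\RS := \PP$, $\SS := \UU(\PP)$ and $T := \TBS$. Its hypotheses are met: $\PP/\UU(\PP)$ is terminating on $\TBS$; $\A$ is an adequate $\mu$-monotone RMI with $\WG(\A,\PP)$ well-defined on $\NN$, so Lemma~\ref{l:weightgap:ii} provides a weight gap $\Delta \in \NN$ of $\A$ on $\TBS$; and $\UU(\PP)$ is compatible with $\A$, which as an RMI has basic degree $d_\A$ (the only place the degree enters the proof of Theorem~\ref{t:wgp} is the bound $[s]_1 \leqslant c\,\size{s}^{d_\A}$ for $s \in \TBS$, i.e.\ for basic terms). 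Hence there is $c \in \NN$ with
\begin{equation*}
\dheight(t^\sharp,\rsrew{\PP \cup \UU(\PP)}) \leqslant (1+\Delta)\cdot\dheight(t^\sharp,\rsrew{\PP/\UU(\PP)}) + c\cdot\size{t^\sharp}^{d_\A} \tpkt
\end{equation*}
Applying the $\mu$-complexity pair lemma once more (with $\UU(\PP) \subseteq {\geqord{\BB}}$ and $\PP \subseteq {\gord{\BB}}$) gives $\dheight(t^\sharp,\rsrew{\PP/\UU(\PP)}) \leqslant \Slow(t^\sharp)$, and taking $\Slow(\cdot) = [\cdot]_1$ together with the facts that $\BB$ is an RMI of basic degree $d_\BB$ and that $t^\sharp \in \TBS$ is a basic term of $\PP \cup \UU(\PP)$, Theorem~\ref{t:rmi} yields $\dheight(t^\sharp,\rsrew{\PP/\UU(\PP)}) = \bO(\size{t^\sharp}^{d_\BB})$. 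Substituting into the displayed inequality and using $\size{t^\sharp} = \size{t}$ gives $\dheight(t,\rew) = \bO(\size{t}^{\max\{d_\A,d_\BB\}})$, so $\rc^{(\m{i})}_{\RS}$ is polynomial of degree $\max\{d_\A,d_\BB\}$.

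I expect the main obstacle to be bookkeeping rather than a single hard step: after Theorem~\ref{t:dp:usable} has re-expressed the derivation length over $\PP \cup \UU(\PP)$, each remaining ingredient must be invoked on exactly the right class $\TBS$ of starting terms. In particular, Lemma~\ref{l:weightgap:ii} only guarantees the weight gap on descendants of $\TBS$, where adequacy of $\A$ forces every $\PP$-redex to sit below a context built solely from compound symbols; and the basic-degree bounds of Theorem~\ref{t:rmi} apply only because $t^\sharp$ has constructor arguments in $\PP \cup \UU(\PP)$. It is worth stressing that the two interpretations play disjoint roles: $\A$ bounds, via the weight gap, the total weight of the $\UU(\PP)$-blocks interleaving the $\PP$-steps, while $\BB$ bounds the number of $\PP$-steps; this is why the degree of the resulting polynomial is the maximum of the two. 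The innermost case is verbatim the same, reading $\PP = \WIDP(\RS)$ and $\rew = \irew{\RS}$, since Theorem~\ref{t:dp:usable} already reduces it to full rewriting over $\PP \cup \UU(\PP)$.
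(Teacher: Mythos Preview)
Your proposal is correct and follows exactly the composition the paper intends: Theorem~\ref{t:dp:usable} to pass from $\RS$ to $\PP \cup \UU(\PP)$ on $\TBS$, Lemma~\ref{l:weightgap:ii} together with Theorem~\ref{t:wgp} to separate the $\UU(\PP)$-blocks, the complexity-pair lemma with $\BB$ to bound $\dheight(t^\sharp,\rsrew{\PP/\UU(\PP)})$, and Theorem~\ref{t:rmi} for the degree estimates. The paper states the corollary without proof, treating it as an immediate combination of these ingredients; your write-up makes that combination explicit and handles the termination prerequisite and the restriction to $\TBS$ with the appropriate care.
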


Let $\A$ be an RMI as in the corollary. In order to verify
that $\WG(\A,\PP)$ is well-defined, we use
the following simple trick in the implementation. 
Let  $l \to r \in \PP$ and let $k$ denotes the cardinality of $\Var(l) \supseteq \Var(r)$.
Recall the existence of matrices (over $\N$) 
$L_1,\dots,L_k$, $R_1,\dots,R_k$ and vectors $\vec{l}$, $\vec{r}$ such that
$   
\eval{\alpha}{\A}(l) \modminus \eval{\alpha}{\A}(r) =
   \sum_{i=1}^k (R_i \modminus L_i) \alpha(x_i) + (\vec{r} \modminus \vec{l})
$. 
Then $\WG(\A,\PP)$ is well-defined if $(R_i \modminus L_i) \leqslant \mathbf{0}$.

\section{Weak Dependency Graphs} \label{DG}

In this section we extend the above refinements
by revisiting dependency graphs in the context
of complexity analysis. Let $\PP = \WDP(\RSdiv)$ and recall the 
derivation over $\PP \cup \UU(\PP)$ on page~\pageref{ex:relative}.
Looking more closely at this derivation we observe that we
do not make use of all weak dependency pairs in $\PP$, but we
only employ the pairs $7$ and $8$:
\label{eq:wdg}
\begin{equation*}
\mathbf{4} \div^\sharp \mathbf{2} 
~\rsrew{\{8\}/\UU(\PP)}~ \mathbf{2} \div^\sharp \mathbf{2}
~\rsrew{\{8\}/\UU(\PP)}~ \m{0} \div^\sharp \mathbf{2}
~\rsrew{\{7\}/\UU(\PP)}~ \m{c} \tpkt
\end{equation*}
Therefore it is a natural idea to modularise our complexity
analysis and apply the previously obtained techniques only
to those pairs that are relevant.
Dependencies among weak dependency pairs are formulated by the notion of weak dependency graphs,
which is an easy variant of \emph{dependency graphs}~\cite{ArtsGiesl:2000}.

\begin{definition} \label{d:DG}
Let $\RS$ be a TRS over a signature $\FS$ and 
let $\PP$ be the set of weak, weak
innermost, or (standard) dependency pairs. 
The nodes of the \emph{weak dependency graph} $\WDG(\RS)$, 
\emph{weak innermost dependency graph} $\WIDG(\RS)$, or
\emph{dependency graph} $\DG(\RS)$
are the elements of $\PP$ and there is an arrow from $s \to t$ to $u \to v$ if and
only if there exist a context $C$ and substitutions 
$\sigma, \tau \colon \VV \to \TT(\FS, \VV)$ such that
$t\sigma \rew^* C[u\tau]$, where $\rew$ denotes 
$\rsrew{\RS}$ or $\irew{\RS}$
depending on whether $\PP = \WDP(\RS)$, $\PP = \DP(\RS)$, or 
$\PP = \WIDP(\RS)$, respectively.
\end{definition}

\begin{example}[continued from Example~\ref{ex:1:WDP}]
\label{ex:1:wdg}
The weak dependency graph $\WDG(\RSdiv)$ has the following form.  
\begin{center}
\begin{tikzpicture}[node distance=5mm]
\node(6) {6} ;
\node(5) [right=of 6] {5} ;

\draw[->] (6) edge (5) ;
\draw[->] (6) edge [in=120,out=60,loop] (6) ;

\node(8) [right=of 5] {8} ;
\node(7) [right=of 8] {7} ;

\draw[->] (8) edge (7) ;
\draw[->] (8) edge [in=120,out=60,loop] (8) ;
\end{tikzpicture}
\end{center}  
\end{example}

Since weak dependency graphs represent call graphs of functions, 
grouping mutual parts helps analysis.
A graph is called \emph{strongly connected} if any node is connected with every other
node by a (possibly empty) path. 
A \emph{strongly connected component} (\emph{SCC} for short) is a
maximal strongly connected subgraph.%
\footnote{%
We use SCCs in the standard graph theoretic sense, while
in the literature SCCs are sometimes defined as \emph{maximal cycles} 
(e.g. \cite{GAO:2002,HirokawaMiddeldorp:2005,T07}). 
This alternative definition is of limited use in our context.}

\begin{definition} \label{d:1}
Let $\GG$ be a graph, let $\equiv$ denote the equivalence relation induced by SCCs, 
and let $\PP$ be a SCC in $\GG$. Consider the
\emph{congruence graph} $\PG{\GG}$ induced by the equivalence relation $\equiv$.
The set of all source nodes in $\PG{\GG}$ is denoted by $\Src(\PG{\GG})$.
Let $\KK \in \PG{\GG}$ and
let $\CC$ denote the SCC represented by $\KK$. 
Then we write $l \to r \in \KK$ if $l \to r \in \CC$.
For nodes $\KK$ and $\LL$ in $\PG{\GG}$ we write $\KK \edge \LL$, if
$\KK$ and $\LL$ are connected by an edge. The reflexive (transitive, reflexive-transitive)
closure of $\edge$ is denoted as $\redge$ ($\tedge$, $\rtedge$).
\end{definition}

\begin{example}[continued from Example~\ref{ex:1:wdg}]
Let $\GG$ denote $\WDG(\RSdiv)$.  
There are 4 SCCs in $\GG$: %,
$\{5\}$, $\{6\}$, $\{7\}$, and $\{8\}$.
Thus the congruence graph $\PG{\GG}$ has the following form:  
\begin{center}
\begin{tikzpicture}[node distance=5mm]
\node(6) {6} ;
\node(5) [right=of 6] {5} ;
\draw[->] (6) edge (5) ;

\node(8) [right=of 5] {8} ;
\node(7) [right=of 8] {7} ;
\draw[->] (8) edge (7) ;
\end{tikzpicture}
\end{center}  
Here $\Src(\PG{\GG}) = \{ \{ 6 \}, \{ 8 \} \}$.
\end{example}

\begin{example} \label{ex:2}
Consider the TRS $\RSgcd$ which computes the greatest common divisor.%
\footnote{This is Example~3.6a in Arts and Giesl's collection of TRSs~\cite{ArtsGiesl:2001}.}
\begin{alignat*}{4}
1\colon && \m{0} \leqslant y & \to \m{true} & 
6\colon && \m{gcd}(\m{0}, y) & \to y \\
2\colon && \m{s}(x) \leqslant \m{0} & \to \m{false} & 
7\colon && \m{gcd}(\m{s}(x), \m{0}) & \to \m{s}(x) \\
3\colon && \m{s}(x) \leqslant \m{s}(y) & \to x \leqslant y 
& \hspace{3ex}
8\colon && \m{gcd}(\m{s}(x), \m{s}(y)) 
& \to \m{if_{gcd}}(y \leqslant x, \m{s}(x), \m{s}(y))\\
4\colon && x - \m{0} & \to x &
9\colon && \m{if_{gcd}}(\m{true}, \m{s}(x), \m{s}(y)) 
& \to \m{gcd}(x - y, \m{s}(y)) \\
5\colon && \m{s}(x) - \m{s}(y) & \to x - y &
10\colon && \m{if_{gcd}}(\m{false}, \m{s}(x), \m{s}(y)) 
& \to \m{gcd}(y - x, \m{s}(x))
\tpkt
\intertext{The set $\WDP(\RSgcd)$ consists of the next ten weak dependency pairs:}
11\colon && \m{0} \leqslant^\sharp y & \to \m{c_1} 
& \hspace{3ex}
16\colon && \m{gcd}^\sharp(\m{0}, y) & \to y
\\
12\colon && \m{s}(x) \leqslant^\sharp \m{0} & \to \m{c_2} &
17\colon && \m{gcd}^\sharp(\m{s}(x), \m{0}) & \to x  
\\
13\colon && \m{s}(x) \leqslant^\sharp \m{s}(y) & \to x \leqslant^\sharp y &
18\colon && \m{gcd}^\sharp(\m{s}(x), \m{s}(y)) & \to \m{if_{gcd}}^\sharp(y \leqslant x, \m{s}(x), \m{s}(y))
\\
14\colon && \m{s}(x) -^\sharp \m{0} & \to x &
19\colon && \m{if_{gcd}}^\sharp(\m{true}, \m{s}(x), \m{s}(y)) 
& \to \m{gcd}^\sharp(x - y, \m{s}(y))
\\
15\colon && \m{s}(x) -^\sharp \m{s}(y) & \to x -^\sharp y &
20\colon && \m{if_{gcd}}^\sharp(\m{false}, \m{s}(x), \m{s}(y)) & \to \m{gcd}^\sharp(y - x, \m{s}(x))
\tpkt
\end{alignat*}
The congruence graph $\PG{\GG}$ of $\GG \defsym \WDG(\RSgcd)$ 
has the following form:
\begin{center}
\begin{tikzpicture}[node distance=5mm]
\node(11) {11} ;
\node(13) [right=of 11] {13} ;
\node(12) [right=of 13] {12} ;
\draw[->] (13) edge (12) ;
\draw[->] (13) edge (11) ;
\node(15) [right=of 12] {15} ;
\node(14) [right=of 15] {14} ;
\draw[->] (15) edge (14) ;
\node(18) [right=of 14] {\{18,19,20\}} ;
\node(16) [right=of 18] {16} ;
\node(17) [right=of 16] {17} ;
\draw[->] (18) edge (16) ;
\end{tikzpicture}
\end{center}
Here $\Src(\PG{\GG}) = \{ \{ 13 \}, \{ 15 \}, \{17\}, \{18,19,20\} \}$.
\end{example}

The main result in this section is stated as follows:
Let $\RS$ be a TRS, $\PP = \WDP(\RS)$, $\GG = \WDG(\RS)$, and furthermore
\label{eq:path}
\begin{equation*}
\Path(t) \defsym \max\{ \dheight(t, \rsparenirew{\QQ \cup \UU(\QQ)}) \mid 
\text{$(\PP_1,\ldots,\PP_k)$ is a path in $\PG{\GG}$ and $\PP_1 \in \Src(\PG{\GG})$} \} \tkom
\end{equation*}
where $\QQ = \bigcup_{i=1}^k \PP_i$.    Then, 
\(
\dheight(t,{\rsrew{\RS}}) = \bO(\Path(t)) 
\)
holds for all basic term $t$.  This means that one may decompose 
$\PP \cup \UU(\PP)$ into several smaller fragments and analyse these
fragments separately.

Reconsider the derivation on page~\pageref{eq:wdg}.
The only dependency pairs are from the set 
$\{7,8\}$. Observe that the order these pairs are applied is 
representable by the path 
$(\{8\},\{7\})$ in the congruence graph. 
This observation is cast into the following definition.

\begin{definition}
\label{d:pathbased}
Let $\PP$ be the set of weak (innermost) dependency
pairs and let $\GG$ denote the weak (innermost) dependency graph.
Suppose $A \colon {s} \rsparenisrew{\PP/\UU(\PP)} {t}$ denote a derivation, 
such that $s \in \TBS$. 
If $A$ can be written in the following form:
\begin{equation*}
  {s} \rsparenisrew{\PP_1/\UU(\PP)} \cdots \rsparenisrew{\PP_k/\UU(\PP)} {t} \tkom
\end{equation*}
then $A$ is \emph{based on the sequence of nodes $(\PP_1,\ldots,\PP_k)$ (in $\PG{\GG}$)}.
\end{definition}

The next lemma is an easy generalisation of the above example.

\begin{lemma}
\label{l:15}
Let $\RS$ be a TRS, let $\PP$ be the set of weak (innermost) dependency
pairs and let $\GG$ denote the weak (innermost) dependency graph. 
Suppose that all compound symbols are nullary. Then
any derivation $A \colon {s} \rsparenisrew{\PP/\UU(\PP)} {t}$ such that $s \in \TBS$
is based on a path in~$\PG{\GG}$.
\end{lemma}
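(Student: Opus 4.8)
The plan is to exploit that, because every compound symbol is nullary, each term occurring in $A$ carries a dependency pair symbol only at its very root, if at all; consequently all $\PP$-steps of $A$ happen at the root, and the pairs applied there, read off in order, trace out the desired path in $\PG{\GG}$. To establish this root invariant I would argue as follows. Since $s \in \TBS$, the term $s$ has a dependency pair symbol at its root and none below it, and all proper subterms of $s$ are constructor terms, hence $\UU(\PP)$-normal; moreover the rules of $\UU(\PP) \subseteq \RS$ contain no dependency pair symbols, and -- this is where nullarity of compound symbols enters -- the right-hand side of a weak (innermost) dependency pair is a nullary compound symbol, a variable, or a term $u^\sharp$ with $\rt(u) \in \DD$. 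A straightforward induction along $A$ then yields that every term of $A$ either has a dependency pair symbol at the root and none elsewhere, or has none at all (and in the latter case admits no further $\PP$-step). In particular every $\PP$-step of $A$ is a root step, no $\UU(\PP)$-step applies to $s$ itself, and unfolding $A$ into single steps produces a sequence that begins with a root $\PP$-step and alternates blocks of $\UU(\PP)$-steps with root $\PP$-steps.

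Next I would read off the dependency pairs $p_1,\dots,p_\ell \in \PP$, say $p_i \colon \lambda_i \to \rho_i$, applied in $A$ in this order; if $\ell = 0$ then $A$ is based on the empty path and we are done. The $p_i$-step rewrites the term reached so far -- necessarily an instance $\lambda_i\sigma_i$, since the step is at the root -- to $\rho_i\sigma_i$, after which only $\UU(\PP)$-steps occur until the term becomes the $p_{i+1}$-redex $\lambda_{i+1}\sigma_{i+1}$; hence $\rho_i\sigma_i \rssrew{\UU(\PP)} \lambda_{i+1}\sigma_{i+1}$, and since $\UU(\PP) \subseteq \RS$ also $\rho_i\sigma_i \rssrew{\RS} C[\lambda_{i+1}\sigma_{i+1}]$ with $C$ the empty context. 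By Definition~\ref{d:DG} this is an arrow from $p_i$ to $p_{i+1}$ in $\GG$, so $p_1,\dots,p_\ell$ is a walk in $\GG$. Passing to the strongly connected components $\KK_i$ of the $p_i$: whenever $\KK_i \neq \KK_{i+1}$ the arrow from $p_i$ to $p_{i+1}$ induces an edge $\KK_i \edge \KK_{i+1}$ in $\PG{\GG}$, and since $\PG{\GG}$ is acyclic, if a component occurs at positions $a < b$ of the list $\KK_1,\dots,\KK_\ell$ then the whole stretch from $a$ to $b$ lies in it (otherwise $\PG{\GG}$ would contain a cycle). Thus collapsing equal consecutive entries yields pairwise distinct components $\PP_1,\dots,\PP_m$ with $\PP_1 \edge \cdots \edge \PP_m$ a path in $\PG{\GG}$, and all $p_i$ belonging to a given $\PP_j$ form a contiguous block of the list.

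Finally I would cut $A$ along these blocks, with block $j$ running from just after the last $\PP_{j-1}$-step up to and including the last $\PP_j$-step (block $1$ from the start, and trailing $\UU(\PP)$-steps put into block $m$). Each block uses only $\UU(\PP)$-steps and $\PP_j$-steps, with the $\PP_j$-steps separated solely by $\UU(\PP)$-steps, hence is a $\rssrew{\PP_j/\UU(\PP)}$-derivation; concatenation gives $s \rssrew{\PP_1/\UU(\PP)} \cdots \rssrew{\PP_m/\UU(\PP)} t$, i.e.\ $A$ is based on the path $(\PP_1,\dots,\PP_m)$ in the sense of Definition~\ref{d:pathbased}. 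The innermost variant ($\PP = \WIDP(\RS)$, $\GG = \WIDG(\RS)$) goes through unchanged once one notes, via Lemma~\ref{l:5}, that on the terms occurring in $A$ the $\RS$-normal forms coincide with the $\PP \cup \UU(\PP)$-normal forms, so that the interspersed $\UU(\PP)$-steps are genuine innermost $\RS$-steps and still witness the arrows of $\WIDG(\RS)$. I expect the root invariant of the first paragraph to be the real obstacle: it is exactly there that nullarity of compound symbols is indispensable, since an $n$-ary compound symbol with $n \geqslant 2$ would let one $\PP$-step create two independent dependency pair symbols, so that the derivation branches into a tree rather than following a path.
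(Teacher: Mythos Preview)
The paper does not actually give a proof of this lemma; it is stated as ``an easy generalisation of the above example'' and left at that.  Your argument is correct and supplies exactly the details the paper suppresses: the nullarity hypothesis forces every reachable term to carry at most one dependency pair symbol, and only at the root, so that all $\PP$-steps are root steps; the sequence of pairs applied is then a walk in $\GG$ by Definition~\ref{d:DG}, and since $\PG{\GG}$ is a DAG the induced sequence of SCCs collapses to a genuine path.  Your handling of the innermost case via Lemma~\ref{l:5} (to identify $\UU(\PP)$-normal forms with $\RS$-normal forms on the relevant terms) is also the right move.
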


From Lemma~\ref{l:15} we see that the above mentioned modularity result
easily follows as long as the arity of the compound symbols is restricted.
We lift the assumption that all compound symbols are nullary.
Perhaps surprisingly this generalisation complicates the matter. As
exemplified by the next example, Lemma~\ref{l:15} fails if there 
exist non-nullary compound symbols.

\begin{example} \label{ex:dg:1}
Consider the TRS 
$\RS = \{\m{f}(\m{0}) \to \m{a}, 
 \m{f}(\m{s}(x)) \to \m{b}(\m{f}(x), \m{f}(x))\}$. %,
The set $\WDP(\RS)$ consists of the two weak dependency pairs:
$1\colon \m{f}^\sharp(\m{0}) \to \m{c}$ and
$2\colon \m{f}^\sharp(\m{s}(x)) \to \m{d}(\m{f}^\sharp(x), \m{f}^\sharp(x))$.
The corresponding congruence graph only contains the single edge from
$\{2\}$ to $\{1\}$.
Writing $t_n$ for $\m{f}^\sharp(\m{s}^n(\m{0}))$, we have the sequence
\begin{align*}
t_2 & \to_{\{2\}}^2 \m{d}(\m{d}(t_0, t_0), t_1) 
\rsrew{\{1\}}   \m{d}(\m{d}(\m{c}, t_0), t_1) \\
    & \rsrew{\{2\}}   \m{d}(\m{c}(\m{c}, t_0), \m{d}(t_0, t_0)) 
\to_{\{1\}}^3 \m{d}(\m{d}(\m{c}, \m{c}), \m{d}(\m{c},\m{c})) \tpkt
\end{align*}
whereas $(\{2\},\{1\},\{2\},\{1\})$ is not a path in the graph.
\end{example}

Note that the derivation in Example~\ref{ex:dg:1} can be
reordered (without affecting its length) such that the derivation becomes
based on the path $(\{2\},\{1\})$. More generally, we observe 
that a weak (innermost) dependency pair containing an $m$-ary ($m > 1$) 
compound symbol can induce $m$ \emph{independent} derivations. 
This allows us to reorder (sub-)derivations. We show this via the following sequence of lemmas.

Let $\RS$ be a TRS, let $\PP$ denote the set of weak (innermost) dependency pairs,
and let $\GG$ denote the weak (innermost) dependency graph.
The set $\TBC$ is inductively defined as follows (i) $\TTs \cup \TT \subseteq \TBC$,
where $\TTs = \{t^{\sharp} \mid t \in \TT \}$ and
(ii) $c(t_1,\ldots,t_n) \in \TBC$, whenever $t_1,\ldots,t_n \in \TBC$ and $c$ a compound symbol.
The next lemma formalises an easy observation.
\begin{lemma} \label{l:12}
Let $\CC$ be a set of nodes in $\GG$
and let $A \colon {t = t_0} \rsparenisrew{\CC/\UU(\PP)} {t_n}$ denote a
derivation based on $\CC$ with $t \in \TBC$. 
Then $A$ has the following form:
$t = t_{0} \rsparenirew{\CC/\UU(\PP)} t_{1} 
\rsparenirew{\CC/\UU(\PP)} \dots 
\rsparenirew{\CC/\UU(\PP)} t_{n}$
where each $t_i \in \TBC$. 
\end{lemma}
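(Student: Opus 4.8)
The plan is to reduce everything to the statement that $\TBC$ is closed under a single $\rsrew{\UU(\PP)}$-step and under a single $\rsrew{\CC}$-step. Granting this, the lemma is immediate: the displayed form of $A$ is just the definition of $\rsparenisrew{\CC/\UU(\PP)}$ as the reflexive--transitive closure of $\rsparenirew{\CC/\UU(\PP)}$, and since each $\rsparenirew{\CC/\UU(\PP)}$-step is a composition of $\rsrew{\UU(\PP)}$-steps and one $\rsrew{\CC}$-step (in the innermost case with the usual normal-form restrictions, which are irrelevant to the argument), an induction on the number of single rewrite steps in $A$, using $t_0 = t \in \TBC$, shows $t_i \in \TBC$ for every $i$.

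The first preparatory step is to describe the shape of an arbitrary $s \in \TBC$ by induction on its $\TBC$-derivation: $s$ is a cap of compound symbols whose leaves are either terms of $\TT$ or terms $f^\sharp(s_1,\dots,s_k)$ with all $s_j \in \TT$; I call the latter the $\sharp$-leaves, and a term \emph{ordinary} if it lies in $\TT$ (so an ordinary term contains no compound and no dependency pair symbol). Two consequences will be used: every subterm of $s$ whose root symbol lies in $\FS$ is ordinary, and a dependency pair symbol occurs in $s$ only, and in isolation, at the root of a $\sharp$-leaf. The second preparatory step is two elementary replacement facts, each proved by induction on the $\TBC$-derivation (descending into the argument of the outermost compound symbol, resp.\ the $\sharp$-leaf, that contains the position concerned): (a) if $\atpos{s}{q}$ is ordinary and $w \in \TT$, then $s[w]_q \in \TBC$; and (b) if $p$ is the position of a $\sharp$-leaf of $s$ and $v \in \TBC$, then $s[v]_p \in \TBC$.

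The closure claims then follow by short case analyses. For $s = s[l\sigma]_q \rsrew{\UU(\PP)} s[r\sigma]_q$: since $\UU(\PP) \subseteq \RS$, the root of $l$ lies in $\FS$, hence the redex $\atpos{s}{q} = l\sigma$ is ordinary, so $\sigma$ maps $\Var(l)$ to ordinary terms, $r\sigma \in \TT$, and (a) gives $s[r\sigma]_q \in \TBC$. For a step by a pair $l^\sharp \to \COM(u_1^\sharp,\dots,u_m^\sharp) \in \CC \subseteq \WDP(\RS)$: the root of $l^\sharp$ is a dependency pair symbol, so the redex is exactly a $\sharp$-leaf of $s$, and again $\sigma$ maps variables to ordinary terms; thus each $u_i^\sharp\sigma$ is ordinary when $u_i \in \VS$ (note $u_i \in \Var(r) \subseteq \Var(l)$) and lies in $\TTs$ when $u_i$ is functional, whence $\COM(u_1^\sharp,\dots,u_m^\sharp)\sigma \in \TBC$ by clauses (i) and (ii) of the definition of $\TBC$, and (b) gives the resulting term in $\TBC$.

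I expect the only nontrivial point to be the claim that a $\CC$-step can fire nowhere but at a $\sharp$-leaf of $s$; this is where the definition of weak dependency pairs is genuinely used: such a pair $l^\sharp \to \COM(\dots)$ contains exactly one dependency pair symbol, at the root of $l^\sharp$, and in a $\TBC$-term these symbols sit only, isolated, at the roots of $\sharp$-leaves, so any match is forced to occur at one of them. Everything else is routine bookkeeping over the inductive structure of $\TBC$.
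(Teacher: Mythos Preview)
Your proposal is correct and is precisely the natural argument: the paper itself omits the proof entirely, merely remarking that the lemma ``formalises an easy observation,'' and your closure-under-single-steps argument (via the structural description of $\TBC$ as a compound-symbol cap over ordinary and $\sharp$-leaves, together with the two replacement facts) is exactly how one would flesh this out. The one point you single out as nontrivial---that a $\CC$-redex in a $\TBC$-term must sit at a $\sharp$-leaf---is indeed the crux, and your justification via the placement of dependency pair symbols in $\TBC$-terms is the right one.
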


A key is that consecutive two weak dependency pairs may be swappable.
\begin{lemma} \label{l:13}
Let $\KK$ and $\LL$ denote two different nodes in $\PG{\GG}$ such that
there is no edge from $\KK$ to $\LL$. Let $s \in \TBC$ and suppose the existence of a 
derivation $A$ of the following form:
\begin{equation*}
  {s} \rsparenirew{\KK/\UU(\PP)} \cdot \rsparenirew{\LL/\UU(\PP)} t \tpkt
\end{equation*}
Then there exists a derivation $B$ 
\begin{equation*}
  {s} \rsparenirew{\LL/\UU(\PP)} \cdot \rsparenirew{\KK/\UU(\PP)} {t} \tkom
\end{equation*}
such that $\card{A} = \card{B}$.
\end{lemma}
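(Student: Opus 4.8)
The plan is to unfold the two relative steps of $A$ into ordinary rewrite steps, pinpoint the single $\KK$- and $\LL$-rule applications, show that their redex positions are parallel, and then permute the ordinary steps so that the $\LL$-rule precedes the $\KK$-rule. Write $A$ in the form
\[
s \rssrew{\UU(\PP)} c_1 \rsrew{\KK} c_2 \rssrew{\UU(\PP)} d_1 \rsrew{\LL} d_2 \rssrew{\UU(\PP)} t \tkom
\]
where the displayed $\KK$- and $\LL$-steps use rules $\rho_K \in \KK$, $\rho_L \in \LL$ at positions $p$, $q$, and call the three intervening blocks of $\UU(\PP)$-steps $V_0$, $W$, $V_1$. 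By Lemma~\ref{l:12} (with $\CC = \KK \cup \LL$) all terms above lie in $\TBC$; since the left-hand side of a weak (innermost) dependency pair carries a dependency-pair symbol — which is marked and hence non-compound — at its root, $p$ and $q$ are positions whose root symbol is non-compound while all proper ancestors carry compound symbols (no marked symbol can sit strictly inside a leaf of a $\TBC$-term), and two such positions are either equal or parallel. Finally, every $\UU(\PP)$-step fires strictly below such a position, because a usable rule has a defined, hence non-compound and unmarked, root symbol.

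The key step is to show that $p$ and $q$ are parallel; this is where the hypothesis that there is no edge $\KK \edge \LL$ in $\PG{\GG}$ is used. Suppose instead $q \geqslant p$. Writing $\rho_K$ as $l^\sharp \to \COM(u_1^\sharp,\dots,u_n^\sharp)$, we have $c_2|_p = \COM(u_1^\sharp\sigma,\dots,u_n^\sharp\sigma)$, and since only $\UU(\PP)$-steps occur between $c_2$ and $d_1$ — and these neither create marked symbols nor rewrite at one — the positions of $d_1$ at or below $p$ carrying a marked symbol are exactly the roots of those $u_i^\sharp\sigma$ with $\rt(u_i)$ defined (the remaining symbols of $\COM(u_1^\sharp\sigma,\dots)$ come from the $u_i$ or from $\sigma$ and are unmarked). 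As $d_1|_q$ has a marked root, $q$ must be such a position, whence $\COM(u_1^\sharp,\dots,u_n^\sharp)\sigma \rssrew{\UU(\PP)} C[l_L\tau]$ for a context $C$ and the left-hand side $l_L$ of $\rho_L$. By Definition~\ref{d:DG} this is an arrow from $\rho_K$ to $\rho_L$ in $\GG$, and since $\rho_K \in \KK$, $\rho_L \in \LL$ and $\KK \neq \LL$, it induces an edge $\KK \edge \LL$ in $\PG{\GG}$ — contradiction. Hence $p$ and $q$ are parallel.

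With $p \parallel q$, each $\UU(\PP)$-step in the block $W$ acts strictly below $q$ (collect these into a sub-block $U_q$), strictly below $p$ ($U_p$), or at a position parallel to both ($U_0$); steps from different sub-blocks act on disjoint subterms and commute, $\rho_K$ commutes with $U_q$ and $U_0$, and $\rho_L$ commutes with $U_p$ and $U_0$. As $\rho_K$ leaves the subterm at $q$ untouched, $c_1|_q = c_2|_q$ and $d_1|_q$ is obtained from $c_1|_q$ by the steps of $U_q$ alone, so $\rho_L$ may be fired early. The resulting rearrangement
\[
s \rssrew{\UU(\PP)} c_1 \rssrew{\UU(\PP)} \cdot \rsrew{\LL} \cdot \rssrew{\UU(\PP)} \cdot \rsrew{\KK} \cdot \rssrew{\UU(\PP)} d_2 \rssrew{\UU(\PP)} t \tkom
\]
whose $\UU(\PP)$-blocks are $V_0$, $U_q$, $U_0$, $U_p$, $V_1$, reaches the same $d_2$ (one checks that the subterm at $p$, the subterm at $q$, and all remaining subterms evolve exactly as in $A$) and consists of the same ordinary rewrite steps as $A$ in a different order, so $\card{A} = \card{B}$. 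Note that $\rho_K$ precedes $U_p$, as required since the redexes of $U_p$ are created by $\rho_K$. Grouping the prefix up to and including $\rho_L$ as one $\rsrew{\LL/\UU(\PP)}$-step and the remainder as one $\rsrew{\KK/\UU(\PP)}$-step yields $B$. For the innermost variant the same permutation works verbatim: innermostness of a step depends only on the subterm at its redex, and every redex is reproduced on an unchanged subterm.

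I expect the parallelism claim $p \parallel q$ to be the main obstacle, since it is the only point where the missing-edge hypothesis enters and it rests on the bookkeeping above — determining exactly which positions may bear marked symbols after applying $\rho_K$ and recognising an $\LL$-redex inside the freshly produced $\COM(\dots)$ as a witness for an arrow in the graph. The subsequent reordering is routine; the only subtlety there is that the $U_p$-steps genuinely depend on $\rho_K$, so $\rho_K$ must be placed after $\rho_L$ but before $U_p$.
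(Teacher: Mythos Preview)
Your proof is correct and follows essentially the same approach as the paper: both arguments decompose the terms via their compound-symbol context, locate the $\KK$- and $\LL$-redexes in distinct components (your $p \parallel q$ is the paper's $i \neq j$, argued via the missing edge in $\PG{\GG}$), and then permute the steps across the parallel positions. Your presentation is more detailed---in particular the explicit split of the middle $\UU(\PP)$-block into $U_p$, $U_q$, $U_0$ and the careful tracking of marked symbols---but the underlying idea is the same.
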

\begin{proof}
We only show the full rewriting case since the innermost case is analogous.
According to Lemma~\ref{l:12} an arbitrary terms $u$ reachable from $s$ 
belongs to $\TBC$.  Writing 
$\SC{u_1,\ldots,u_i,\ldots,u_m}_{\FS \cup \FS^\sharp}$ for $u$, 
the $m$-hole context $C$ consists of compound symbols and 
variables, $u_1,\ldots,u_m \in \TT \cup \TT^\sharp$.
Therefore, $A$ can be written in the following form:
\begin{alignat*}{3}
s 
&  ~\to_{\UU(\PP)}^{n_1}~
&& \SC{u_1,\ldots,u_i,\ldots,u_m}_{\FS \cup \FS^\sharp} 
&& =: u
\\
&  ~\to_\LL~
&& C[u_1,\ldots,u_i',\ldots,u_m] 
\\
&  ~\to_{\UU(\PP)}^{n_2}~
&& C[v_1,\ldots,v_i,\ldots,v_j,\ldots,v_m]
\\
&  ~\to_\KK~
&& C[v_1,\ldots,v_i,\ldots,v_j',\ldots,v_m] 
&& ~\to_{\UU(\PP)}^{n_3}~ t \tkom
\end{alignat*}
with $u_i' \to_{\UU(\PP)}^k v_i$. 
Here $i \neq j$ holds, because $i = j$ induces $\LL \leadsto \KK$.
Easy induction on $n_2$ shows
\begin{alignat*}{2}
s  
& ~\to_{\UU(\PP)}^{n_1}~ u ~=~ 
&& C[u_1,\ldots,u_i,\ldots,u_j,\ldots,u_m]
\\
&  ~\to_{\UU(\PP)}^{n_2 - k}~
&& C[v_1,\ldots,u_i,\ldots,v_j,\ldots,v_m]
\\
&  ~\to_\KK~
&& C[v_1,\ldots,u_i,\ldots,v_j',\ldots,v_m]
\\
&  ~\to_\LL~
&& C[v_1,\ldots,u_i',\ldots,v_j',\ldots,v_m]
\\
&  ~\to_{\UU(\PP)}^k~
&& C[v_1,\ldots,v_i,\ldots,v_j',\ldots,v_m] 
   ~\to_{\UU(\PP)}^{n_3} t~ \tkom
\end{alignat*}
which is the desired derivation $B$.  
\end{proof}

The next lemma states that reordering is partly possible.
\begin{lemma} \label{l:14}
Let $s \in \TBC$, and let $A \colon {s} \rsparenisrew{\PP/\UU(\PP)} {t}$ be a 
derivation based on a sequence of nodes $(\PP_1,\ldots,\PP_k)$ such that 
$\PP_1 \in \Src(\PG{\GG})$, and let $(\QQ_1,\ldots,\QQ_{\ell})$ be 
a path in $\PG{\GG}$ with $\{\PP_1,\dots,\PP_k\} = \{\QQ_1,\dots,\QQ_\ell\}$.
Then there exists a derivation 
$B \colon {s} \rsparenisrew{\PP/\UU(\PP)} {t}$ based on $(\QQ_1,\ldots,\QQ_{\ell})$
such that $\card{A} = \card{B}$ and $\PP_1 = \QQ_1$.
\end{lemma}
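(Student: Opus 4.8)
The plan is to prove Lemma~\ref{l:14} by induction on the length $k$ of the sequence $(\PP_1,\dots,\PP_k)$, using Lemma~\ref{l:13} repeatedly as the basic reordering step. The key combinatorial observation is this: since $(\QQ_1,\dots,\QQ_\ell)$ is a path in $\PG{\GG}$ with $\{\PP_1,\dots,\PP_k\} = \{\QQ_1,\dots,\QQ_\ell\}$, and $\PP_1 \in \Src(\PG{\GG})$, the node $\QQ_1$ is forced: a path whose node set contains a source must start at that source, so $\QQ_1 = \PP_1$. (If there were two distinct sources in the set, neither could be reached from the other, so no path could contain both; hence $\PP_1$ is the unique source in the set and must be the first node of the path.) So the statement $\PP_1 = \QQ_1$ is essentially automatic once we know $B$ is based on the path.

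First I would dispose of the case where $\PP_1$ does not occur among $\PP_2,\dots,\PP_k$ — here we peel off the initial $\PP_1$-segment of $A$ and recurse on the remainder, which is a derivation starting from a term in $\TBC$ (by Lemma~\ref{l:12}) based on the nodes $\{\QQ_2,\dots,\QQ_\ell\}$, with $\QQ_2 \in \Src$ of the induced subgraph. The more delicate case is when $\PP_1$ reappears later in the sequence $(\PP_1,\dots,\PP_k)$. Here the idea is to use Lemma~\ref{l:13} to bubble every later occurrence of a $\PP_1$-step leftwards past the intervening steps, collecting all $\PP_1$-steps into one contiguous block at the front. To justify moving a $\PP_1$-step left past some $\PP_j$-step, I need that there is no edge from $\PP_j$ to $\PP_1$ in $\PG{\GG}$ — but this holds because $\PP_1$ is a source, so it has no incoming edges at all. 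Thus Lemma~\ref{l:13} applies to swap any $\KK \edge^{\text{no}} \PP_1$ adjacent pair $(\KK, \PP_1)$, and iterating this we transport $A$ to a length-preserving derivation $A'$ whose $\PP_1$-steps all come first, i.e. $A'$ is based on $(\PP_1,\dots,\PP_1,\PP_{i_1},\dots)$ where the tail uses only the nodes in $\{\PP_1,\dots,\PP_k\}\setminus\{\PP_1\}$ (after merging, the front $\PP_1$-block can be written as repetitions, matching a path that stays at $\QQ_1=\PP_1$ then proceeds). Then I peel off the whole $\PP_1$-block, note the residual term is again in $\TBC$, and apply the induction hypothesis to the tail against the path $(\QQ_2,\dots,\QQ_\ell)$, whose node set is exactly $\{\PP_1,\dots,\PP_k\}\setminus\{\PP_1\}$ and whose first node $\QQ_2$ is a source of the corresponding subgraph. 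Concatenating yields the desired $B$ with $\card{A}=\card{A'}=\card{B}$.

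The main obstacle I anticipate is bookkeeping around \emph{multiple occurrences} and the precise sense in which a derivation is "based on" a sequence versus a path: Definition~\ref{d:pathbased} allows repeats in the sequence, and a path in $\PG{\GG}$ (being acyclic between distinct SCCs) visits each congruence-node at most once — except that consecutive equal indices are harmless since $\rsparenisrew{}$ is reflexive-transitive, so $(\PP_1,\PP_1,\dots)$ collapses to a single $\PP_1$-phase. I must be careful that after bubbling, the block of $\PP_1$-steps really can be absorbed into a single $\rsparenisrew{\PP_1/\UU(\PP)}$ phase of the path, and that the tail genuinely forms a path (not just a sequence) — this requires that the reordering given by Lemma~\ref{l:13} does not create "backward" dependencies, which is exactly why we must process source nodes first and induct on the DAG structure of $\PG{\GG}$. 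A secondary subtlety is ensuring every intermediate term stays in $\TBC$ so that Lemmas~\ref{l:12} and~\ref{l:13} remain applicable at each stage; this follows from Lemma~\ref{l:12} applied to the prefix derivations, but it should be stated explicitly to make the induction go through cleanly.
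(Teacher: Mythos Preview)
Your proposal is correct and rests on the same key device as the paper: iterate Lemma~\ref{l:13} to reorder the phases of $A$ into path order. The paper, however, packages the argument more directly. Rather than inducting on $k$ and peeling off the source node first, it simply observes that whenever two adjacent phases $\PP_i,\PP_{i+1}$ in the current sequence are out of path order (i.e.\ $\PP_{i+1} \tedge \PP_i$), then there can be no edge $\PP_i \edge \PP_{i+1}$ in the DAG $\PG{\GG}$, so Lemma~\ref{l:13} swaps them. One then performs bubble sort with respect to $\tedge$; since $(\QQ_1,\ldots,\QQ_\ell)$ linearly orders the occurring nodes, this terminates in a derivation based on that path, of the same length.

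The trade-off: your source-first induction makes the role of $\PP_1 \in \Src(\PG{\GG})$ very explicit (no incoming edges, hence $\PP_1$-steps bubble freely to the front), and your observation that $\QQ_1 = \PP_1$ is forced is a nice addition the paper leaves implicit. On the other hand, your recursion needs the tail's first node $\QQ_2$ to be a source only of the \emph{induced subgraph} on $\{\QQ_2,\ldots,\QQ_\ell\}$, not of $\PG{\GG}$ itself, so strictly speaking you must strengthen the induction hypothesis accordingly; the paper's straight bubble-sort avoids this wrinkle entirely. Your care about intermediate terms remaining in $\TBC$ via Lemma~\ref{l:12} is well placed and matches how the paper justifies applicability of Lemma~\ref{l:13} at every step.
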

\begin{proof}
According to Lemma~\ref{l:12}, for any derivation $A$
\begin{equation*}
s \rsparenisrew{\PP_1/\UU(\PP)} \cdots \rsparenisrew{\PP_n/\UU(\PP)} t \tkom
\end{equation*}
if $\PP_i \edge \PP_{i+1}$
does not hold, there is a derivation $B$
\begin{equation*}
s \rsparenisrew{\PP_1/\UU(\PP)} \cdots 
  \rsparenisrew{\PP_{i+1}/\UU(\PP)} \cdot
  \rsparenisrew{\PP_i/\UU(\PP)} \cdots 
  \rsparenisrew{\PP_n/\UU(\PP)} t
\tkom
\end{equation*}
with $\card{A} = \card{B}$. By assumption $(\QQ_1,\ldots,\QQ_{\ell})$
is a path, whence we obtain $\QQ_1 \edge \cdots \edge \QQ_{\ell}$.
By performing bubble sort with respect to $\tedge$, $A$ is transformed into the derivation $B$:
\begin{equation*}
s \rsparenisrew{\QQ_1/\UU(\PP)} \cdots 
  \rsparenisrew{\QQ_m/\UU(\PP)} t \tkom
\end{equation*}
such that $\card{A} = \card{B}$.
\end{proof}

The next example shows that there is a derivation that
cannot be transformed into a derivation based on a path.

\begin{example} \label{ex:dg:2}
Consider the TRS 
$\RS = \{\m{f} \to \m{b}(\m{g},\m{h}), \m{g} \to \m{a}, \m{h} \to \m{a}\}$.
Thus $\WDP(\RS)$ consists of three dependency pairs:
$1\colon \m{f}^\sharp \to \m{c}(\m{g}^\sharp,\m{h}^\sharp)$, 
$2\colon \m{g}^\sharp \to \m{d}$, and
$3\colon \m{h}^\sharp \to \m{e}$.
Let $\PP \defsym \WDP(\RS)$ and let $\GG \defsym \WDG(\RS)$. 
Note that $\PG{\GG}$ are identical to $\GG$.
We witness that the derivation
\begin{equation*}
\m{f}^\sharp 
\rsrew{\PP} \m{c}(\m{g}^\sharp, \m{h}^\sharp)
\rsrew{\PP} \m{c}(\m{d}, \m{h}^\sharp)
\rsrew{\PP} \m{c}(\m{d}, \m{e}) \tkom
\end{equation*}
is based neither on the path $(\{1\},\{2\})$,
nor on the path $(\{1\},\{3\})$.
\end{example}

Lemma~\ref{l:14} shows that we can reorder a given derivation 
$A$ that is based on a sequence of nodes that would in principle
form a path in the congruence graph $\PG{\GG}$. The next
lemma shows that we can guarantee that any derivation 
is based on sequence of different paths.

\begin{lemma}
\label{l:21}
Let $s \in \TBC$ and let $A \colon {s} \rsparenisrew{\PP/\UU(\PP)} {t}$ 
be a derivation based on
$(\PP_1,\ldots,\PP_k, \QQ_1,\ldots,\QQ_\ell)$, such that
$(\PP_1,\ldots,\PP_k)$ and $(\QQ_1,\ldots,\QQ_\ell)$ form 
two disjoint paths in $\GG$. Then there exists a derivation 
$B \colon {s} \rsparenisrew{\PP/\UU(\PP)} {t}$ 
based on the sequence of nodes 
$(\QQ_1,\ldots,\QQ_\ell,\PP_1,\ldots,\PP_k)$ such 
that $\card{A} = \card{B}$.
\end{lemma}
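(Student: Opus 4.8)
The plan is to turn $A$ into $B$ by a bubble‑sort argument on the underlying sequence of nodes, using Lemma~\ref{l:13} as the elementary exchange step, in the same spirit as the proof of Lemma~\ref{l:14}. Write $A$ in block form as
\[
s \rsparenisrew{\PP_1/\UU(\PP)} s_1 \rsparenisrew{\PP_2/\UU(\PP)} \cdots \rsparenisrew{\PP_k/\UU(\PP)} \cdot \rsparenisrew{\QQ_1/\UU(\PP)} \cdots \rsparenisrew{\QQ_\ell/\UU(\PP)} t \tpkt
\]
By Lemma~\ref{l:12} every term occurring in $A$, and in every derivation obtained from it by such exchanges, belongs to $\TBC$; this is what keeps Lemma~\ref{l:13} applicable throughout. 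To pass from the node sequence $(\PP_1,\dots,\PP_k,\QQ_1,\dots,\QQ_\ell)$ to $(\QQ_1,\dots,\QQ_\ell,\PP_1,\dots,\PP_k)$ it suffices to repeatedly commute an adjacent pair in which a $\PP_i$‑block is immediately followed by a $\QQ_j$‑block; observe that no exchange among two $\PP$‑blocks or among two $\QQ$‑blocks is ever required.

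For a single such commutation I would break the $\PP_i$‑block and the $\QQ_j$‑block into their constituent relative steps and move the $\QQ_j$‑block past the $\PP_i$‑block one relative step at a time. Each atomic move exchanges a relative $\PP_i$‑step immediately followed by a relative $\QQ_j$‑step, and Lemma~\ref{l:13} produces an equally long derivation with the two reversed, provided (i) $\PP_i \neq \QQ_j$ and (ii) there is no edge from $\PP_i$ to $\QQ_j$ in $\PG{\GG}$. Both (i) and (ii) are delivered by the hypothesis that $(\PP_1,\dots,\PP_k)$ and $(\QQ_1,\dots,\QQ_\ell)$ are disjoint paths: the node sets are disjoint, so $\PP_i \neq \QQ_j$, and no $\PG{\GG}$‑edge can run from any $\PP_i$ to any $\QQ_j$, since such an edge would connect the two paths. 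After the move we again have a derivation of the same length, based on the correspondingly transposed node sequence, whose intermediate terms lie in $\TBC$, so the argument iterates.

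Assembling this, I would argue by induction on $k$. The base case $k = 0$ is trivial. For $k \geqslant 1$, apply the induction hypothesis to the suffix derivation $s_1 \rsparenisrew{\PP/\UU(\PP)} t$, which starts from $s_1 \in \TBC$ (Lemma~\ref{l:12}) and is based on $(\PP_2,\dots,\PP_k,\QQ_1,\dots,\QQ_\ell)$, again two disjoint paths; this rearranges it, at equal length, into one based on $(\QQ_1,\dots,\QQ_\ell,\PP_2,\dots,\PP_k)$. Prefixing the original $\PP_1$‑block yields a derivation from $s$, of length $\card{A}$, based on $(\PP_1,\QQ_1,\dots,\QQ_\ell,\PP_2,\dots,\PP_k)$. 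Applying the commutation step of the previous paragraph $\ell$ times then slides the $\PP_1$‑block to the right of $\QQ_1,\dots,\QQ_\ell$ — legitimate because no edge runs from $\PP_1$ to any $\QQ_j$ — producing the desired $B$ based on $(\QQ_1,\dots,\QQ_\ell,\PP_1,\dots,\PP_k)$ with $\card{B} = \card{A}$.

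The one substantive point is the applicability of Lemma~\ref{l:13}: condition (ii), that no dependency‑graph edge points from the $\PP$‑path into the $\QQ$‑path, is exactly what the disjointness of the two paths is there to supply, and it is the only place where that hypothesis is used. Everything else — decomposing a block into single relative steps, keeping track of where the $\UU(\PP)$‑steps sit, and checking that all intermediate terms stay in $\TBC$ so that Lemma~\ref{l:13} keeps applying — is routine bookkeeping.
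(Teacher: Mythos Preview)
Your overall strategy---bubble-sorting the node sequence using Lemma~\ref{l:13} as the swap primitive, with Lemma~\ref{l:12} keeping all intermediate terms in $\TBC$---is precisely the ``adaptation of the technique in the proof of Lemma~\ref{l:14}'' that the paper invokes, and the induction on $k$ is a clean way to organise it.

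The gap is in how you discharge condition~(ii) for Lemma~\ref{l:13}. You assert that no $\PG{\GG}$-edge can run from any $\PP_i$ to any $\QQ_j$ ``since such an edge would connect the two paths,'' but node-disjointness of two paths does not preclude edges between them. Concretely, if $\PG{\GG}$ has nodes $\PP_1,\QQ_1,\QQ_2$ with edges $\QQ_1 \edge \QQ_2$ and $\PP_1 \edge \QQ_2$, then $(\PP_1)$ and $(\QQ_1,\QQ_2)$ are disjoint paths, yet $\PP_1 \edge \QQ_2$. In that situation Lemma~\ref{l:13} no longer licenses commuting a $\PP_1$-step past a following $\QQ_2$-step: the $\QQ_2$-redex may sit in the very slot of the compound context that the $\PP_1$-step just produced, and then no reordering is possible. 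So either ``disjoint'' must be read more strongly here---no edges from the $\PP$-path into the $\QQ$-path, as holds for instance when the two paths lie in different weakly connected components of $\PG{\GG}$---or an additional argument beyond a bare appeal to Lemma~\ref{l:13} is needed. Your write-up does not supply either; the paper's own one-line proof does not address this point, so it is a genuinely delicate spot in the paper rather than a slip on your part alone.
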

\begin{proof}
The lemma follows by an adaptation of the technique in the proof of 
Lemma~\ref{l:14}.
\end{proof}

Lemma~\ref{l:21} shows that the maximal length of any derivation 
only differs from the maximal length of any derivation based on a path 
by a linear factor, 
depending on the size of the congruence graph $\PG{\GG}$.
We arrive at the main result of this section. Recall the
definition of $\Path(\cdot)$ on page~\pageref{eq:path}.
\begin{theorem} \label{t:dg}
Let $\RS$ be a TRS and $\PP$ the set of weak (innermost) dependency pairs.
Then, $\dheight(t, \rsparenirew{\RS}) = \bO(\Path(t))$ holds for all $t \in \TBS$.
\end{theorem}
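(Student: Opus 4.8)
The plan is to reduce, via the usable-rule characterisation of Theorem~\ref{t:dp:usable}, to counting steps of $\PP\cup\UU(\PP)$, and then to split a maximal such derivation into boundedly many pieces, each captured by a single path of the congruence graph. I describe the full-rewriting case; the innermost case is entirely analogous. Put $\PP=\WDP(\RS)$ and $\GG=\WDG(\RS)$. If $t$ is not $\rsrew{\RS}$-terminating there is nothing to show, so assume otherwise; then Theorem~\ref{t:dp:usable} gives $\dheight(t,\rsrew{\RS})=\dheight(t^\sharp,\rsrew{\PP\cup\UU(\PP)})$. Since $\size{t^\sharp}=\size{t}$ and $t^\sharp\in\TBS\subseteq\TBC$, it suffices to exhibit a constant $P$, depending only on $\PG{\GG}$, with $\dheight(t^\sharp,\rsrew{\PP\cup\UU(\PP)})\leqslant P\cdot\Path(t)$; we take $P$ to be the (finite, $t$-independent) number of paths of $\PG{\GG}$ that start at a source.

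So fix a maximal derivation $A$ of $\PP\cup\UU(\PP)$ starting from $t^\sharp$. By Lemma~\ref{l:12} every term on $A$ lies in $\TBC$, hence every $\PP$-step of $A$ rewrites an $\FS^\sharp$-rooted subterm sitting under a context built solely of compound symbols, while every $\UU(\PP)$-step takes place strictly inside one of the non-compound subterms. Grouping the $\UU(\PP)$-steps around the $\PP$-steps presents $A$ as a relative derivation of $\PP/\UU(\PP)$ based on a sequence of nodes of $\PG{\GG}$ in the sense of Definition~\ref{d:pathbased}. Using the reordering lemmas --- Lemma~\ref{l:13} to swap independent consecutive steps, Lemma~\ref{l:14} to straighten a node set that already forms a path, Lemma~\ref{l:21} to move disjoint paths past one another --- I rearrange $A$, preserving its total number of steps (as is evident from the proofs of those lemmas), so that its steps become organised along paths. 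For a path $\pi$ of $\PG{\GG}$ starting at a source, write $\QQ_\pi$ for the union of the nodes of $\pi$ and collect into $A_\pi$ the steps of the rearranged $A$ governed by a pair on $\pi$, together with the $\PP$-steps on the shared prefixes leading to them from $t^\sharp$. Read as a derivation from $t^\sharp$, $A_\pi$ uses only pairs from $\QQ_\pi$, and --- arguing as in the proof of Lemma~\ref{l:5}, since the only rules that can ever fire below the compound context are those usable from right-hand sides of $\QQ_\pi$ --- only rules of $\UU(\QQ_\pi)$; thus $A_\pi$ is a derivation of $\rsrew{\QQ_\pi\cup\UU(\QQ_\pi)}$ issuing from $t^\sharp$, whence $\card{A_\pi}\leqslant\dheight(t^\sharp,\rsrew{\QQ_\pi\cup\UU(\QQ_\pi)})\leqslant\Path(t)$. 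Because $\PG{\GG}$ is acyclic every node extends to a maximal path, which begins at a source, so every step of $A$ belongs to some $A_\pi$; summing over the at most $P$ such paths --- shared prefixes are counted repeatedly, which only loosens the estimate --- yields $\card{A}\leqslant P\cdot\Path(t)$, i.e.\ $\dheight(t,\rsrew{\RS})=\bO(\Path(t))$.

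The step I expect to be the crux is the rearrangement into path-organised blocks. Lemmas~\ref{l:13}--\ref{l:21} only provide pairwise swaps and concatenations of already-disjoint paths, so one must run an induction --- essentially a bubble sort on the node sequence with respect to $\tedge$, in the spirit of the proof of Lemma~\ref{l:14} --- that remains compatible with the branching caused by non-nullary compound symbols (see Example~\ref{ex:dg:2}), and then verify that after rearrangement each block $A_\pi$ genuinely sees only the rules of $\QQ_\pi\cup\UU(\QQ_\pi)$. The latter point is exactly what legitimises bounding $\card{A_\pi}$ by $\dheight$ evaluated at the fixed starting term $t^\sharp$ rather than at some intermediate term. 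The remaining ingredients --- Theorem~\ref{t:dp:usable}, finiteness of $P$, and $\size{t^\sharp}=\size{t}$ --- are routine.
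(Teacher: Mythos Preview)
Your proposal is correct and follows essentially the same strategy as the paper: reduce via Theorem~\ref{t:dp:usable} to $\PP\cup\UU(\PP)$, use Lemmas~\ref{l:13}--\ref{l:21} to reorganise a maximal derivation along source-starting paths of $\PG{\GG}$, bound each path-piece by $\Path(t)$, and sum over the finitely many paths. The only notable difference is bookkeeping: the paper takes $a^{K}$ (maximum compound arity raised to the number of SCCs) as the constant rather than your $P$, and organises the decomposition as a concatenation of consecutive blocks via Lemma~\ref{l:21} rather than as overlapping projections $A_\pi$ all rooted at $t^\sharp$ --- your formulation makes the bound $\card{A_\pi}\leqslant\Path(t)$ immediate, whereas the paper leaves the corresponding lifting to the start term implicit.
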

\begin{proof}
Let $a$ denotes the maximum arity of compound symbols and 
$K$ denotes the number of SCCs in the weak (innermost) dependency graph $\GG$.
We show $\dheight(s, \rsparenirew{\RS}) \leqslant a^{K} \cdot \Path(s)$
holds for all $s \in \TBS$.
Theorem~\ref{t:dp:usable} yields that ${\dheight(s,\rsparenirew{\RS})} = {\dheight(s,\rew)}$,
where $\rew$ either denotes $\rsrew{\PP \cup \UU(\PP)}$ or $\irew{\PP \cup \UU(\PP)}$. 

Let $A\colon {s} \rss {t}$ be a derivation over $\PP \cup \UU(\PP)$ 
such that $s \in \TBS$. 
Then $A$ is based on a sequence of nodes in the congruence graph $\PG{\GG}$
such that there exists a maximal (with respect to subset inclusion) components of
$\PG{\GG}$ that includes all these nodes. Let $T$ denote this maximal component.
$T$ forms a directed acyclic graph. In order to (over-)estimate the number of
nodes in this graph we can assume without loss of
generality that $T$ is a tree with root in $\Src(\PG{\GG})$. 
Note that $K$ bounds the height of this tree. 
Thus the number of nodes in the component $T$ is less than 
\begin{equation*}
  \frac{a^{K} - 1}{a-1} \leqslant a^K \tpkt
\end{equation*}
Due to Lemma~\ref{l:21} the derivation $A$ is conceivable as a sequence of
subderivations based on paths in  $\PG{\GG}$. As the number of nodes in $T$
is bounded from above by $a^K$, there 
exist at most be $a^K$ different paths through $T$. 

Hence in order to estimate $\card{A}$, 
it suffices to estimate the length of any subderivation $B$ of $A$, based on a specific path.
Let $(\PP_1,\ldots,\PP_k)$ be a path in $\PG{\PP}$ such that $\PP_1 \in \Src(\PG{\GG})$
and let $B \colon u \rew^n v$, denote a derivation based on this path. 
Let $\QQ \defsym \bigcup_{i=1}^k \PP_i$. 
By Definition~\ref{d:pathbased} and the definition of usable rules, the derivation $B$ 
can be written as:
\begin{equation*} 
u=u_0 \rsparenirew{\PP_1/\UU(\QQ)}
u_{n_1} \rsparenirew{\PP_2/{\UU(\QQ)}}
\cdots \rsparenirew{\PP_k/{\UU(\QQ)}} u_n = v \tkom
\end{equation*}
where $u \in \TBS$ each $u_i \in \TBC$. Hence $B$ is contained in $u \rsparenisrew{\QQ \cup \UU(\QQ)} v$
and thus $\card{B} \leqslant \Path(u)$ by definition.

As the length of a derivation $B$ based on a specific path can 
be estimated by $\Path(s)$, 
we obtain that the length of an arbitrary derivation is less than
$a^K \cdot \Path(s)$.
This completes the proof of the theorem.
\end{proof}

\begin{corollary}
\label{c:dg}
Let $\RS$ be a TRS and let $\GG$ denote the weak (innermost) dependency graph.
For every path $\bar{P} \defsym (\PP_1,\ldots,\PP_k)$ in 
$\PG{\GG}$ such that $\PP_1 \in \Src(\PG{\GG})$,
we set $\QQ \defsym \bigcup_{i=1}^k \PP_i$ and suppose
\begin{enumerate}
\item there exist a 
$\URM{\QQ \cup \UU(\QQ)}$-monotone 
($\IURM{\QQ \cup \UU(\QQ)}$-monotone)
and adequate RMI $\A_{\bar{P}}$
that admits the weight gap 
$\EWG(\A_{\bar{P}},\QQ)$ on $\TBS$ and 
$\A_{\bar{P}}$ is compatible with the usable rules $\UU(\QQ)$,

\item there exists a $\URM{\QQ \cup \UU(\QQ)}$-monotone
($\IURM{\QQ \cup \UU(\QQ)}$-monotone) RMI $\BB_{\bar{P}}$ such that 
$(\geqord{\BB_{\bar{P}}},\gord{\BB_{\bar{P}}})$ forms a complexity pair for 
$\PP_k/{\PP_1 \cup \cdots \cup \PP_{k-1} \cup \UU(\QQ)}$, and
%NH dropped
%\item $\PP_k$ is $\A_{\bar{P}}$-non-duplicating.
\end{enumerate}
Then the (innermost) runtime complexity of a TRS $\RS$ is polynomial.
Here the degree of the polynomial is given by 
the maximum of the degrees of the used RMIs.
\end{corollary}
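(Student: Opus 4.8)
The plan is to reduce the statement to Theorem~\ref{t:dg}, which gives $\dheight(t,\rsparenirew{\RS}) = \bO(\Path(t))$ for all $t \in \TBS$. So it suffices to show $\Path(t) = \bO(\size{t}^{D})$, where $D$ is the maximum of the degrees of all the RMIs $\A_{\bar{P}}$ and $\BB_{\bar{P}}$ ranging over the (finitely many) paths $\bar{P}$ of $\PG{\GG}$ from a source. As $\PG{\GG}$ is finite, it is enough to prove, for each such $\bar{P} = (\PP_1,\ldots,\PP_k)$ with $\QQ = \bigcup_{i=1}^k \PP_i$, that $\dheight(t,\rsparenirew{\QQ \cup \UU(\QQ)}) = \bO(\size{t}^{D_{\bar{P}}})$ for $t \in \TBS$, with $D_{\bar{P}} \leqslant D$; the claimed polynomial degree is then $D$.

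Fix such a $\bar{P}$. The first step is a weight-gap reduction in the spirit of the proof of Corollary~\ref{c:main}. By condition~(1), $\A_{\bar{P}}$ is an adequate, $\URM{\QQ\cup\UU(\QQ)}$-monotone RMI of some degree $d \leqslant D$, compatible with $\UU(\QQ)$, admitting a weight gap $\Delta$ on $\TBS$ --- this is the analogue of Lemma~\ref{l:weightgap:ii} with $\QQ$ in place of $\WDP(\RS)$, whose proof goes through unchanged because every $\QQ$-step emanating from a term reachable from $\TBS$ takes place at a dependency-pair-symbol position underneath a context of compound symbols only (cf.\ Lemma~\ref{l:12}). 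Applying Theorem~\ref{t:wgp} with $\RS := \QQ$ and $\SS := \UU(\QQ)$ --- termination of $\QQ/\UU(\QQ)$ comes from the RMIs of condition~(2) together with Theorem~\ref{t:mu-inclusion} --- yields a constant $c$ with
\begin{equation*}
\dheight(t,\rsparenirew{\QQ\cup\UU(\QQ)}) \leqslant (1+\Delta)\cdot\dheight(t,\rsparenirew{\QQ/\UU(\QQ)}) + c\cdot\size{t}^{d}
\end{equation*}
for all $t \in \TBS$. Hence it remains to bound $\dheight(t,\rsparenirew{\QQ/\UU(\QQ)})$, i.e.\ the number of $\QQ$-steps in a derivation over $\QQ \cup \UU(\QQ)$ started from a basic term.

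This I would bound by induction on the length $k$ of $\bar{P}$. For $k = 1$ we have $\QQ = \PP_1$, and condition~(2) for the path $(\PP_1)$ provides a $\mu$-complexity pair for $\PP_1/\UU(\PP_1)$ orienting $\PP_1$ strictly and $\UU(\PP_1)$ weakly; the collapsibility of the associated order (Theorem~\ref{t:rmi}) gives $\dheight(t,\rsparenirew{\PP_1/\UU(\PP_1)}) = \bO(\size{t}^{d'})$ with $d' \leqslant D$. For $k \geqslant 2$ one separates the $\QQ$-steps of a derivation started from $t \in \TBS$ into $\PP_k$-steps and $(\PP_1\cup\cdots\cup\PP_{k-1})$-steps. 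By condition~(2) for $\bar{P}$, $\BB_{\bar{P}}$ orients $\PP_k$ strictly and $\PP_1\cup\cdots\cup\PP_{k-1}\cup\UU(\QQ)$ weakly, and is $\mu$-monotone, so each $\PP_k$-step strictly decreases the distinguished component of the $\BB_{\bar{P}}$-interpretation while no other step increases it; thus there are $\bO(\size{t}^{d'})$ many $\PP_k$-steps. Removing them splits the derivation into $\bO(\size{t}^{d'})$ maximal subderivations over $(\PP_1\cup\cdots\cup\PP_{k-1})\cup\UU(\QQ)$, each started in a term of $\TBC$; by decomposing at the never-rewritten compound symbols (Lemma~\ref{l:1}), by the usable-rule argument of Lemma~\ref{l:5} to reduce $\UU(\QQ)$ back to $\UU(\PP_1\cup\cdots\cup\PP_{k-1})$, and by the reordering lemmas (Lemmas~\ref{l:13}--\ref{l:21}) to turn each of them into a derivation based on a prefix of $(\PP_1,\ldots,\PP_{k-1})$, the induction hypothesis applies and bounds each such subderivation by $\bO(\size{\cdot}^{D})$.

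The hard part is exactly this last bookkeeping. The subderivations we recurse on do not start from basic terms but from terms of $\TBC$, and their ambient usable-rule set $\UU(\QQ)$ is a priori larger than that of the prefix path; one has to argue, using the shape of weak dependency pairs together with Lemma~\ref{l:5}, that after the (uncounted) usable-rule reductions the $\sharp$-rooted components beneath the compound context behave like basic terms, and one must verify that their sizes remain linearly bounded in $\size{t}$ so that repeated application of the induction does not inflate the polynomial degree past $D$. The reordering lemmas of Section~\ref{DG}, which already drive Theorem~\ref{t:dg}, are what keep this under control, but the combined tracking of the weight-gap constant, the complexity-pair bound, and the usable-rule restriction along a path is delicate.
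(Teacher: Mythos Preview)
Your reduction to Theorem~\ref{t:dg} and your use of the weight-gap principle (Theorem~\ref{t:wgp} together with Lemma~\ref{l:weightgap:ii}) are exactly what the paper does. The divergence, and the gap, is in how you bound $\dheight(t,\rsparenirew{\QQ/\UU(\QQ)})$.

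Your induction on $k$ runs into precisely the difficulty you flag as ``the hard part''. After removing the $\PP_k$-steps you obtain up to $\bO(\size{t}^{d'})$ segments whose starting terms lie in $\TBC$, not in $\TBS$, and whose sizes are \emph{not} in general linearly bounded in $\size{t}$: the usable rules $\UU(\QQ)$ may increase term size polynomially along the way. Even granting the size bound, your recursion multiplies the number of segments by the per-segment bound and produces degree $d' + D$ rather than $D$. The bookkeeping you sketch (decomposing at compound symbols, Lemma~\ref{l:5}, reordering) does not repair this, because the induction hypothesis is phrased in terms of the \emph{size of the segment's own starting term}, which you have no handle on.

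The paper avoids this entirely by \emph{not} recursing. First it uses the reordering Lemma~\ref{l:14} once, on the whole derivation, to make it path-based:
\[
s \rsparenisrew{\PP_1/\UU} s_{n_1} \rsparenisrew{\PP_2/\UU} \cdots \rsparenisrew{\PP_k/\UU} t
\]
with $n = n_1 + \cdots + n_k$. Then it bounds every $n_i$ \emph{directly from $s$}, by the very same complexity-pair argument you already use for $n_k$: condition~(2) is assumed for \emph{every} source-rooted path, hence in particular for each prefix $\hat P_i = (\PP_1,\ldots,\PP_i)$, and the pair $(\geqord{\BB_{\hat P_i}},\gord{\BB_{\hat P_i}})$ weakly orients $\PP_1 \cup \cdots \cup \PP_{i-1} \cup \UU(\PP_1 \cup \cdots \cup \PP_i)$ and strictly orients $\PP_i$. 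Since in the path-based derivation all $\PP_j$-steps with $j<i$ and all relevant usable-rule steps precede the $\PP_i$-steps, this yields $n_i \leqslant [s]_1 = \bO(\size{s}^{d})$ \emph{with $s$ the original basic term}. Summing gives $n \leqslant k\cdot\size{s}^{d}$, and no size-growth issue ever arises.

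In short: your argument for $n_k$ is the right one; just apply it to every $i$ using the prefix paths, after a single global reordering, instead of recursing on segments. (Equivalently: if you do insist on induction, reorder first so that there is exactly \emph{one} segment before the $\PP_k$-steps, and that segment starts at $s$ itself; the induction then unrolls to the paper's direct sum.)
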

\begin{proof}
We restrict our attention to weak dependency pairs
and full rewriting. First observe that the assumptions imply that any basic term $t \in \TB$
is terminating with respect to $\RS$. Let $\PP$ be the set
of weak dependency pairs. (Note that $\PP \supseteq \QQ$.)
By Lemma~\ref{l:5} any infinite
derivation with respect to $\RS$ starting in $t$ can be translated into
an infinite derivation with respect to $\UU(\PP) \cup \PP$.
Moreover, as the number of paths in $\PG{\GG}$ is finite, there 
exist a path $(\PP_1,\ldots,\PP_k)$ in $\PG{\GG}$ and an infinite rewrite sequence
based on this path. This is a contradiction. Hence we can employ Theorem~\ref{t:wgp}
in the following.

Let $(\PP_1,\ldots,\PP_k)$ be an arbitrary, but fixed path in 
the congruence graph $\PG{\GG}$, let $\QQ = \bigcup_{i=1}^k \PP_i$, 
and let $d$ denote the  maximum of the degrees of the used RMIs.  
Due to Theorem~\ref{t:wgp} there exists $c \in \N$ such that:
\begin{equation*}
  \dheight(t^\sharp,\rsrew{\QQ \cup \UU(\QQ)}) \leqslant
  (1 + \EWG(\A_{\bar{P}},\QQ)) \cdot \dheight(t^\sharp,\rsrew{\QQ/\UU(\QQ)}) + 
c \cdot \size{t}^d \tpkt
\end{equation*}
Due to Theorem~\ref{t:dg} it suffices to consider a
derivation $A$ based on the path $(\PP_1,\ldots,\PP_k)$.
Suppose $A \colon s \rsnrew{\QQ/\UU(\QQ)}{n} t$. Then $A$ can be 
represented as follows:
\begin{equation*} 
s=s_0 \rsnrew{\PP_1/\UU(\PP_1)}{n_1}
s_{n_1} \rsnrew{\PP_2/{\UU(\PP_1) \cup \UU(\PP_2)}}{n_2}
\cdots \rsnrew{\PP_k/{\UU(\PP_1) \cup \cdots \cup \UU(\PP_k)}}{n_k} s_n = t \tkom
\end{equation*}
such that $n = \sum_{i=1}^k n_i$. 
It is sufficient to bound each $n_i$ from the above.
Fix $i \in \{1,\dots,k\}$. Consider the subderivation
\begin{equation*}
A'\colon
s=s_0 \rsnrew{\PP_1/\UU(\PP_1)}{n_1} s_{n_1} \cdots 
\rsnrew{\PP_k/{\UU(\PP_1) \cup \cdots \cup \UU(\PP_i)}}{n_i} s_{n_i}
\tpkt
\end{equation*}
Then $A'$ is contained in 
$A'' \colon 
s \rssrew{\PP_1 \cup \cdots \cup \PP_{i-1} \cup 
    \UU(\PP_1) \cup \cdots \UU(\PP_i)} \cdot \rsnrew{\PP_k/{\UU(\PP_1) \cup \cdots \cup \UU(\PP_i)}}{n_i} s_{n_i}$. 
Let $\hat{P_i} \defsym (\PP_1, \ldots, \PP_{i})$. By assumption there 
exists a $\mu$-monotone complexity pair $(\geqord{\BB_{\hat{P_i}}},\gord{\BB_{\hat{P_i}}})$ 
such that
$\PP_1 \cup \cdots \cup \PP_{i-1} \cup \UU(\PP_1 \cup \cdots \cup \PP_i)
\subseteq {\geqord{\BB_{\hat{P_i}}}}$ and $\PP_i \subseteq {\gord{\BB_{\hat{P_i}}}}$.
Hence, we obtain $n_i \leqslant (\eval{\alpha_0}{\BB_{\hat{P_i}}}(s))_1$ and in sum 
${n} \leqslant {k \cdot \size{s}^d}$. 
Finally, defining the polynomial $p$ as follows:
\begin{equation*}
  p(x) \defsym (1 + \EWG(\A_{\bar{P}},\QQ)) \cdot k \cdot x^d + c \cdot x^d \tkom
\end{equation*}
we conclude 
$\dheight(t^\sharp, \rsrew{\QQ \cup \UU(\QQ)}) \leqslant p(\size{t})$. Note
that the polynomial $p$ depends only on the algebras $\A_{\bar{P}}$ and 
$\BB_{\hat{P_1}}$, \dots, $\BB_{\bar{P_k}}$.

As the path $(\PP_1,\ldots,\PP_k)$ was chosen arbitrarily, there exists
a polynomial $q$, depending only on the employed RMIs such that
$\Path(t) \leqslant q(\size{t})$. Thus the
corollary follows due to Theorem~\ref{t:dg}.
\end{proof}

Let $t$ be an arbitrary term. By definition the set in $\Path(t)$ may
consider $2^{\bO(n)}$-many paths, where $n$ denotes the number of nodes in $\PG{\GG}$.
However, it suffices to restrict the definition on page~\pageref{eq:path}
to \emph{maximal} paths. For this refinement $\Path(t)$ contains at most $n^2$ paths.
This fact we employ in implementing the WDG method.

\begin{example}[continued from Example~\ref{ex:2}] 
For $\PG{\WDG(\RSgcd)}$ the above set consists of 8 paths:
$(\{13\})$, $(\{13\},\{11\})$, $(\{13\},\{12\})$, 
$(\{15\})$, $(\{15\},\{14\})$, 
$(\{17\})$,
$(\{18,19,20\})$, and $(\{18,19,20\},\{16\})$. 
 In the following we only consider the last three paths,
since all other paths are similarly handled.
\begin{itemize}
\item
Consider $(\{17\})$.  Note $\UU(\{17\}) = \varnothing$.
By taking an arbitrary SLI $\A$ and the linear restricted interpretation
$\BB$ with 
$\m{gcd}^\sharp_\BB(x,y) = x$ and $\m{s}_\BB(x) = x + 1$, we have
$\varnothing \subseteq {>_\A}$,
$\varnothing \subseteq {\geqslant_\BB}$, and 
$\{17\} \subseteq {>_\BB}$.
\item
Consider $(\{18,19,20\})$. Note $\UU(\{18,19,20\}) = \{1,\ldots,5\}$.
The following RMI $\A$ is adequate for $(\{18,19,20\})$ and
strictly monotone on $\URM{\PP \cup \UU(\PP)}$. The presentation
of $\A$ is succinct as only the signature of the 
usable rules $\{1,\ldots,5\}$ is of interest.
\begin{align*}
\m{true}_\A &= \m{false}_\A = \m{0}_\A = \vec{0} 
&
\ms_\A(\vec{x}) & =
  \begin{pmatrix}
    1 & 1 \\
    0 & 1
  \end{pmatrix}
  \vec x
  +
  \begin{pmatrix}
    3\\
    1
  \end{pmatrix}
\\
{\leqslant}_\A(\vec{x}, \vec{y})
&= 
  \begin{pmatrix}
    0 & 1\\
    0 & 0
  \end{pmatrix}
  \vec{y} + 
  \begin{pmatrix}
    1\\
    3
  \end{pmatrix}
&
{-}_\A(\vec{x},\vec{y} )
&= \vec{x} + 
  \begin{pmatrix}
    2\\
    3
  \end{pmatrix}
\tpkt
\end{align*}
Further, consider the RMI $\BB$ giving rise to the complexity pair 
$({\geqord{\BB}},{\gord{\BB}})$.
\begin{align*}
\m{0}_\BB &= 
\makebox[0mm][l]{$\m{true}_\BB = \m{false}_\BB = 
\m{\leqslant}_\BB (\vec x, \vec y) = \vec{0}$}
\\
\m{s}_\BB(\vec x) &=
\begin{pmatrix}
  1 & 3\\
  0 & 0 
\end{pmatrix}
\vec{x} +
\begin{pmatrix}
  3 \\
  0
\end{pmatrix} 
&&
{-}_\BB(\vec{x}, \vec{y})
& =
\begin{pmatrix}
  1 & 0\\
  2 & 2 
\end{pmatrix}
\vec{x} +
\begin{pmatrix}
  0 & 0 \\
  1 & 0
\end{pmatrix} 
\\
\m{if_{gcd}}^\sharp_\BB(x,y,z) & =
\begin{pmatrix}
  3 & 0\\
  0 & 0
\end{pmatrix}
\vec y +
\begin{pmatrix}
  3 & 0\\
  0 & 0  
\end{pmatrix}
\vec z 
\\
\m{gcd}^\sharp_\BB(x,y) & =
\makebox[0mm][l]{$
\begin{pmatrix}
  3 & 0\\
  0 & 0
\end{pmatrix}
\vec x +
\begin{pmatrix}
  3 & 0\\
  0 & 0  
\end{pmatrix}
\vec y +
\begin{pmatrix}
  2\\
  0
\end{pmatrix}
$
\tpkt
}
\end{align*}
We obtain 
$\{1,\ldots,5\} \subseteq {\gord{\A}}$,
$\{1,\ldots,5\} \subseteq {\geqord{\BB}}$, and 
$\{18,19,20\} \subseteq {\gord{\BB}}$.

\item
Consider $(\{18,19,20\},\{16\})$.  Note $\UU(\{16\}) = \varnothing$.
By taking the same $\A$ and also $\BB$ as above, we have
$\{1,\ldots,5\} \subseteq {\gord{\A}}$,
$\{1,\ldots,5,18,19,20\} \subseteq {\geqord{\BB}}$, and 
$\{16\} \subseteq {\gord{\BB}}$.
\end{itemize}
Thus, all path constraints are handled by suitably defined RMIs of
dimension 2. Hence, the runtime complexity function of $\RSgcd$ is at most quadratic,
which is unfortunately not optimal, as $\Rc{\RSgcd}$ is linear.
\end{example}

Corollary~\ref{c:dg} is more powerful than Corollary~\ref{c:main}.
We illustrate it with a small example.
\begin{example}
Consider the TRS $\RS$
\begin{align*}
\m{f}(\m{a},\m{s}(x),y) & \to \m{f}(\m{a},x,\m{s}(y)) &
\m{f}(\m{b},x,\m{s}(y)) & \to \m{f}(\m{b},\m{s}(x),y)
\tpkt
\end{align*}
Its weak dependency pairs $\WDP(\RS)$ are
\begin{align*}
1\colon~ \m{f}^\sharp(\m{a},\m{s}(x),y) & \to \m{f}^\sharp(\m{a},x,\m{s}(y)) &
2\colon~ \m{f}^\sharp(\m{b},x,\m{s}(y)) & \to \m{f}^\sharp(\m{b},\m{s}(x),y)
\tpkt
\end{align*}
The corresponding congruence graph consists of the two isolated nodes
$\{1\}$ and $\{2\}$.  It is not difficult to find suitable $1$-dimensional 
RMIs for the nodes, and therefore $\Rc{\RS}(n) = \bO(n)$ is concluded.
On the other hand, it can be verified that the linear runtime complexity
cannot be obtained by Corollary~\ref{c:main} with a $1$-dimensional RMI.
\end{example}

We conclude this section with a brief comparison of the path analysis
developed here and the use of the dependency graph refinement in termination
analysis. First we recall a theorem on 
the dependency graph refinement in conjunction with 
usable rules and innermost rewriting (see~\cite{GAO:2002}, but also~\cite{HirokawaMiddeldorp:2005}).
Similar results hold in the
context of full rewriting, see~\cite{GTSF06,HirokawaMiddeldorp:2007}.
\begin{theorem}[\cite{GAO:2002}]
\label{t:GAO02}
A TRS $\RS$ is innermost terminating if for every maximal cycle 
$\CC$ in the dependency graph $\DG(\RS)$
there exists a reduction pair $(\gtrsim,\succ)$ such that
${\UU(\CC)} \subseteq {\gtrsim}$ and ${\CC} \subseteq {\succ}$.
\end{theorem}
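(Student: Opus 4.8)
The plan is to argue by contraposition. Assuming $\RS$ is \emph{not} innermost terminating, I would (i)~extract an infinite innermost dependency chain, (ii)~locate its recurrent part inside a maximal cycle $\CC$ of $\DG(\RS)$, and (iii)~use the reduction pair supplied for $\CC$ to build an infinite $\succ$-descending sequence, which is impossible.

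First I would run the standard minimal-chain construction for innermost dependency pairs (Arts--Giesl~\cite{ArtsGiesl:2000}): from a non-innermost-terminating term one picks a minimal non-innermost-terminating subterm $f(\seq{u})$ with $f$ defined and each $u_i$ innermost terminating; the innermost step at its root uses some $l \to r \in \RS$ with redex $l\tau$ whose proper subterms lie in $\NF(\RS)$, and $r\tau$ again contains, reachable by innermost steps, a minimal non-innermost-terminating subterm. Iterating produces an infinite innermost $\DP(\RS)$-chain $s_1 \to t_1,\ s_2 \to t_2,\ \dots$ together with a substitution $\sigma$ such that each $s_i\sigma$ is argument-normalised and $t_i\sigma \to^{*} s_{i+1}\sigma$ by innermost $\RS$-steps. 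The key refinement of~\cite{GAO:2002} is that these intermediate reductions use only rules from $\UU(\{\,s_j\to t_j\,\})$: since $\Var(t_i) \subseteq \Var(s_i)$ and $\sigma$ maps those variables to normal forms, any rule that fires is rooted by a symbol reachable from $\Fun(t_i)$, i.e.\ a usable symbol. This is proved by the same position-tracking induction as Lemma~\ref{l:5}.

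Next I would pass to the graph. Consecutive pairs of the chain are connected by an edge of $\DG(\RS)$, so the chain is an infinite path in a finite graph; hence the set $N$ of nodes visited infinitely often is nonempty, and after dropping a finite prefix the chain uses only nodes of $N$. Since every node of $N$ recurs and the tail's edges all lie inside $N$, any two nodes of $N$ are joined by a nonempty path through $N$, so $N$ is a cycle and is contained in some maximal cycle $\CC$. By hypothesis there is a reduction pair $(\gtrsim,\succ)$ with $\UU(\CC) \subseteq {\gtrsim}$ and $\CC \subseteq {\succ}$, and $N \subseteq \CC$ gives $N \subseteq {\succ}$ and $\UU(N) \subseteq \UU(\CC) \subseteq {\gtrsim}$. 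On the tail (from some index $m$ on) stability of $\succ$ yields $s_i\sigma \succ t_i\sigma$, and since $\gtrsim$ is a stable quasi-order closed under contexts, the innermost reduction $t_i\sigma \to^{*} s_{i+1}\sigma$ (using only rules in $\UU(N)$) yields $t_i\sigma \gtrsim s_{i+1}\sigma$. Compatibility ${\succ} \cdot {\gtrsim} \subseteq {\succ}$ then gives $s_m\sigma \succ s_{m+1}\sigma \succ s_{m+2}\sigma \succ \cdots$, contradicting well-foundedness of $\succ$.

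I expect the chain extraction and the quasi-order bookkeeping to be routine; the one genuinely delicate step is the usable-rules restriction, namely that no intermediate innermost reduction $t_i\sigma \to^{*} s_{i+1}\sigma$ ever needs a non-usable rule. This works because the innermost strategy keeps every term substituted for a variable in normal form throughout the chain, so it reduces to the same argument as Lemma~\ref{l:5}; for \emph{full} rewriting this breaks down, which is exactly why the theorem is confined to innermost termination.
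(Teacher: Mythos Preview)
The paper does not actually prove Theorem~\ref{t:GAO02}; it merely \emph{recalls} it from~\cite{GAO:2002} (and~\cite{HirokawaMiddeldorp:2005}) in order to contrast the standard dependency-graph refinement for termination with the path analysis developed in Section~\ref{DG}. So there is no proof in the paper to compare against.

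That said, your reconstruction is the standard argument and is correct. The three ingredients you isolate---the minimal innermost chain construction, the pigeonhole passage to a cycle (hence to a maximal cycle), and the usable-rules restriction on the intermediate innermost reductions---are exactly those of the original proof. Your remark that the usable-rules step hinges on the innermost strategy keeping variable instances in normal form is the crucial point; it is indeed the same mechanism as in Lemma~\ref{l:5}, and it is precisely why the statement fails for full rewriting without further assumptions. One small wording issue: where you write that $N$ ``is a cycle'', you should also note that each node of $N$ lies on a nonempty path to itself inside $N$ (this follows because every node of $N$ recurs infinitely often in the tail), so that $N$ is a cycle in the sense used in~\cite{GAO:2002}, not merely strongly connected; only then is containment in a \emph{maximal} cycle guaranteed.
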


The following example shows that in the context of complexity
analysis it is \emph{not} sufficient to
consider each cycle individually.

\begin{example}[continued from Example~\ref{ex:7}] \label{ex:exp}
Consider the TRS $\RSexp$ introduced in Example~\ref{ex:7}.
\begin{align*}
\m{exp}(\mN) & \to \ms(\mN) & \m{d}(\mN) & \to \mN \\
\m{exp}(\m{r}(x)) & \to \m{d}(\m{exp}(x)) 
& \m{d}(\ms(x)) & \to \ms(\ms(\m{d}(x))) 
\tpkt
\end{align*}
Recall that the (innermost) runtime complexity of $\RSexp$ is exponential.
Let $\PP$ denote the (standard) dependency pairs with respect to
$\RSexp$. Then $\PP$ consists of three pairs:
$1\colon \m{exp}^\sharp(\m{r}(x)) \to \m{d}^\sharp(\m{exp}(x))$,
$2\colon \m{exp}^\sharp(\m{r}(x)) \to \m{exp}^\sharp(x)$, and
$3\colon \m{d}^\sharp(\ms(x)) \to \m{d}^\sharp(x)$.
Hence the dependency graph $\DG(\RSexp)$ contains two 
maximal cycles: $\{2\}$ and $\{3\}$. 

We define two reduction pairs 
$(\geqord{\A},\gord{\A})$ and $(\geqord{\BB},\gord{\BB})$ 
such that the conditions of the theorem are fulfilled.
Let $\A$ and $\BB$ be SLIs such that
$\m{exp}^\sharp_{\A}(x) = x$, $\m{r}_{\A}(x) = x+1$ and
$\m{d}^\sharp_{\BB}(x) = x$, $\ms_{\A}(x) = x+1$. Hence for any
term $t \in \TB$, we have that the derivation heights 
$\dheight(t^\sharp,\rsirew{\{2\}/\UU(\PP)})$ and $\dheight(t^\sharp,\rsirew{\{3\}/\UU(\PP)})$
are linear in $\size{t}$, while $\dheight(t,\rsirew{\RS})$ is (at least) exponential
in $\size{t}$.
\end{example}

Observe that the problem exemplified by Example~\ref{ex:exp}
cannot be circumvented by  replacing the dependency graph employed in 
Theorem~\ref{t:GAO02} with weak (innermost)  dependency graphs. The 
exponential derivation height of terms $t_n$ in Example~\ref{ex:exp} 
is not controlled by the cycles $\{2\}$ or $\{3\}$, but
achieved through the non-cyclic pair $1$ and its usable rules.

Example~\ref{ex:exp} shows an exponential speed-up between the maximal number of 
dependency pair steps within a cycle in the dependency graph and the runtime
complexity of the initial TRS. In the context of derivational complexity this speed-up
may even increase to a primitive recursive function, cf.~\cite{MS:2010}.

While Example~\ref{ex:exp} shows that the usable rules need to be taken
into account fully for any complexity analysis, it is perhaps 
tempting to think that it should suffice to demand that at least one 
weak (innermost) dependency pair in each cycle decreases strictly. 
However this intuition is deceiving as shown by the next example.

\begin{example}
Consider the TRS $\RS$ of
$\m{f}(\m{s}(x),\m{0}) \to \m{f}(x, \m{s}(0))$ and
$\m{f}(x,\m{s}(y)) \to \m{f}(x, y)$.
$\WDP(\RS)$ consists of
$1\colon \m{f}^\sharp(\m{s}(x),\m{0}) \to \m{f}^\sharp(x, \m{s}(x))$
and $2\colon \m{f}^\sharp(x,\m{s}(y)) \to \m{f}^\sharp(x,y)$,
and the weak dependency graph $\WDG(\RS)$ 
contains two cycles $\{1,2\}$ and $\{2\}$.
There are two linear restricted interpretations $\A$ and $\BB$ such that
$\{1,2\} \subseteq {\geqslant_\A} \cup {>_\A}$,
$\{1\} \subseteq {>_\A}$, and
$\{2\} \subseteq {>_\BB}$.
Here, however, we must not conclude linear runtime complexity,
because the runtime complexity of $\RS$ is at least quadratic.
\end{example}

\section{Experiments} \label{Experiments}

All described techniques have been incorporated into
the \emph{Tyrolean Complexity Tool} $\TCT$, 
an open source complexity analyser%
\footnote{Available at \url{http://cl-informatik.uibk.ac.at/software/tct}.}.
The testbed is based on version 8.0.2 of the \emph{Termination Problems Database}
(\emph{TPDB} for short). We consider TRSs without theory annotation, where the
runtime complexity analysis is non-trivial, that is the set of basic terms is
infinite. This testbed comprises 1695 TRSs.
All experiments were conducted on a machine that is identical 
to the official competition server
($8$ AMD Opteron${}^\text{\textregistered}$ 885 dual-core processors
with 2.8GHz, $8\text{x}8$ GB memory). As timeout we use 60 seconds.
The complete experimental data can be found at \url{http://cl-informatik.uibk.ac.at/software/tct/experiments}, where also the testbed employed is detailed.

Table~\ref{tab:1} summarises the experimental results of the here
presented techniques for full runtime complexity analysis in a 
restricted setting. The tests are based on the use of one- and two-dimensional RMIs 
with coefficients over $\{0,1,\ldots, 7\}$
as direct technique (compare Theorem~\ref{t:rmi}) 
as well as in combination with the 
WDP method (compare Corollaries~\ref{c:dp:usable} and~\ref{c:main}) 
and the WDG method (compare Corollary~\ref{c:dg}).
Weak dependency graphs are estimated by the $\TCAP$-based technique (\cite{GTS05}).
The tests indicate the power of the transformation techniques introduced. Note 
that for linear and quadratic runtime complexity the latter techniques are more powerful than the
direct approach. Furthermore note that the WDG method provides overall better bounds
than the WDP method. 

\begin{table}[h]
\centering
\begin{tabular}{@{}l@{}r@{\hspace{1ex}}r@{\hspace{2ex}}r@{\hspace{1ex}}r@{\hspace{2ex}}r@{\hspace{1ex}}r@{}}
  \hline\\[-2ex]
  & \multicolumn{6}{c}{\textit{full}} \\
     result &  direct\,(1) & direct\,(2) &  WDP\,(1) & WDP\,(2) &  WDG\,(1) & WDG\,(2)
  \\[.5ex] \hline \\[-2ex]
  $\OO(1)$   & 16  &  18 &   0 & 0   & 10  & 10 \\
  $\OO(n)$   & 106 & 113 & 123 & 70  & 130 & 67 \\
  $\OO(n^2)$ & 106 & 148 & 123 & 157 & 130 & 158 \\[1ex]
  timeout (60s) & 20 & 88 & 55 & 127 & 103 & 261\\[1ex]
\hline
\end{tabular}
\caption{Experiment results I (one- and two-dimensional RMIs separated)}
\label{tab:1}
\end{table}

However if we consider RMIs upto dimension 3 the picture becomes less
clear, cf.~Table~\ref{tab:2}. Again we compare the direct
approach, the WDP and WDG method and restrict to coefficients over $\{0,1,\ldots, 7\}$. 
Consider for example the test results for cubic runtime complexity with respect to full rewriting. 
While the transformation techniques are still more powerful than the direct approach,
the difference is less significant than in Table~\ref{tab:1}. 
On one hand this is due to the fact that RMIs employing matrices of
dimension $k$ may have a degree strictly smaller
than $k$, compare Theorem~\ref{t:rmi} and
on the other hand note the increase in timeouts for the more advanced techniques. 

Moreover note the seemingly strange behaviour of the WDG method
for innermost rewriting: already for quadratic runtime the WDP method performs better, 
if we only consider the number of yes-instances. This seems to contradict
the fact that the WDG method is in theory more powerful than the WDP method. However,
the explanation is simple: first the sets of yes-instances are incomparable and
second the more advanced technique requires more computation power. 
If we would use (much) longer timeout
the set of yes-instances for WDP would become
a \emph{proper} 
subset of the set
of yes-instances for WDG.
For example the WDG method can prove cubic runtime complexity 
of the TRS \texttt{AProVE\_04/Liveness 6.2} from the TPDB, while the WDP 
method fails to give its bound.

\begin{table}[h]
\centering
\begin{tabular}{@{}l@{}r@{\hspace{1ex}}r@{\hspace{1ex}}r@{\hspace{5ex}}r@{\hspace{1ex}}r@{\hspace{1ex}}r@{}}
\hline \\[-2ex]
& \multicolumn{3}{c}{\textit{full}} 
& \multicolumn{3}{c}{\textit{innermost}}
\\[.5ex]
  result & direct & WDP & WDG & direct & WDP & WDG \\[1ex]
  \hline\\[-1.5ex] 
  $\OO(1)$   &  18  &  0   &  10  & 20  &  0 & 10 \\
  $\OO(n)$   &  135 &  141 &  140 & 135 &  142 & 145  \\
  $\OO(n^2)$ &  161 &  163 &  162 & 173 &  181 & 172 \\
  $\OO(n^3)$ &  163 &  167 &  169 & 179 &  185 & 178 \\[1ex]
  timeout (60s)  &  310 &  459 &  715 & 311 &  458 & 718 \\[1ex]
\hline
\end{tabular}
\caption{Experiment results II ($1\text{--}3$-dimensional RMIs combined)}
\label{tab:2}
\end{table}

In order to assess the advances of this paper in contrast to the conference
versions (see~\cite{HM:2008,HM:2008b}), we present in Table~\ref{tab:3} a comparison between
RMIs with/without the use of usable arguments and a comparison of the WDP or WDG method
with/without the use of the extended weight gap principle. Again we restrict our
attention to full rewriting, as the case for innermost rewriting provides
a similar picture (see~\url{http://cl-informatik.uibk.ac.at/software/tct/experiments} for
the full data).

\begin{table}[h]
\centering
\begin{tabular}{@{}l@{}r@{\hspace{1ex}}r@{\hspace{2ex}}r@{\hspace{1ex}}r@{\hspace{2ex}}r@{\hspace{1ex}}r@{}}
\hline \\[-1.5ex]
& \multicolumn{6}{c}{\textit{full}}
\\[1ex] 
  result 
  & direct\,($-$) & direct\,($+$) 
  & WDP\,($-$) & WDP\,($+$) 
  & WDG\,($-$) & WDG\,($+$) \\[1ex]
  \hline 
  \\[-1.5ex]
  $\OO(1)$   & 4   &  18  & 5   &  0   & 10 &  10  \\
  $\OO(n)$   & 105 &  135 & 102 &  141 & 105 &  140 \\
  $\OO(n^2)$ & 127 &  161 & 118 &  163 & 119 &  162 \\
  $\OO(n^3)$ & 130 &  163 & 120 &  167 & 122 &  169 \\[1ex]
  timeout (60s) & 306 & 310 & 505 &  459 & 655 &  715 \\[1ex]
\hline
\end{tabular}
\caption{Experiment results III ($1\text{--}3$-dimensional RMIs combined)}
\label{tab:3}
\end{table}

Finally, in Table~\ref{tab:4} we present the overall power obtained for the
automated runtime complexity analysis. Here we test the version of
\TCT\ that run for the international annual termination competition
(TERMCOMP)%
\footnote{\url{http://termcomp.uibk.ac.at/termcomp/}.} 
in 2010 in comparison to the most recent version of \TCT\ incorporating
all techniques developed in this paper. In addition we compare with a
recent version of \CAT.%
\footnote{\url{http://cl-informatik.uibk.ac.at/software/cat/}.}

\begin{table}[h]
\centering
\begin{tabular}{@{}l@{\quad}r@{\quad}r@{\quad}r@{\quad}r@{\quad}r@{\quad}r@{}}
\hline\\[-1.5ex]
& \multicolumn{3}{c}{\textit{full}}
& \multicolumn{3}{c}{\textit{innermost}}
\\[.5ex]
  result & \TCT\,(old) & \TCT\,(new) & \CAT & \TCT\,(old) & \TCT\,(new) & \CAT \\[1ex]
  \hline\\[-1.5ex]
  $\OO(1)$   & 10  & 3   & 0 & 10  & 3  &   0 \\
  $\OO(n)$   & 393 & 486 & 439 & 401 & 488 &  439  \\
  $\OO(n^2)$ & 394 & 493 & 452 & 403 & 502 &  452  \\
  $\OO(n^3)$ & 397 & 495 & 453 & 407 & 505 &  453  \\
  $\OO(n^4)$ & 397 & 495 & 454 & 407 & 505 &  454  \\[1ex]
\hline
\end{tabular}
\caption{Experiment results IV ($1\text{--}3$-dimensional RMIs combined)}
\label{tab:4}
\end{table}

The results in Table~\ref{tab:4} clearly show the increase in power in
\TCT, which is due to the fact that the techniques developed in this paper 
have been incorporated.

\section{Conclusion} \label{Conclusion}

In this article we are concerned with automated complexity analysis of
TRSs. More precisely, we establish new and powerful results that allow 
the assessment of polynomial runtime complexity of TRSs fully automatically. 
We established the following results: Adapting techniques from
context-sensitive rewriting, we introduced \emph{usable replacement maps} that
allow to increase the applicability of direct methods. Furthermore we established
the \emph{weak dependency pair method} as a suitable analog of the
dependency pair method in the context of (runtime) complexity analysis. Refinements
of this method have been presented by the use of the \emph{weight gap principle}
and \emph{weak dependency graphs}. 
In the experiments of Section~\ref{Experiments} we assessed the viability
of these techniques.
It is perhaps worthy of note to mention that our motivating examples 
(Examples~\ref{ex:1},~\ref{ex:8}, and~\ref{ex:2}) could not be handled by 
any known technique prior to our results. 

To conclude, we briefly mention related work. 
Based on earlier work by Arai and
the second author (see~\cite{fsttcs:2005}) Avanzini and the second author
introduced \POPSTAR\ a restriction of the recursive path order (RPO) that
induces polynomial innermost runtime complexity (see~\cite{AM:2008,AM:2009}). 
With respect to derivational complexity, Zankl and Korp generalised a simple
variant of our weight gap principle to achieve a modular derivational complexity
analysis (see~\cite{ZK:2010,ZK:2010c}). Neurauter et al.\ refined
in~\cite{NZM:2010} matrix interpretations in the context of derivational complexity 
derivational complexity (see also~\cite{MSW:2008}). Furthermore, Waldmann
studied in~\cite{W:2010} the use of weighted automata in this setting. 
Based on~\cite{HM:2008,HM:2008b} Noschinski et al.\ incorporated a variant 
of weak dependency pairs (not yet published) into the termination prover \Aprove.%
\footnote{This novel version of \Aprove\ (see~\url{http://aprove.informatik.rwth-aachen.de/}) 
for (innermost) runtime complexity took part in TERMCOMP in 2010.}
Currently this method is restricted to innermost runtime complexity, 
but allows for a complexity analysis in the spirit of the dependency pair framework. 
Preliminary evidence suggests that this technique is orthogonal to the methods presented here.
While all mentioned results are concerned with \emph{polynomial} upper bounds on
the derivational or runtime complexity of a rewrite system, Schnabl and the
second author provided in~\cite{MS:2009,MS:2010,MS:2011} an analysis of
the dependency pair method and its framework from a complexity point of view. 
The upshot of this work is that the dependency pair framework may induce
multiple recursive derivational complexity, even if only simple processors
are considered. 

Investigations into the complexity of TRSs are strongly influenced by research in the
field of ICC, which contributed the use of restricted forms of polynomial
interpretations to estimate the complexity, cf.~\cite{BCMT:2001}. Related results
have also been provided in the study of term rewriting characterisations of
complexity classes (compare~\cite{CichonWeiermann:1997}). 
Inspired by Bellantoni and Cook's
recursion theoretic characterisation of
the class of all polynomial time computable functions
in~\cite{BellantoniCook:1992}, Marion~\cite{Marion:2003} defined LMPO, 
a variant of RPO whose compatibility 
with a TRS implies that the functions computed by the TRS is polytime
computable (compare~\cite{CL:1992}). 
A remarkable milestone on this line is the quasi-interpretation
method by Bonfante et al.~\cite{BMM:2009:tcs}. The method
makes use of standard termination methods in conjunction with special polynomial interpretation
to characterise the class of polytime computable functions. In conjunction with \emph{sup-interpretations}
this method is even capable of making use of \emph{standard} dependency pairs (see~\cite{MP:2009}).

In principle we cannot directly compare our result on \emph{polynomial} runtime complexity
of TRSs with the results provided in the setting of ICC: 
the notion of complexity studied is different. However, due to a recent result by Avanzini and the
second author (see~\cite{AM:2010}, but compare also~\cite{LM:2009,LM:2009b}) we know that 
the runtime complexity of a TRS is an \emph{invariant} cost model. Whenever we have polynomial
runtime complexity of a TRS $\RS$, the functions computed by this $\RS$ can be implemented
on a Turing machine that runs in polynomial time. In this context, our results provide
automated techniques that can be (almost directly) employed in the context of ICC. The qualification
only refers to the fact that our results are presented for an abstract form of 
programs, viz.\ rewrite systems.

\end{document}